%% file: main.tex
\def\NOEDITS{}
\documentclass[11pt,a4paper]{article}
\input{preamble}

\begin{document}
\input{content}

\input{SI}

\end{document}

%% file: preamble.tex
\newif\ifrsos \ifdefined\RSOSMODE \rsostrue \fi
\newif\ifsionly \ifdefined\SIONLYMODE \sionlytrue \fi

\usepackage[a4paper, margin=2.6cm, top=2.8cm, bottom=2.6cm]{geometry}
\usepackage[T1]{fontenc}
\usepackage[utf8]{inputenc}
\usepackage{lmodern}
\usepackage{microtype}
\usepackage{amsmath, amssymb, amsthm, mathtools}
\usepackage[version=4]{mhchem}

\usepackage{graphicx}
\usepackage{float}          % enables [H] exact float placement
\usepackage{placeins}       % \FloatBarrier
\makeatletter
\AtBeginDocument{%
  \let\ge@section\section
  \renewcommand\section{\FloatBarrier\ge@section}%
  \let\ge@subsection\subsection
  \renewcommand\subsection{\FloatBarrier\ge@subsection}%
  \let\subsectionnobarrier\ge@subsection   % for a heading that must not flush pending floats%
}
\makeatother
\usepackage{upgreek}
\usepackage{xcolor}
\usepackage{fontawesome5}   % icons in the arXiv title block
\usepackage{tikz}

\usetikzlibrary{calc, decorations.pathmorphing, decorations.markings,arrows.meta,
                shapes.geometric, decorations.pathreplacing, positioning}

\usepackage{booktabs}
\usepackage{multirow}
\usepackage{array}
\usepackage{threeparttable}

\usepackage{etoolbox}
\ifrsos\ifsionly\def\needSI{}\fi\else\def\needSI{}\fi
\ifdefined\needSI
\usepackage[resetlabels]{multibib}    % separate Supplementary References (SI), 2026-10-21
\newcites{SI}{Supplementary References}
\fi
\ifrsos\usepackage{xr-hyper}\fi
\usepackage[hidelinks, colorlinks=false, breaklinks=true]{hyperref}
\usepackage{orcidlink}       % after hyperref (it loads hyperref itself)
\usepackage{url}
\usepackage{xurl}            % URLs may break at any character (no overfull lines)
\usepackage{cite}

\usepackage{titlesec}
\usepackage{titletoc}        % partial table of contents for the SI
\usepackage[most]{tcolorbox}

\usepackage[font=small, labelfont=bf, labelsep=period]{caption}

\definecolor{kindigo}{HTML}{2E3457}     % deep indigo
\definecolor{kochre}{HTML}{C9A227}      % muted ochre
\definecolor{kterra}{HTML}{B85042}      % terracotta
\definecolor{kcream}{HTML}{F4EFE6}      % parchment
\definecolor{kink}{HTML}{1A1A1A}        % soft black
\definecolor{kgrey}{HTML}{6B6B6B}       % stone gray
\definecolor{kteal}{HTML}{3E7E7E}       % faded teal accent
\newif\ifshowedits
\showeditstrue    % revision marks in blue in every version (working phase)
\ifdefined\NOEDITS \showeditsfalse \fi   % final/submission PDFs: put \def\NOEDITS{} first in the wrapper
\ifshowedits
  \definecolor{knew}{HTML}{1F5FBF}
\else
  \colorlet{knew}{black}
\fi        % revision blue (Pass A--C edits)

\newcommand{\Z}{\mathbb{Z}}

\newcommand{\Stab}{\operatorname{Stab}}
\newcommand{\Orb}{\operatorname{Orb}}
\newcommand{\Fix}{\operatorname{Fix}}
\newcommand{\Hor}{H_{\text{orbit}}}
\newcommand{\Hmic}{H_{\text{micro}}}
\newcommand{\Hmac}{H_{\text{macro}}}

\newtheorem{theorem}{Theorem}

\newtheorem{definition}{Definition}

\newcommand{\subglyph}{%
  \tikz[baseline=-0.5ex]{\fill[kochre] (0,0) rectangle (0.16,0.16);}%
}

\titleformat{\section}
  {\normalfont\Large\bfseries\color{kindigo}}
  {\thesection}
  {0.6em}
  {}
  [\vspace{0.15em}\noindent\textcolor{kindigo!30}{\rule{\linewidth}{0.4pt}}]

\titleformat{\subsection}
  {\normalfont\large\bfseries\color{kink}}
  {\subglyph\hspace{0.4em}\thesubsection}
  {0.5em}
  {}

\titleformat{\subsubsection}
  {\normalfont\normalsize\bfseries\color{kgrey}}
  {\thesubsubsection}
  {0.5em}
  {}

\titlespacing*{\section}{0pt}{1.4em}{0.6em}
\titlespacing*{\subsection}{0pt}{1.0em}{0.4em}

\newtcolorbox{kcallout}[1][]{%
  enhanced, breakable,
  colback=kcream, colframe=kindigo,
  boxrule=0.5pt, arc=2pt, outer arc=2pt,
  left=10pt, right=10pt, top=8pt, bottom=8pt,
  fonttitle=\small\bfseries,
  coltitle=white,
  colbacktitle=kindigo,
  attach boxed title to top left={xshift=10pt, yshift=-2.5mm},
  boxed title style={
    colframe=kindigo,
    colback=kindigo,
    boxrule=0.4pt,
    arc=2pt,
    outer arc=2pt,
    left=4pt, right=4pt, top=2pt, bottom=2pt,
  },
  fontupper=\small,
  title={#1}%
}

\title{\vspace{-1.0em}
  {\color{kindigo}\Large\bfseries Symmetric Structure and Motivic Transformation in Melodies, Phrases, and Themes}
}

\author{%
  Alfredo Gonz\'alez-Espinoza\,\orcidlink{0000-0003-2361-1827}\\[2pt]
  \small{\color[HTML]{C41230}\faUniversity}~University Libraries, {\color[HTML]{C41230}C}arnegie {\color[HTML]{C41230}M}ellon {\color[HTML]{C41230}U}niversity\\
  \small{\color{kterra}\faEnvelope[regular]}~\texttt{\href{mailto:[email protected]}{agonzal3@andrew.cmu.edu}}\\
  \small{\color[HTML]{A6CE39}\faOrcid}~\href{https://orcid.org/0000-0003-2361-1827}{0000-0003-2361-1827}
}

\date{\small\itshape Preprint, 29 October 2026}

\ifrsos
  \usepackage{lineno}
  \titleformat{\section}{\normalfont\Large\bfseries}{\thesection}{0.6em}{}
  \titleformat{\subsection}{\normalfont\large\bfseries}{\thesubsection}{0.5em}{}
  \titleformat{\subsubsection}{\normalfont\normalsize\bfseries}{\thesubsubsection}{0.5em}{}
\author{%
    Alfredo Gonz\'alez-Espinoza\\
    \small University Libraries, Carnegie Mellon University\\
    \small \texttt{\href{mailto:[email protected]}{agonzal3@andrew.cmu.edu}}\\
    \small ORCID \href{https://orcid.org/0000-0003-2361-1827}{0000-0003-2361-1827}
  }
  \title{Symmetric Structure and Motivic Transformation in Melodies, Phrases, and Themes}
  \date{}
  \ifsionly
  \else
  \fi
\fi

%% file: content.tex
\maketitle
\ifrsos\linenumbers\fi

\noindent\rule{\linewidth}{0.4pt}\par\smallskip

\noindent\textbf{Abstract.}~Transposition, inversion, and retrograde are fundamental operations in music theory, from
Schoenberg's twelve-tone technique to Forte's set classes and Lewin's transformational theory, yet
their use in melodies has rarely been quantified. We read a melodic line through the orbits and
stabilizers of its windows of $m = 3$--$6$ notes under the group generated by these operations and
cyclic shift. The window entropy splits into an orbit entropy (motivic vocabulary) and a
within-orbit entropy normalized by orbit size (deployment), and the group elements relating windows
of one orbit give the motivic transformations. In $852$ keyboard pieces spanning four centuries, at
least $97\%$ of melodies have lower orbit entropy than surrogates keeping their pitch content,
spectrum, or one-step transitions ($m \ge 4$). Periods differ in vocabulary and symmetry types, forms
in deployment. Recurring motifs are transposed $1.6$ times more often than in random permutations of their
forms, by a fifth or fourth $2.8$ times more often, a preference that weakens from the
Baroque to the 20th century and from Bach's fugues to Shostakovich's. Folk phrases and instrumental
themes are rarely symmetric ($2$--$10\%$) but $1.6$--$5$ times more often than note rearrangements.
The framework extends to any symbolic sequence with a symmetry group.

\par\smallskip\noindent\rule{\linewidth}{0.4pt}

\vspace{0.3em}
\noindent\textbf{Keywords:}~%
\textit{melodic analysis, group theory, orbit entropy, computational musicology,
information theory.}

\section{Introduction}
\label{sec:intro}
\label{sec:background}

Symmetry is one of the oldest ways we make sense of structure. We recognize that an object can
be transformed (reflected, rotated, shifted) and still be, in the ways that {\em matter}, the
same object. An animal's body is almost mirror-symmetric (invariant under reflection from its bilateral symmetry), and a snowflake repeats every $60^\circ$ because the ice beneath it is a hexagonal crystal.
In physics, symmetry is also a principle that constrains the equations themselves. We identify it in conservation laws: equations that keep their form under {\em shifts} or translations in time and space or under rotations conserve
energy, momentum, and angular momentum \cite{noether1918invariante}, and the symmetries of a law
restrict the phenomena it can produce \cite{curie1894symetrie,wigner1967symmetries,gross1996role}.
Even the concept of equilibrium can be read as a symmetry, a symmetry of time. The microscopic laws of motion
run equally well backward, so at equilibrium every process occurs as often as its reverse, a
molecule forming as often as it breaks apart, and nothing changes on average (or statistically)
\cite{onsager1931reciprocal}. In this case, the concept of time acquires a direction, {\em an arrow}, only when this balance is
broken or moved away from equilibrium. Group theory gives these observations a common language
\cite{weyl1952symmetry}. Taken together, the operations that leave an object unchanged form a
group, and organizing objects by their group arranges them by structure, and not necessarily by content. For example, chemistry
classifies molecules in this way, by their point group, independently of the atoms they contain.
Water and sulfur dioxide, compositionally different, fall in the same class ($C_{2v}$) and therefore share the selection rules for infrared and Raman
activity, the pattern of allowed electronic transitions, and the orbital degeneracies, all defined
by the group alone \cite{cotton1990group}.

Much of this recognition is related to our perceptual capacities. People detect mirror symmetry in an image rapidly, and more
readily than repetition or rotation, and tend to judge symmetric faces and patterns as {\em more
attractive} \cite{wagemans1997symmetry,treder2010symmetry,rhodes1998facial}, a bias shared by
pollinators that prefer symmetric flowers \cite{moller1995bumblebee}. It is no surprise that we
include signatures of symmetry in what we make. The tilings of the Alhambra and the prints of Escher
generate endless surface from a handful of operations, and every pattern that tiles the plane by
repeating in two independent directions falls into one of seventeen symmetry classes,
the wallpaper groups \cite{grunbaum1987tilings}. How often each class is used or appears in a collection of works can give clues of where it comes from, as distributions over the seventeen groups separate Hellenistic,
Roman, Armenian, Byzantine, Arab, and Seljuk ornament \cite{erbudak2019sanmarco,erbudak2020ornaments},
in a long tradition of classifying cultural patterns by symmetry \cite{washburn1988symmetries}.
Music is a peculiar case, since it can apply symmetry in two dimensions at once as well, but these dimensions are of different nature, pitch and time
\cite{hart2009musicalplane}. Hearing can feel less direct than seeing; listeners recognize a
transposed melody at once but recognize its inversion or retrograde poorly
\cite{dowling1972recognition}, so musical symmetry could be a matter of construction as well as
 perception. The time reversal concept of physics can be read in musical melodies as retrograde, and its
breaking, the arrow of time, has been previously measured in scores with network theory methods~\cite{Gonzalez-Espinoza2020}, finding the existence of statistical irreversibility (time asymmetry) as a universal feature in melodies. Like a
molecule, a melody can carry a symmetry internally. A phrase that reads the same backward, or a
motif that mirrors about a pitch axis, is left unchanged by an operation, and motifs can be
classified by their symmetry independently of their notes, as molecules are by their point group
(Figure~\ref{fig:stabilizer-analogy}). The distribution of these symmetry types across a piece is
the musical counterpart of the ornament fingerprint, with the symmetry classes of short motifs instead of plane groups.

\begin{figure}[h]
\centering
\begin{tikzpicture}[
    >=stealth,
    atom/.style={circle, draw=kindigo, line width=0.6pt,
                 minimum size=0.7cm, font=\small\bfseries, inner sep=0pt},
    O/.style={atom, fill=kterra!20, text=kterra},
    H/.style={atom, fill=kindigo!10, text=kindigo},
    bond/.style={line width=1.2pt, kindigo},
    axis/.style={dashed, kochre, line width=0.7pt},
    note/.style={circle, fill=kindigo, minimum size=0.20cm, inner sep=0pt},
    contour/.style={kindigo, line width=0.9pt},
    panellabel/.style={font=\small\bfseries, kindigo},
    grouplabel/.style={font=\small\itshape, kterra},
    orderlabel/.style={font=\footnotesize, kgrey}
]

\begin{scope}[shift={(-4.4, 1.0)}]
  \draw[axis] (0, -1.3) -- (0, 1.3);
  \node[font=\footnotesize, kochre] at (0.28, 1.25) {$\sigma$};
  \draw[->, kochre, line width=0.6pt] (24:1.22) arc (24:-24:1.22);
  \node[font=\footnotesize, kochre, right] at (0:1.3) {$60^\circ$};

  \foreach \a in {90,150,...,450} {
    \draw[bond] (0,0) -- (\a:1.0);
    \draw[bond, line width=0.9pt] (\a:0.5) -- ++(\a+45:0.28);
    \draw[bond, line width=0.9pt] (\a:0.5) -- ++(\a-45:0.28);
    \draw[bond, line width=0.9pt] (\a:0.78) -- ++(\a+45:0.16);
    \draw[bond, line width=0.9pt] (\a:0.78) -- ++(\a-45:0.16);
  }
  \draw[kindigo, line width=0.6pt, fill=kindigo!10]
    (90:0.2) -- (150:0.2) -- (210:0.2) -- (270:0.2) -- (330:0.2) -- (30:0.2) -- cycle;

  \node[panellabel] at (0, -1.7)  {snowflake};
  \node[grouplabel] at (0, -2.15) {Stab $= D_{6h}$};
  \node[orderlabel] at (0, -2.55) {24 operations};
\end{scope}

\begin{scope}[shift={(0,0)}]
  \draw[axis] (0, -0.3) -- (0, 2.3);
  \node[font=\footnotesize, kochre] at (0.28, 2.2) {$C_2$};

  \node[O] (O)  at (0, 1.4) {O};
  \node[H] (H1) at (-0.79, 0.79) {H};
  \node[H] (H2) at ( 0.79, 0.79) {H};

  \draw[bond] (O) -- (H1);
  \draw[bond] (O) -- (H2);

  \node[panellabel] at (0, -0.7)  {\ce{H2O}};
  \node[grouplabel] at (0, -1.15) {Stab $= C_{2v}$};
  \node[orderlabel] at (0, -1.55) {4 operations};
\end{scope}

\begin{scope}[shift={(3.2, 0)}]
  \draw[axis] (1.5, -0.3) -- (1.5, 2.3);
  \node[font=\footnotesize, kochre] at (1.78, 2.2) {$F$};

  \draw[->, kgrey!60, line width=0.4pt] (-0.1, 0) -- (3.2, 0)
        node[right, font=\footnotesize, kgrey] {time};
  \draw[->, kgrey!60, line width=0.4pt] (0, -0.1) -- (0, 2.0)
        node[above, font=\footnotesize, kgrey] {pitch};

  \node[note] (n1) at (0.3, 0.4) {};
  \node[note] (n2) at (0.9, 1.0) {};
  \node[note] (n3) at (1.5, 1.5) {};
  \node[note] (n4) at (2.1, 1.0) {};
  \node[note] (n5) at (2.7, 0.4) {};

  \draw[contour] (n1) -- (n2) -- (n3) -- (n4) -- (n5);

  \node[panellabel] at (1.5, -0.7)  {$(60,\,64,\,67,\,64,\,60)$};
  \node[grouplabel] at (1.5, -1.15) {Stab $= \langle F \rangle$};
  \node[orderlabel] at (1.5, -1.55) {2 operations};
\end{scope}

\end{tikzpicture}
\caption{Internal symmetry classifies an object independently of its
identity. \textbf{Left:} a snowflake is left unchanged by rotations
in steps of $60^\circ$ and by reflections (one mirror $\sigma$ shown),
24 operations in all, its point group $D_{6h}$ inherited from the
hexagonal ice lattice. \textbf{Middle:} the water molecule \ce{H2O} is left unchanged
by a rotation (the $C_2$ axis shown) and two reflection planes,
its point group $C_{2v}$. This classification holds whatever atoms
occupy the vertices. \textbf{Right:} the motif $(60, 64, 67, 64, 60)$ (MIDI notation)
reads the same forward and backward, so it is left unchanged by
retrograde ($F$) and belongs to the musical counterpart of that
symmetry class. The set of operations fixing
an object (its \emph{stabilizer}) sorts objects by internal
symmetry, and objects sharing a class share a structural profile:
spectroscopic selection rules in chemistry, transformation invariances
in music. Within one group, the more operations fix an object, the fewer
distinct forms it has (the orbit--stabilizer theorem).}
\label{fig:stabilizer-analogy}
\end{figure}

Musicians and music theorists have long borrowed mathematical tools and methods, both to generate music, from the
eighteenth-century \emph{Musikalisches W\"urfelspiel} (musical dice game) to the first computer-composed score, the
\emph{Illiac Suite} \cite{hiller1959experimental}, and Koenig's \textsc{project}~1
\cite{koenig1970project1}, and to analyze it. A {\em motivic idea} is rarely stated only once in a melody or a piece. It
can be restated a step higher, turned upside down, played back to front, or displaced in time, so
that we recognize ``the same idea'' through a {\em disguise} of transposition, inversion,
retrograde, and shift. Two sets of notes that look different on the page can be one motif seen
from another angle, which makes melodic material awkward to compare directly. In music theory
these operations mostly relate distinct objects, and the relation is drawn either inside one
melodic sequence or between sequences. Inside a sequence, developing variation and thematic
transformation grow material from a basic shape \cite{schoenberg1950style,reti1951thematic}, and
Schoenberg also formalized transposition, inversion, and retrograde as the operations of
twelve-tone technique. Between sequences, the operations relate voices, in imitation, canon, and
the subject and answer of a fugue \cite{mann1958study}, or pieces, in tune families
\cite{bayard1950}, in Forte's set classes, which gather the transposed and inverted forms of a chord
\cite{forte1973structure}, and in Lewin's transformational theory, which recasts the operations
as group actions on a generalized space, arrows between objects, and takes the interval between
two objects to be the operation that carries one onto the other \cite{lewin1987gmit}. Mathematical music
theory offers a categorical synthesis of these views \cite{mazzola2002topos}. Self-symmetry, an
object left unchanged by an operation, is the least common documented case, as in Messiaen's modes of
limited transposition \cite{messiaen1944technique} and the palindromes of Machaut and Haydn.
Group theory holds both readings in one object, since the orbit of a motif contains its related
forms and its stabilizer the operations that leave it unchanged. To capture both readings from a piece we
need a statistical description of how a melodic line reuses its material, a distribution of
ideas across a piece rather than a single symmetry label for the whole melody. The orbit
statistics ($\Hor$, $\eta$) and the operations that relate recurring windows measure the first reading, and the distribution of symmetry types
the second.

In the literature, the mathematics of these operations is mostly enumerative. Reiner introduced
P\'olya counting to music theory \cite{reiner1985enumeration}, Fripertinger extended it to
intervals, chords, rows, motifs, and mosaics \cite{fripertinger1992enumeration}, and Fripertinger
and Lackner classified the twelve-tone rows under the direct product of two dihedral groups,
reporting for each row its orbit, normal form, and stabilizer class \cite{fripertinger2015tone}.
Their group is the same as ours at $m = 12$, and the stabilizer class they compute exactly for a cataloged
row is the quantity our windowed spectrum estimates statistically for a piece. Close to the
present work, Ibragimova and Nehaniv show that transposition, inversion, retrograde, and cyclic
shift generate the finite group $G = D_{12} \times D_m$, and give closed-form recipes for
constructing and counting the melodies invariant under it \cite{ibragimova2025algebraic}. We
adopt their orbit count, with $m$ the size of a window rather than the length of a melody. Their
treatment enumerates how much symmetry is \emph{possible}. We ask how often the operations
relate the motifs inside real melodies, as a view of how composers develop motivic ideas, and
whether themes and phrases chosen by people are symmetric more often than chance.

Earlier studies have described melodic structure computationally. Some estimate the entropy of
pitch sequences with $n$-gram models \cite{pearce2005construction}, others summarize the shape of
a melody through its contour \cite{schmuckler2010melody} or extract recurring patterns and cells
\cite{conklin2002representation}. These descriptions compare melodic material note by note or
by shape, and none treats two passages as the same idea when they differ by transposition,
inversion, retrograde, or cyclic shift, the four operations at once.

Where those methods compare material note by note, we use symmetry as the organizing concept and borrow a pattern common across the quantitative sciences. Objects are grouped into
equivalence classes under a relevant symmetry, each class is labeled by one canonical
representative, and the distribution over representatives is measured with Shannon's entropy
\cite{shannon1948mathematical}, so that a symmetry that would otherwise inflate the count of
distinct objects makes them equivalent instead. Genomics counts canonical $k$-mers under reverse
complement \cite{wittler2023canonical,marccais2024knonical}, network science counts motifs up to
graph isomorphism \cite{milo2002network,wang2020graphmotif}, cheminformatics uses canonical SMILES
\cite{weininger1989smiles,godden2009shannon}, and crystallography measures the information of
atoms distributed over Wyckoff orbits \cite{krivovichev2012topological}. This last case is the
closest to the one presented here and has been computed across thousands of structures and followed through
geological time \cite{krivovichev2014which,krivovichev2018minerals}. Orbit entropy also has a
history in graph theory, from Rashevsky and Trucco to Mowshowitz
\cite{rashevsky1955life,trucco1956note,mowshowitz1968entropy1,mowshowitz1968entropy4,dehmer2011history}.
There, one static graph under its own automorphism group produces one number, computed over a
single partition of its vertices, and non-isomorphic graphs with the same orbit structure receive
the same value \cite{dehmer2011history}. Here, the action of sliding a window draws a distribution of
sub-objects from one sequence under a group fixed in advance, so two pieces visiting the same
orbits are still distinguished by how often they visit each.

The vocabulary of this normalization comes from statistical mechanics, where many microscopic
configurations collapse onto one macroscopic label. Windows play the role of microstates and
orbits of macrostates, and the analogy supplies the partition-and-count logic and the
equipartition ceiling $\log_2 K(m)$ against which a measured entropy is read. The analogy does not go deeper, since consecutive windows share $m - 1$ notes, so they are autocorrelated draws instead of an
equilibrium ensemble, the orbit count grows too fast with $m$ for a thermodynamic limit, and the
equal probability of the microstates within a macrostate, postulated in physics, is something
that musical practice is expected to violate, a violation that $\eta$ measures. Windows carry no
energy, and the entropy is Shannon's, in bits, an informational measure of how many motivic
ideas a piece draws on and how evenly.

These ideas are drawn from a single melodic line extracted from the music, which we read at two lengths. Motifs of a few notes (from three to six), the scale of short figures and of small cells such as Schoenberg's \emph{Grundgestalt} \cite{schoenberg1950style} and R\'eti's thematic process \cite{reti1951thematic}, and of minimalist cell-based writing, are read statistically, through sliding windows. Only at that scale does one piece repeat enough material to estimate a distribution, since the number of possible patterns grows explosively with length. Phrases and themes, the scale of fugal subjects, twelve-tone rows, and Renaissance points of imitation, are read exactly, as whole sequences that people have already delimited. Relations between sequences (imitation and canon between voices, mirror fugues, Bach's crab canon, tune families between pieces) are out of scope and can be material for future studies.

To measure how melodies reuse their ideas under these operations, we make three contributions. The first is a set of statistical symmetry measures for melodic
sequences on the orbit space of $G = D_{12} \times D_m$, namely the orbit entropy $\Hor$, the
deployment ratio $\eta \in [0, 1]$, and the distribution of symmetry types, together with a census of the operations that relate recurring windows. The second is an
implementation, the \texttt{MelodyExtraction.jl} and \texttt{MelodicSymmetry.jl} packages,
validated against hand-computed examples and published results. The third is a pair of empirical tests,
a windowed analysis of a solo keyboard corpus balanced by period, composer, and length against
three null models and a within-orbit shuffle that keeps its vocabulary, and an exact census of phrases and themes curated by people (MTC-ANN, Essen,
MTD) against rearrangements of each segment's own notes. Together they let us ask the following questions: Do real melodies carry $G$-symmetric structure beyond their pitch content, spectrum
and one-step transitions? Where does this structure vary across periods, forms, or composers
writing in the same form? Which operations relate the recurring motifs of a line, beyond what its vocabulary implies? How often are hand-picked themes and phrases exactly symmetric, compared
with chance, and by which symmetry?

\section{Methods}
\label{sec:framework}

\subsection{Symmetry group, orbits, and stabilizers}
\label{sec:group}

We treat two motifs related by transposition, inversion, retrograde, or cyclic shift as the same {\em musical idea}. Formally, this identification is an equivalence relation induced
by a group action. The orbit of a motif is the set of all sequences that these operations
produce from it, so each orbit is one equivalence class, a single musical idea together with all
its transformed forms. A group of transformations is a set of invertible operations on a set $X$ that contains
the identity $e$ and is closed under composition and inversion. Often a few generators are enough to represent all the possible transformations. One rotation and one reflection generate the $2n$ symmetries of a regular
$n$-gon, the dihedral group $D_n$. We use two dihedral groups in this work, one acting on pitch and one
on time (Figure~\ref{fig:dihedral}).

\begin{figure}[h]
\centering
\begin{minipage}[c]{0.49\linewidth}\centering
{\small\textbf{(a)}}\\[2pt]
\begin{tikzpicture}[scale=1.1, every node/.style={font=\scriptsize}]
  \foreach \i [evaluate=\i as \a using 90 - 30*\i] in {0,...,11}
    \node[circle, draw=kink, line width=0.4pt, fill=white,
          inner sep=0.3pt, minimum size=1.4em] (v\i) at (\a:1.7) {$\i$};
  \draw[kink!35] (v0) -- (v1) -- (v2) -- (v3) -- (v4) -- (v5) -- (v6)
                  -- (v7) -- (v8) -- (v9) -- (v10) -- (v11) -- (v0);
  \draw[->, thick, kindigo] (v0) to[bend left=35]
        node[midway, above right=1pt, kindigo] {$r$} (v1);
  \draw[kterra, dashed, thick] ($(v0)+(0,0.3)$) -- ($(v6)+(0,-0.3)$);
  \node[kterra] at (0, 2.3) {axis of $f$};
  \draw[->, kteal, dotted, thick] (v1) to[bend left=22] (v11);
  \draw[->, kteal, dotted, thick] (v3) to[bend left=14] (v9);
  \draw[->, kteal, dotted, thick] (v5) to[bend left=8]  (v7);
\end{tikzpicture}
\end{minipage}\hfill
\begin{minipage}[c]{0.49\linewidth}\centering
{\small\textbf{(b)}}\\[2pt]
\begin{tikzpicture}[scale=1.2, every node/.style={font=\scriptsize}]
  \foreach \i [evaluate=\i as \a using 90 - 72*(\i-1)] in {1,...,5}
    \node[circle, draw=kink, line width=0.4pt, fill=white,
          inner sep=0.4pt, minimum size=1.5em] (p\i) at (\a:1.7) {$\i$};
  \draw[kink!35] (p1) -- (p2) -- (p3) -- (p4) -- (p5) -- (p1);
  \draw[->, thick, kindigo] (p1) to[bend left=35]
        node[midway, above right=1pt, kindigo] {$R$} (p2);
  \coordinate (mid15) at ($(p1)!0.5!(p5)$);
  \draw[kterra, dashed, thick]
        ($(p3)+(0.3, 0.05)$) -- ($(mid15)+(-0.3, -0.05)$);
  \node[kterra] at (-2.0, 1.3) {axis of $F$};
  \draw[->, kteal, dotted, thick] (p1) to[bend right=30] (p5);
  \draw[->, kteal, dotted, thick] (p2) to[bend right=22] (p4);
\end{tikzpicture}
\end{minipage}
\caption{Dihedral groups used in this work. Solid arcs are rotations, dashed lines reflection axes, dotted arcs the swaps they induce. \textbf{(a)} $D_{12}$ on the chromatic circle: $r$ transposes by a semitone, $f$ inverts about the $0$--$6$ axis, and $|D_{12}| = 24$. \textbf{(b)} $D_m$ at $m = 5$: $R$ shifts positions cyclically, $F$ reverses them (retrograde). For even $m$ the axis passes between vertices, and $|D_m| = 2m$.}
\label{fig:dihedral}
\end{figure}

A melodic sequence is a line of $L$ pitch classes, $S = (s_1, \dots, s_L)$, with
$s_t = p_t \bmod 12 \in \Z_{12}$ the pitch class of its $t$-th MIDI pitch and $C = 0$. For a
keyboard piece, $S$ is the line extracted from the score (Section~\ref{sec:extraction}); in the
census, it is a phrase or theme as delimited in its corpus. A sequence $S$ is read at one of two
scales depending on the type of data (piece or segment). Its windows of length $m$, $w_i = (s_i, \dots, s_{i+m-1})$ for $i = 1, \dots, N$ with
$N = L - m + 1$, are elements of $X = \Z_{12}^m$ and give a distribution of motifs (a melodic line in the case of pieces); the whole
sequence is the case $m = L$, a single window, read under $D_{12} \times D_L$ (a single sequence of length $L$, phrase or theme). A window, or any
element of $X$, is written $x = (x_1, \dots, x_m)$.
The tonal group $D_{12}$ acts pointwise on values and the positional group $D_m$ on
indices, with generators
\begin{equation}
\label{eq:generators}
r \cdot p = p + 1, \quad f \cdot p = -p \pmod{12};
\qquad
R : i \mapsto (i \bmod m) + 1, \quad F : i \mapsto m + 1 - i .
\end{equation}
Musically, $r$ transposes by a semitone, $f$ inverts about $C$, $R$ shifts the window
cyclically, and $F$ reverses it (retrograde), with $|D_{12}| = 24$ and $|D_m| = 2m$.

The combined group $G = D_{12} \times D_m$ has elements $g = (a, k, b, j)$ acting as
$f^a r^k$ on values and $F^b R^j$ on positions,
\begin{equation}
(x \ast g)_i = (g_1 \cdot x)_{g_2^{-1}(i)},
\qquad g_1 = f^a r^k, \quad g_2 = F^b R^j,
\end{equation}
so that $|G| = 48m$. The two components act on different coordinates, what is played and
when, and therefore commute (Section~\ref{app:commute}). Conventions for composite
elements are given in Section~\ref{app:conventions}.

\begin{definition}
For $x \in \Z_{12}^m$:
\begin{align}
\Orb(x)  &= \{\, g \cdot x : g \in G \,\} \subseteq \Z_{12}^m, \\
\Stab(x) &= \{\, g \in G : g \cdot x = x \,\} \le G, \\
\hat{o}  &= \min_{g \in G}\, g \cdot x \quad (\text{lex-minimum}).
\end{align}
\end{definition}

Two windows share an orbit if and only if one is a transposition, inversion, retrograde,
or cyclic shift of the other, or a composition of these, so orbits partition
$\Z_{12}^m$ into classes of $G$-related motifs. The stabilizer captures the internal
symmetries of $x$. It always contains $e$, and a palindrome such as $(0, 4, 0)$ is also
fixed by $F$. The orbit--stabilizer theorem, $|\Orb(x)|\,|\Stab(x)| = |G|$, relates the two. A symmetric motif has few distinct surface forms, and an asymmetric one has the maximum, $|G|$. The lex-minimum $\hat o$ labels each orbit uniquely, so two windows are
$G$-equivalent if and only if they share $\hat o$ (Figure~\ref{fig:orbit}; computation in
Section~\ref{app:conventions}).

\begin{figure}[h]
\centering
{\small\textbf{(a)}}\\[2pt]
\begin{tikzpicture}[scale=0.75, every node/.style={font=\scriptsize}]
  \draw[thick,-stealth] (-0.5,0) -- (12,0) node[right] {position};
  \foreach \x [evaluate=\x as \i using int(\x+1)] in {0,...,11} {
      \draw (\x,-0.1) -- (\x,0.1);
      \node[below] at (\x,-0.3) {$s_{\i}$};
  }
  \draw[thick, kochre,  dashed] (-0.5,0.8) rectangle (3.5,1.8) node[pos=0.5] {$w_1$};
  \draw[thick, kteal,   dashed] ( 0.5,0.8) rectangle (4.5,1.8) node[pos=0.5] {$w_2$};
  \draw[thick, kterra,  dashed] ( 1.5,0.8) rectangle (5.5,1.8) node[pos=0.5] {$w_3$};
  \draw[thick, kindigo, dashed] ( 2.5,0.8) rectangle (6.5,1.8) node[pos=0.5] {$w_4$};
  \draw[<->,thick] (-0.5,2.3) -- (3.5,2.3);
  \node[above] at (1.5,2.3) {$m = 4$};
  \draw[decorate,decoration={brace,amplitude=5pt,mirror},kterra,thick]
      (0.5,-1.3) -- (3.5,-1.3);
  \node[below,kterra] at (2.0,-1.5) {overlap = $m - 1 = 3$};
  \draw[->,thick,kindigo] (-0.5,-2.7) -- (0.5,-2.7);
  \node[below,kindigo] at (0,-2.7) {step = 1};
\end{tikzpicture}
\\[8pt]
\begin{minipage}[c]{0.49\linewidth}\centering
{\small\textbf{(b)}}\\[2pt]
\resizebox{\linewidth}{!}{%
\begin{tikzpicture}[
    every node/.style={font=\scriptsize},
    ambient/.style={draw=kink, line width=0.6pt, rounded corners=6pt},
    micro/.style={circle, fill=kink, inner sep=0pt, minimum size=3.2pt},
    canon/.style={font=\scriptsize\bfseries, text=kindigo, inner sep=1pt},
    op/.style={kterra, font=\scriptsize\itshape},
    clabel/.style={kindigo, font=\scriptsize\itshape}
  ]
  \draw[ambient] (0,0) rectangle (8.2,5.3);
  \node[kink, anchor=north west, font=\small] at (0.2,5.15) {$\Z_{12}^m$};

  \draw[line width=0.9pt, draw=kindigo, fill=kindigo!8]
    plot[smooth cycle, tension=0.75] coordinates
    {(0.9,2.9) (1.4,4.2) (2.5,4.65) (3.6,4.15) (3.95,3.0) (3.1,2.0) (1.5,2.15)};
  \node[micro] (a1) at (1.7,3.85) {};
  \node[micro] (a2) at (2.95,4.0) {};
  \node[micro] (a3) at (3.4,3.25) {};
  \node[canon] (ao) at (2.4,3.0) {$\hat o_1$};
  \draw[->, kink, line width=0.5pt] (a1) -- (ao) node[op, midway, left=1pt] {$r$};
  \draw[->, kink, line width=0.5pt] (a2) -- (ao) node[op, midway, right=1pt] {$F$};
  \draw[->, kink, line width=0.5pt] (a3) -- (ao) node[op, midway, below=1pt] {$f$};
  \node[clabel, anchor=north] at (2.4,2.55) {$\Stab=\{e\}$};

  \draw[line width=0.9pt, draw=kindigo!70, fill=kindigo!6]
    plot[smooth cycle, tension=0.85] coordinates
    {(4.3,4.15) (4.4,4.75) (5.05,4.7) (5.0,4.15)};
  \node[canon] (eo) at (4.68,4.42) {$\hat o_5$};

  \draw[line width=0.9pt, draw=kterra, fill=kterra!10]
    plot[smooth cycle, tension=0.8] coordinates
    {(5.9,4.05) (6.1,4.9) (7.05,4.9) (7.25,4.05) (6.6,3.65)};
  \node[canon] (bo) at (6.6,4.35) {$\hat o_2$};
  \draw[->, kterra, line width=0.5pt]
    (bo) to[out=70, in=110, looseness=7] (bo);
  \node[op] at (7.0,4.78) {$F$};
  \node[clabel, anchor=north] at (6.6,3.6) {$\Stab\supseteq\langle F\rangle$};
  \node[kgrey, font=\scriptsize, anchor=north] at (6.6,3.28)
    {($C_2$-type, cf.\ \ce{H2O})};

  \draw[line width=0.9pt, draw=kteal, fill=kteal!8]
    plot[smooth cycle, tension=0.75] coordinates
    {(5.2,1.5) (5.6,2.5) (6.7,2.6) (7.3,1.8) (6.8,0.95) (5.7,1.0)};
  \node[micro] (c1) at (5.8,2.1) {};
  \node[micro] (c2) at (6.7,2.05) {};
  \node[canon] (co) at (6.15,1.5) {$\hat o_3$};
  \draw[->, kink, line width=0.5pt] (c1) -- (co);
  \draw[->, kink, line width=0.5pt] (c2) -- (co);

  \draw[line width=0.9pt, draw=kochre, fill=kochre!12]
    plot[smooth cycle, tension=0.8] coordinates
    {(2.6,0.95) (2.9,1.65) (3.9,1.6) (4.2,0.9) (3.35,0.55)};
  \node[micro] (d1) at (3.75,1.35) {};
  \node[canon] (do) at (3.3,1.05) {$\hat o_4$};
  \draw[->, kink, line width=0.5pt] (d1) -- (do);

  \node[canon] at (0.55,-0.55) {$\hat o$};
  \node[clabel, anchor=west] at (0.9,-0.55) {canonical rep.\ $\hat o$ (lex-min)};
  \node[micro] at (5.0,-0.55) {};
  \node[kink, font=\scriptsize, anchor=west] at (5.2,-0.55) {microstate (window)};
\end{tikzpicture}
}
\end{minipage}\hfill
\begin{minipage}[c]{0.49\linewidth}\centering
{\small\textbf{(c)}}\\[2pt]
\resizebox{0.92\linewidth}{!}{%
\begin{tikzpicture}[
    every node/.style={font=\scriptsize},
    micro/.style={rounded corners=2pt, draw=kink, line width=0.4pt,
                  fill=white, inner sep=2pt, minimum width=1.2cm,
                  minimum height=0.55cm},
    canon/.style={rounded corners=3pt, draw=kindigo, line width=0.8pt,
                  fill=kindigo!15, inner sep=3pt, minimum width=1.7cm,
                  minimum height=0.65cm, font=\small\bfseries},
    grp/.style={kterra, font=\scriptsize\itshape}
  ]
  \node[canon] (c) at (0, 0) {$\hat{o} = (0, 0, 1)$};
  \node[micro] (s1) at ( 90:2.0) {$(0, 1, 0)$};
  \node[micro] (s2) at ( 30:2.5) {$(1, 0, 0)$};
  \node[micro] (s3) at (-30:2.5) {$(0, 0, 11)$};
  \node[micro] (s4) at (270:2.0) {$(11, 0, 0)$};
  \node[micro] (s5) at (210:2.5) {$(0, 11, 0)$};
  \node[micro] (s6) at (150:2.5) {$(1, 1, 2)$};

  \draw[->, kink, thick] (s1) -- (c) node[grp, midway, right=3pt] {$R$};
  \draw[->, kink, thick] (s2) -- (c) node[grp, midway, above=3pt] {$F$};
  \draw[->, kink, thick] (s3) -- (c) node[grp, midway, below=3pt] {$f$};
  \draw[->, kink, thick] (s4) -- (c) node[grp, midway, right=3pt] {$f\,R^{2}$};
  \draw[->, kink, thick] (s5) -- (c) node[grp, midway, below=3pt] {$f\,R$};
  \draw[->, kink, thick] (s6) -- (c) node[grp, midway, above=3pt] {$r^{11}$};

  \draw[kindigo!60, dashed, line width=0.6pt]
        (0, 0) ellipse (3.5cm and 2.7cm);
  \node[kindigo, font=\scriptsize\itshape]
        at (2.8, 2.45) {$\Orb(w)$};
\end{tikzpicture}
}
\end{minipage}
\caption{Windows and orbits. \textbf{(a)} A sequence $S$ is read through a sliding window of length $m$ (here $m = 4$), giving $N = L - m + 1$ overlapping windows. \textbf{(b)} $G$ partitions $\Z_{12}^m$ into orbits, each with area proportional to $|\Orb| = |G|/|\Stab|$. Arrows map windows to the canonical representative $\hat o$. Asymmetric motifs ($\Stab = \{e\}$) fill large orbits, while the palindrome $\hat o_2$ is fixed by $F$, as \ce{H2O} is fixed by its $C_2$ axis. \textbf{(c)} One orbit at $m = 3$, each window labeled by the element that maps it to $\hat o = (0,0,1)$.}
\label{fig:orbit}
\end{figure}

The number of orbits for windows of size $m$ is obtained with the Cauchy--Frobenius lemma
\cite{cauchy1845memoire, frobenius1887doppelmodul},
\begin{equation}
\label{eq:orbit-counting}
K(m) \;=\; \frac{1}{|G|} \sum_{g \in G} |\Fix(g)|,
\end{equation}
where $|\Fix(g)|$ has a closed form over the cycles of $g_2$
(Section~\ref{app:decomposition}, following \cite{ibragimova2025algebraic}). This gives
$K(m) = 19$, $163$, $1{,}110$ and $10{,}962$ for $m = 3$--$6$
(Table~\ref{tab:Km}). $K(m)$ is thus the number of distinct motifs of length $m$ once transposition, inversion, retrograde, and cyclic shift are identified.

\subsectionnobarrier{Orbit entropy and deployment ratio}
\label{sec:measures}
\label{sec:orbit-analysis}

With the framework presented in this work, two of the questions that can be asked of the motifs of a melody are: how many distinct musical ideas the melody uses and in how many forms it states each one. The $N$ windows $w_1, \dots, w_N$ of a sequence $S$, read without wrap-around
(Figure~\ref{fig:orbit}a), are each
labeled by its orbit. With $n(o)$ windows in orbit $o$ and $\hat P(o)$ an estimate of its
probability, the orbit entropy is
\begin{equation}
\label{eq:hor-general}
\Hor = -\sum_{o \in X/G} \hat P(o) \log_2 \hat P(o),
\qquad 0 \le \Hor \le \log_2 \min\bigl(N, K(m)\bigr),
\end{equation}
which is $0$ when every window falls in one orbit and reaches the upper bound when windows
spread evenly over orbits (orbital equipartition). At $m = 3$ the positional group
$D_3 \cong S_3$ realizes every reordering of three positions, so orbits of three distinct
pitch classes reduce to the trichord set classes under transposition and inversion. From
$m \ge 4$, orbits retain melodic order and describe shape more than harmony. Both bounds on the window size come from counting. Below three notes every window is symmetric, since any pair of notes is its own
retrograde inversion, and the share of windows in asymmetric orbits rises from $58\%$ at $m = 3$
to $95\%$ at $m = 6$. Beyond six notes nearly every window would be asymmetric, leaving the
symmetry types little to separate, and the orbits ($107{,}509$ at $m = 7$) would far outnumber
the few hundred windows of a typical piece, so that orbit entropy could no longer be estimated
reliably.

\begin{kcallout}[The C major triad]
\noindent $x = (0,4,7)$ is fixed only by $e$, so its orbit holds the maximal
$|G| = 144$ forms at $m = 3$: transposed ($r^5$: F major, $(5,9,0)$), rotated
($R$: $(7,0,4)$), reversed ($F$: $(7,4,0)$) and inverted, which turns major into minor.
Its canonical representative is the C minor triad, $\hat o = (0,3,7) = f r^5 F \cdot x$, so every
major and minor triad, in any key, order, or direction, is one macrostate.
\end{kcallout}

Each window is both an orbit and one surface form within it, so the surface entropy splits
by the chain rule (Section~\ref{app:decomposition}),
\begin{align}
\label{eq:decomp}
\Hmic &= \Hmac + \langle H_{\text{intra-orbit}} \rangle, \\
\langle H_{\text{intra-orbit}} \rangle
 &= -\sum_{o} P(o) \sum_{w \in \Orb(o)} P_{\text{within}}(w \mid o) \log_2 P_{\text{within}}(w \mid o), \nonumber
\end{align}
with $\Hmac = \Hor$ and $P_{\text{within}}(w \mid o) = n(o, w)/n(o)$ the share of orbit
$o$'s windows that take surface form $w$. The first term measures the motivic
\emph{vocabulary}, how many ideas a line draws on, and the second its \emph{deployment},
how many surface forms each idea takes. The same windows give the standard entropies of a symbolic sequence as other readings of one
count. Counted without symmetry they give the block, or $m$-gram, entropy $\Hmic$, counted up to
transposition the entropy of interval $m$-grams, and counted by the rank order of their pitches
the permutation entropy \cite{bandt2002permutation}. Orbit entropy is the reading under the full
group (Section~\ref{app:entropies}).

The forms available to an orbit are bounded by its size. The capacity
\begin{equation}
\label{eq:capacity}
S(\hat o) = \log_2 |\Orb(\hat o)| = \log_2 \frac{|G|}{|\Stab(\hat o)|}
\end{equation}
is the within-orbit entropy reached when every form is used equally. It plays the role of
the symmetry number $\sigma$ in chemistry, which lowers a molecule's entropy by
$k_B \ln \sigma$ \cite{mcquarrie2000statistical}. Symmetric motifs have fewer
distinguishable forms, and asymmetric ones reach the maximum $\log_2 |G|$. The average
$\langle S \rangle = \sum_o P(o) \log_2 |\Orb(o)|$ bounds the intra-orbit term
(Section~\ref{app:decomposition}), and the deployment ratio
\begin{equation}
\label{eq:eta}
\eta = \frac{\langle H_{\text{intra-orbit}} \rangle}{\langle S \rangle} \in [0, 1]
\end{equation}
is the fraction of that capacity a melodic line uses. It tends to $0$ when each idea appears in a single form (rigid quotation) and to $1$ when each is spread evenly over all its transformed forms.

Three idealized regimes separate the two axes (Table~\ref{tab:decomposition}). A strict ostinato uses one form of one idea
($\Hor = 0$, $\langle H_{\text{intra-orbit}} \rangle = 0$), a strict canon many forms of
one idea ($\Hor = 0$, intra-orbit term large), and a line of many ideas, each stated once,
the reverse. An ostinato whose period matches $m$ reaches $\Hor = 0$ because its windows
are cyclic shifts of one cell. A crystal behaves alike in space. Its atoms fall into a few
crystallographic orbits, and its structural complexity is the Shannon information of their
distribution \cite{krivovichev2012topological, krivovichev2016structural}, with the same polarity. In both, a more symmetric object carries less information. Real melodic lines sit between this
crystal limit and equipartition (Figure~\ref{fig:proof-spectrum}a).

\begin{table}[h]
\centering
\caption{Partition of the entropy budget under three compositional
regimes.}
\label{tab:decomposition}
\small
\begin{threeparttable}
\begin{tabular}{lccc}
\toprule
\textbf{Regime} & $\Hmac$ & $\langle H_{\text{intra-orbit}} \rangle$ & $\Hmic$ \\
\midrule
Strict ostinato\tnote{a}    & $0$              & $0$                          & $0$            \\
Strict canon\tnote{b}       & $0$              & up to $S$ of that orbit      & matches intra  \\
Maximum motif variety\tnote{c}   & up to $\log_2 N$ & $0$                          & matches macro  \\
\bottomrule
\end{tabular}
\begin{tablenotes}[para,flushleft]\footnotesize
\item[a] One motive, one surface form. \item[b] One motive, many surface forms. \item[c] Many motives, one surface form each.
\end{tablenotes}
\end{threeparttable}
\end{table}

For synthetic and short annotated lines, which are used to explore known regimes, $\hat P$ is
the plug-in frequency $n(o)/N$, the maximum-likelihood estimate. Corpus lines undersample
the orbit space ($N \ll K(m)$ for $m \gtrsim 6$), so their $\Hor$ is estimated with the
Nemenman--Shafee--Bialek (NSB) estimator \cite{nsb2002}, which corrects small-sample bias
(Section~\ref{app:nsb}). The two estimators are never mixed within a comparison, and
Laplace and Miller--Madow \cite{miller1955note} corrections serve as checks. $\eta$ and its
ingredients use plug-in frequencies throughout.

\subsection{Symmetry types}
\label{sec:stab-types}
\label{sec:genome}

The orbit contains information about which motif a window is, and its stabilizer says what symmetry the motif carries,
e.g. the self-loop of Figure~\ref{fig:orbit}b. Read up to transposition, the kind of operation in
$\Stab(\hat o)$ organizes windows into six classes (Table~\ref{tab:window-types}), asymmetric
(fixed only by $e$), retrograde, retrograde-inversion (RI), inversion, translation, and
multiple, for windows fixed by more than one kind.

\begin{table}[h]
\centering
\caption{Window stabilizer types and the local shape each detects. A
window is symmetric if it is fixed by \emph{some} reflection $\rho$ of the
$D_m$ position cycle (its central axis is only the simplest case) or by a
rotation; conditions hold up to a transposition $k$. The \textbf{Sketch}
column gives an illustrative example of one representative pitch--position contour per type, all drawn
in a common frame (dots are successive notes). By analogy with molecular point groups, the asymmetric class corresponds to $C_1$ and the retrograde class to a $C_{2v}$-type mirror. At $m = 3$ a single axis pairs one couple of positions, so repeated-note and neighbor figures ($[x,x,y]$, $[x,y,x]$) are retrograde-type; at $m = 5$ two couples must agree.}
\label{tab:window-types}
\small
\setlength{\tabcolsep}{5pt}
\newcommand{\stabframe}[1]{%
  \tikz[baseline=1.6mm,x=1.5mm,y=1.5mm]{%
    \useasboundingbox (-0.3,-0.3) rectangle (4.3,4.3);#1}}
\begin{tabular}{l l >{\raggedright\arraybackslash}p{2.0cm} >{\raggedright\arraybackslash}p{3.4cm} c}
\toprule
\textbf{Type} & \textbf{$\Stab$ contains} & \textbf{Condition (some $\rho$)} & \textbf{Local shape detected} & \textbf{Sketch} \\
\midrule
asymmetric   & $\{e\}$              & fixed by no nontrivial element         & directed, non-returning motion &
  \stabframe{\draw[kindigo,line width=0.5pt](0,0)--(1.33,3)--(2.67,1.5)--(4,4);
    \foreach \p in {(0,0),(1.33,3),(2.67,1.5),(4,4)}\fill[kink]\p circle(0.34mm);} \\[3pt]
retrograde   & $\langle F R^j \rangle$  & $x_i = x_{\rho(i)} + k$            & literal return (arch, neighbor, turn) &
  \stabframe{\draw[kindigo,line width=0.5pt](0,0.5)--(2,4)--(4,0.5);
    \foreach \p in {(0,0.5),(2,4),(4,0.5)}\fill[kink]\p circle(0.34mm);} \\[3pt]
RI           & $\langle f F R^j \rangle$ & $x_i + x_{\rho(i)} = \text{const}$ & inversional balance (scalar / mirror) &
  \stabframe{\draw[kindigo,line width=0.5pt](0,0)--(1.33,0.5)--(2.67,3.5)--(4,4);
    \foreach \p in {(0,0),(1.33,0.5),(2.67,3.5),(4,4)}\fill[kink]\p circle(0.34mm);} \\[3pt]
inversion    & $\langle f \rangle$  & $2x_i \equiv \text{const}\ \forall i$  & tritone-only; degenerate, $\approx$ absent &
  \stabframe{\draw[kgrey,dashed,line width=0.5pt](0,0)--(1.33,4)--(2.67,0)--(4,4);
    \foreach \p in {(0,0),(1.33,4),(2.67,0),(4,4)}\fill[kgrey]\p circle(0.34mm);} \\[3pt]
translation  & $\langle R^j \rangle$& $x$ periodic, period $\mid m$          & internal repetition (ostinato cell) &
  \stabframe{\draw[kindigo,line width=0.5pt](0,0)--(0.8,4)--(1.6,2)--(2.4,0)--(3.2,4)--(4,2);
    \foreach \p in {(0,0),(0.8,4),(1.6,2),(2.4,0),(3.2,4),(4,2)}\fill[kink]\p circle(0.34mm);} \\[3pt]
multiple & two or more of the above & several conditions & repeated notes, oscillation &
  \stabframe{\draw[kindigo,line width=0.5pt](0,1)--(1.33,3)--(2.67,1)--(4,3);
    \foreach \p in {(0,1),(1.33,3),(2.67,1),(4,3)}\fill[kink]\p circle(0.34mm);} \\
\bottomrule
\end{tabular}
\end{table}

Every window that has a constant-step structure or shape, such as a chromatic run, is RI-type at every $m$:
retrograde-inversion reverses the succession of intervals, which leaves a constant succession
unchanged (Section~\ref{app:stab-levels}). The class is the most coarse-grained of four descriptions of a stabilizer; the exact subgroup, its conjugacy class, and its abstract family are compared in
the same appendix.

Counting classes over the $N$ windows of a piece gives its class profile $P(c)$ and the
stabilizer entropy
\begin{equation}
\label{eq:hstab}
H_{\text{stab}} = -\sum_{c} P(c) \log_2 P(c),
\end{equation}
a plug-in estimate over six symbols, well sampled at any $N$ in the corpus. Because the class
of a window is a function of its orbit, $H_{\text{stab}}$ coarse-grains the orbit
distribution. Averaging the profiles within a period, each piece weighted equally, gives that
period's class composition, shown as a Muller plot, with class fractions stacked to one and joined
across periods in historical order, one panel per $m$ (Figure~\ref{fig:genome-periods}).
Because the symmetric share falls with $m$ by counting alone, periods are compared at fixed
$m$ and against the share over all $12^m$ windows (Table~\ref{tab:si-baseline}). The number of distinct stabilizer subgroups each period accumulates is shown in Figure~\ref{fig:genome-periods}b, with the estimators in Section~\ref{app:richness}.

\subsection{Relations between windows}
\label{sec:relations}

The orbit identifies which idea a window states, and its stabilizer what symmetry the idea carries.
A third question is how two statements of one idea are related. If windows $w_p$ and $w_q$
share an orbit, the operations that carry one onto the other form the coset
\begin{equation}
\label{eq:relation}
\{\, h \in G : h \cdot w_p = w_q \,\} = h_0 \Stab(w_p), \qquad h_0 = g_q^{-1} g_p ,
\end{equation}
where $g_p \cdot w_p = \hat o$ brings each window to the canonical form. An asymmetric window
gives one relation, and a symmetric one gives $|\Stab(w_p)|$ equally valid relations, each
counted with weight $1/|\Stab(w_p)|$. A relation $h = (a, k, b, j)$ is typed by its inversion
and retrograde components, which gives five types, repetition ($e$), transposition ($r^k$,
$k \ne 0$), inversion ($f r^k$), retrograde ($F r^k$), and retrograde-inversion ($f F r^k$).
Relations that include a cyclic shift ($j \ne 0$) relate windows only as rotations of each
other. They are reported separately as the rotated share and left out of the types.

Every pair of windows in the same orbit whose starting positions differ by $\Delta = q - p \ge m$
notes is counted, so overlapping windows are never compared. The distance $\Delta$ is measured
along the line and is distinct from the cyclic shift $R^j$, which acts within a window. The relation profile of a piece is the share of each
type among its unrotated relations, its entropy $H_{\text{rel}} \le \log_2 5$ measures how
evenly the types are used, and the intervals $k$ of its transpositions form the transposition
spectrum. Splitting the entropy of the operations into the entropy of the types and the
entropy of the interval within each type separates which kind of operation a line uses from the
intervals it uses it at (Section~\ref{app:relations}). Profiles are taken over pairs of
asymmetric windows, where the relation is unique.
Over all windows, weight passes between types through the stabilizer, and every constant-step
run, which is RI-symmetric (Section~\ref{sec:stab-types}), would count each of its
transpositions also as a retrograde-inversion. The relations of a line are the autocorrelation
of its forms over the group \cite{diaconis1988group}, so $\eta$ fixes how widely they spread
but not which operations they are (Section~\ref{app:relations}).

\subsection{Validation with control sequences}
\label{sec:validation}
\label{sec:res-controls}
\label{sec:assumptions}

Before interpreting results from the corpus, we test the measures against sequences whose structure is known
(Table~\ref{tab:synthetic}). They are a period-3 ostinato, the chromatic scale, one motif in its
retrograde, inversion, transposition, and RI images, and uniform-random pitch ($n = 500$),
with four tonal melodies (Twinkle, Ode to Joy, Fr\`ere Jacques, Happy Birthday) for
placement (Figure~\ref{fig:examples-gallery}). These short lines use the plug-in estimator and are read as regime indicators to test our readings of the measures.

\begin{figure}[h]
\centering
\includegraphics[width=\linewidth]{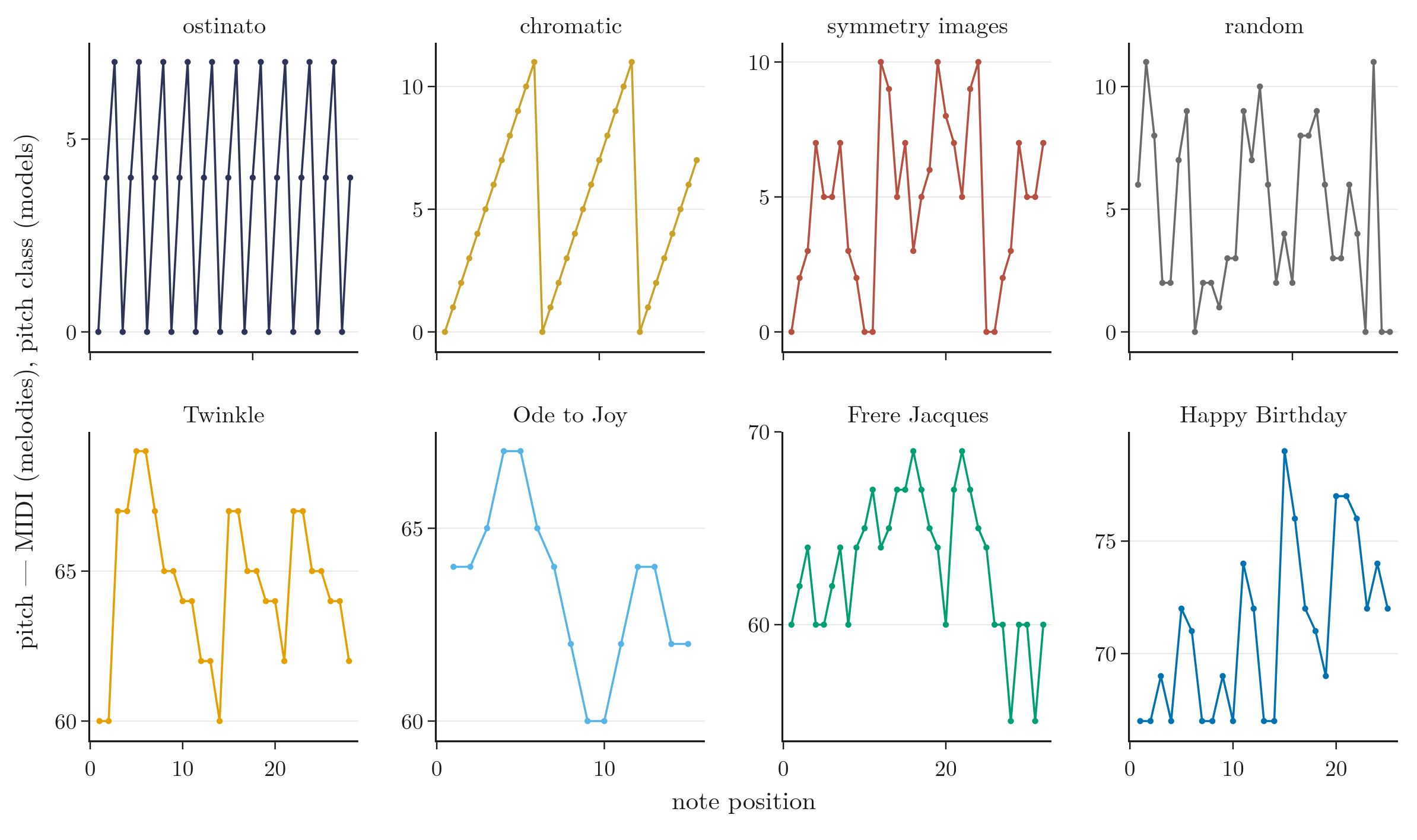}
\caption{The eight example objects as pitch--position contours: four synthetic
controls (ostinato, chromatic, symmetry images, random) and four tonal melodies
(Twinkle, Ode to Joy, Fr\`ere Jacques, Happy Birthday).}
\label{fig:examples-gallery}
\end{figure}

\begin{table}[h]
\centering
\caption{Synthetic controls at $m = 3$ and $m = 5$ (MLE estimator, ceilings
$\log_2 K(3) = 4.25$, $\log_2 K(5) = 10.12$~bits).}
\label{tab:synthetic}
\small
\begin{tabular}{l cc >{\raggedright\arraybackslash}p{4.6cm}}
\toprule
\textbf{Control} & $\Hor$ (m3 / m5) & $\eta$ (m3 / m5) & \textbf{Expected behavior} \\
\midrule
Ostinato \texttt{[0,4,7]}        & $0.00$ / $1.58$ & $0.22$ / $0.00$ & crystal endpoint; window-dependence \\
Chromatic \texttt{0:11}          & $0.00$ / $0.00$ & $0.58$ / $0.52$ & orbit entropy $\neq$ pitch entropy \\
Symmetry images (R,\,I,\,T,\,RI) & $1.99$ / $3.06$ & $0.39$ / $0.20$ & one idea, many surface forms \\
Random \texttt{0:11}, $n{=}500$  & $4.06$ / $8.38$ & $0.69$ / $0.07$ & motivic-vocabulary ceiling \\
\bottomrule
\end{tabular}
\end{table}

The control sequences reflect the predicted endpoints (Figure~\ref{fig:proof-spectrum}; Table~\ref{tab:decomposition}). The ostinato
sits at the crystal limit and random pitch near the ceiling. The chromatic scale returns
$\Hor = 0$ at every $m$ despite having high pitch-class entropy (diversity of notes), because each window is the
previous one transposed by a semitone ($w_{i+1} = r \cdot w_i$) and the line folds onto one
orbit. The
symmetry-images control has a small vocabulary with moderate deployment ($\eta = 0.39$ at
$m = 3$, $0.20$ at $m = 5$), while random pitch has the largest vocabulary, deployed widely at
$m = 3$ ($\eta = 0.69$) but thinly from $m = 5$ ($0.07$), where its windows rarely repeat.

\begin{figure}[tbp]
\centering
{\small\textbf{(a)}}\\[2pt]
\resizebox{0.75\linewidth}{!}{%
\begin{tikzpicture}[scale=0.95, every node/.style={font=\scriptsize}]
  \draw[->, thick, kink] (-0.3, 0) -- (10.6, 0)
        node[right, kink] {$\Hor$};
  \draw[thick, kink] (0, -0.12) -- (0, 0.12);
  \draw[thick, kink] (10, -0.12) -- (10, 0.12);
  \node[below=0.05cm, kink] at (0, -0.15) {$0$};
  \node[below=0.05cm, kink] at (10, -0.15) {$\log_2 K(m)$};
  \node[kindigo, font=\scriptsize\bfseries] at (0, -0.85)
        {crystal / ostinato};
  \node[kindigo, font=\scriptsize\bfseries] at (10, -0.85)
        {equipartition / all motifs explored uniformly};
  \fill[kterra] (3.4, 0) circle (2.5pt);
  \node[above=0.1cm, kterra, font=\scriptsize\itshape] at (3.4, 0.1)
        {real pieces};
  \draw[decorate, decoration={brace, amplitude=4pt}, kteal, thick]
        (3.5, 0.65) -- (10, 0.65);
  \node[above=0.15cm, kteal, font=\scriptsize\itshape] at (6.75, 0.7)
        {motivic compressibility};
\end{tikzpicture}}
\\[6pt]
{\small\textbf{(b)}}\\[2pt]
\includegraphics[width=0.85\linewidth]{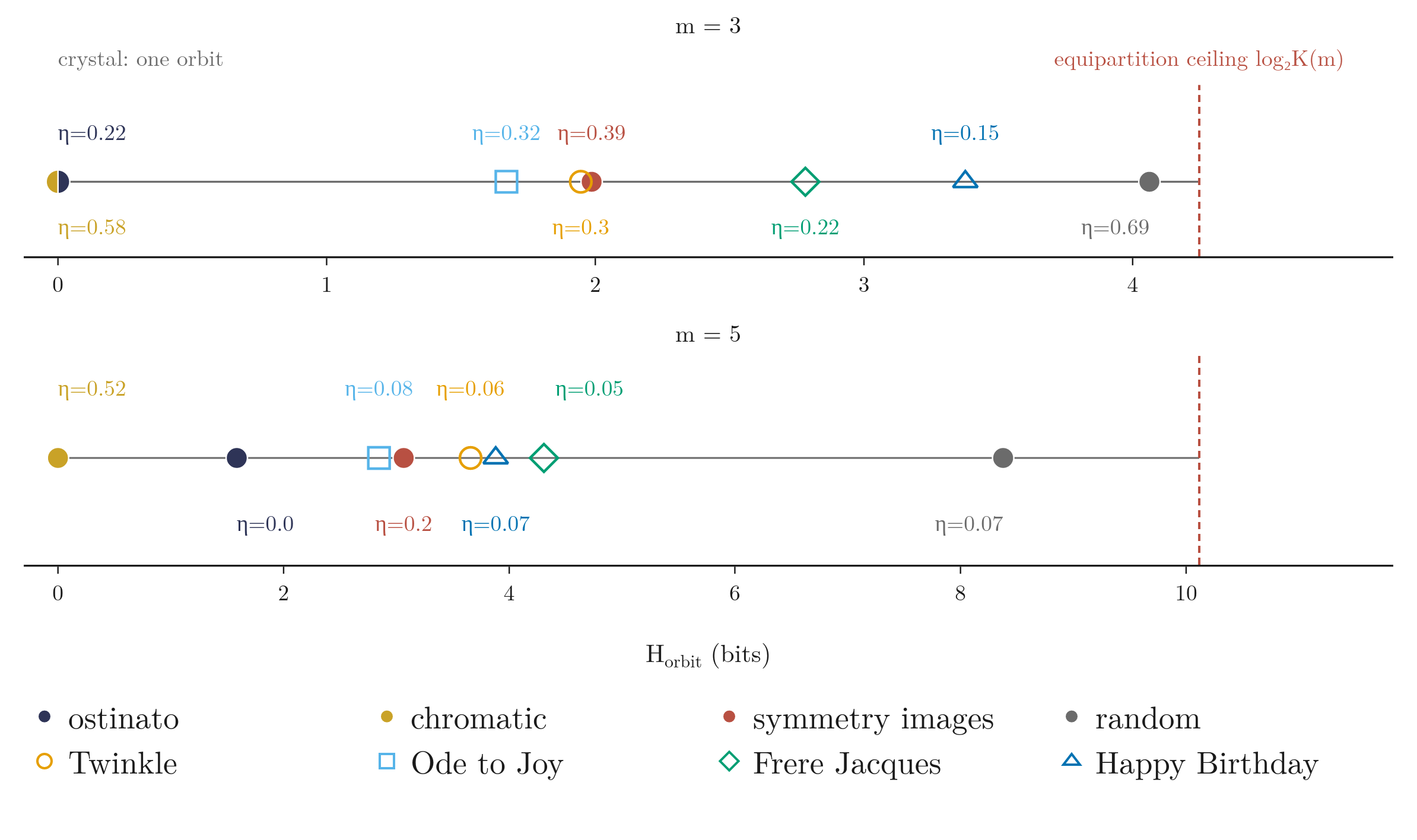}
\caption{\textbf{(a)} Motivic-entropy spectrum: $\Hor \to 0$ when a melodic line stays in one orbit (crystal, ostinato, strict canon) and $\Hor \to \log_2 K(m)$ at orbital equipartition. The gap to the ceiling is the motivic compressibility. \textbf{(b)} Measured $\Hor$: synthetic controls span the interval. Annotated melodies are shown for placement only, since their short length undersamples the distribution.}
\label{fig:proof-spectrum}
\end{figure}

The crystal limit depends on window size (Figure~\ref{fig:proof-sweep}). A periodic line
collapses to one orbit only when $m$ is commensurate with its period $P$. At $m = P$ every
window is a cyclic shift of one cell, while other window sizes straddle cell boundaries, the
orbit-space counterpart of a peak in the autocorrelation at lag $P$. The phase moves the
$\log_2 P$ bits of the cell between the two measures. At $m = P$ they appear as surface
variety within one orbit ($\Hor = 0$, $\eta = 0.22$), otherwise as separate orbits with one
form each ($\Hor = 1.58$, $\eta = 0$). On a corpus read at fixed $m$, a literal ostinato is
therefore flagged by low $\Hor$ or low $\eta$, never by both being high.

\begin{figure}[tbp]
\centering
\includegraphics[width=0.8\linewidth]{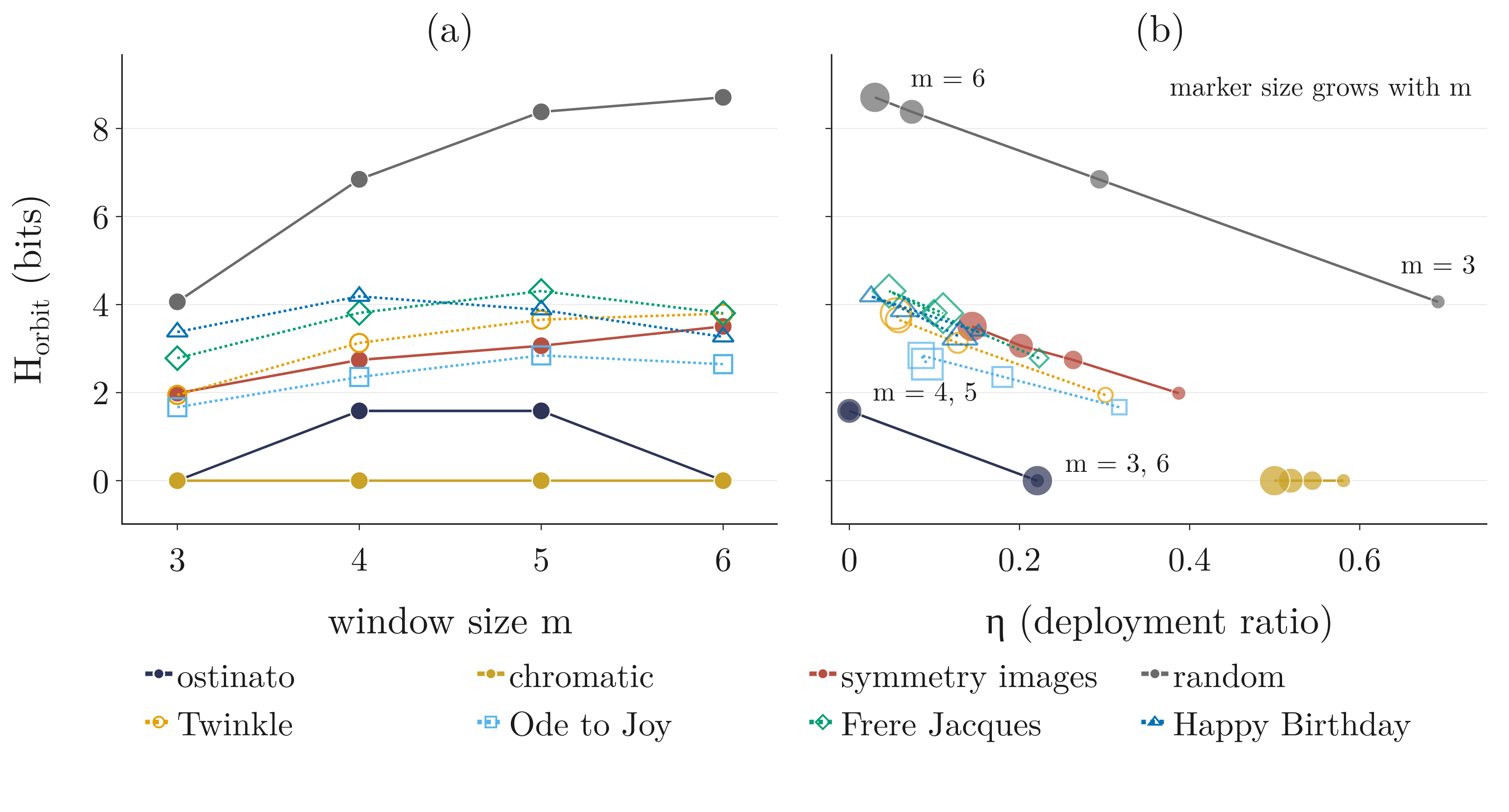}
\caption{\textbf{(a)} $\Hor$ as a function of window size $m$ for the eight examples. \textbf{(b)} The
same values against the deployment ratio $\eta$, one path per example from $m = 3$ to
$m = 6$ (marker size grows with $m$). The period-3 ostinato returns $\Hor = 0$ only when
$m$ is commensurate with its period ($m = 3, 6$). Random pitch tracks the ceiling while
its $\eta$ falls from $0.69$ toward $0$. }
\label{fig:proof-sweep}
\end{figure}

Window profiles are exact even for short lines. At $m = 3$ one matching pair makes a window
retrograde (Table~\ref{tab:window-types}), so retrograde windows dominate tonal tunes
(Twinkle $88\%$); from $m = 5$ return figures (retrograde) separate from scalar motion (RI),
and the chromatic scale is RI at every $m$ (Figure~\ref{fig:si-window}).

\subsectionnobarrier{Data and melody extraction}
\label{sec:data}
\label{sec:pipeline}
\label{sec:extraction}

To test each of the four questions stated in the Introduction, we use four data sets, each matched to one scale of symmetry (Table~\ref{tab:datasets}).
Statistical symmetry (windowed, given by $\Hor$, $\eta$, and symmetry types) is measured on a
balanced corpus of solo keyboard music and on a matched set of preludes and fugues. Exact
symmetry of a whole sequence, under $D_{12}\times D_L$, is measured on three corpora of
segments that people have already delimited as one idea (folk song phrases and
instrumental themes), because a whole piece has a single orbit and offers no distribution. The control sequences
of Section~\ref{sec:validation} complete the set.

\begin{table}[h]
\centering%
\caption{Data sets. $L$: median length of the analyzed sequence $S$ in notes.}
\label{tab:datasets}
\footnotesize\setlength{\tabcolsep}{3pt}
\begin{tabular}{l l r r l}
\toprule
\textbf{Data set} & \textbf{Sequence} & \textbf{Number} & \textbf{$L$} & \textbf{Measures} \\
\midrule
Solo keyboard corpus (KDF) & movement                & 852      & 322 & $\Hor$, $\Delta H$, $\eta$, symmetry types, relations \\
Prelude \& Fugue set       & prelude, fugue          & 108      & 289 & $\eta$, $\Delta\eta$, $\Delta H$, $H_{\text{stab}}$, relations \\
MTC-ANN 2.0.1              & phrase ($4\le L\le16$)  & 1{,}890  & 9   & exact symmetry, $O/E$ \\
Essen                      & phrase ($4\le L\le16$)  & 44{,}765 & 8   & exact symmetry, $O/E$ \\
MTD                        & theme (all $L$)         & 2{,}067  & 18  & exact symmetry, $O/E$ \\
\bottomrule
\end{tabular}
\end{table}

\paragraph{Solo keyboard corpus.} Pieces are movements of solo piano or harpsichord music
from the Kunst der Fuge MIDI collection \cite{kunstderfugue}, a source used in several earlier corpus studies \cite{Gonzalez-Espinoza2017,Gonzalez-Espinoza2020,Park2020,Serra-Peralta2021}.
Composers are placed in one of four periods by the median date of their works, and each
period holds $213$ pieces (Table~\ref{tab:corpus}), from composers with at least $10$ eligible pieces ($4$ to $30$ in the matched sample) and at most two movements per work. Melodies of $100$ to $1198$ notes are matched in
length across periods, since $\eta$ grows with $L$ (Section~\ref{app:length}). The corpus holds $852$ pieces from $822$ works by $49$ composers (Section~\ref{app:corpus-build}).

\begin{table}[h]
\centering%
\caption{Solo keyboard corpus by period (Schubert and Hummel placed in the Classical
period); lengths and composer shares in Table~\ref{tab:si-corpus}.}
\label{tab:corpus}
\small
\begin{tabular}{l c r r}
\toprule
\textbf{Period} & \textbf{Dates} & \textbf{Composers} & \textbf{Pieces} \\
\midrule
Baroque      & before 1750 & 14 & 213 \\
Classical    & 1750--1819  & 10 & 213 \\
Romantic     & 1820--1889  & 14 & 213 \\
20th century & from 1890   & 11 & 213 \\
\bottomrule
\end{tabular}
\end{table}

\paragraph{Prelude \& Fugue set.} To compare composers with form held fixed, we use Bach's
\textit{Well-Tempered Clavier}, Book~I (OpenWTC edition from PDMX \cite{openwtc,long2024pdmx}),
Mendelssohn's Op.~35 and Shostakovich's Op.~87 (Kunst der Fuge), $108$ pieces in which each
prelude and its fugue share a number, key, and mode (Section~\ref{app:pf}).

\paragraph{Folk song phrases and musical themes.} Phrases come from the Essen Folksong
Collection \cite{essen1995} and MTC-ANN \cite{vankranenburg2016mtcann,vankranenburg2019mtcfeatures},
themes from the Musical Theme Dataset \cite{zalkow2020mtd}. The segments were delimited by
people (encoders following text lines, rhyme, or syntax \cite{vankranenburg2014mtc}, editors
selecting complete themes \cite{zalkow2020mtd}), so boundary placement may itself contribute to
their symmetry; the corpora also differ in content (Section~\ref{app:census}).

\paragraph{Melody extraction.} Each keyboard piece is reduced to one melodic line with the
\texttt{:hybrid} method, in which $k$-means on the mean pitch of note groups keeps the highest cluster
\cite{ozcan2005melody}, and the skyline method then takes its top note at each onset
\cite{chai2001melody}. The $k$-means step is seeded from each file's name to make the selection reproducible. A register-based selector can mix voices in contrapuntal writing,
where a subject/theme migrates across registers, so a complexity-based selector
\cite{madsen2007complexity} is used as a double check. Where voices imitate one another the two
selectors produce the same motifs, and $70$ of $74$ tests keep their conclusion with either
(Section~\ref{app:extraction-check}). Lines
are not transposed to a common key, because transposition is already included in the operations of $G$, so every measure is exactly
invariant under a global transposition (local key changes affect at most $m - 1$ windows
each; Section~\ref{app:extraction}).

\subsection{Null models}
\label{sec:nulls}
\label{sec:census-method}
We use null references to isolate structural components of our original sequences to properly evaluate the features that we quantify in them. Each null represents a different conserved structural feature that we want to include or isolate.

\paragraph{Windows.} The orbit entropy value can be interpreted when compared against a null for the same pitch content. Each piece of the keyboard corpus is compared with $199$ surrogates from
each of three nulls that keep progressively more of the line. A random \textbf{shuffle} keeps its pitch distribution, \textbf{IAAFT} \cite{schreiber1996iaaft} also keeps its power spectrum (every linear autocorrelation), and a first-order \textbf{Markov} chain fitted to its
pitch-class transitions reproduces its one-step grammar. If each null removes structure the
real line has, orbit entropies should follow
\begin{equation}
\label{eq:rungs}
\Hor^{\text{real}} \le \Hor^{\text{Markov-1}} \le \Hor^{\text{IAAFT}} \le
\Hor^{\text{shuffle}} \le \log_2 K(m).
\end{equation}
For each null we record $\Delta H = \Hor - \overline{H}_{\text{null}}$, the relative deficit
$\Delta H/\overline{H}_{\text{null}}$, a two-sided rank $p$-value, and the expected share of
each symmetry class. Surrogates are seeded from the file name. In the Prelude \& Fugue set only
IAAFT is used, which also gives $\Delta\eta = \eta - \overline{\eta}_{\text{null}}$.

\paragraph{Whole sequences.} Each phrase or theme, a sequence read whole ($m = L$), is symmetric when its stabilizer under
$D_{12} \times D_L$ is non-trivial. Because the chance of symmetry depends on length, parity and
the notes used, segment $i$ is compared with rearrangements (permutations) of its own notes, and $p_i$ is the probability that one is symmetric, obtained by counting fixed arrangements (rearranging
intervals instead serves as a reference). At each length $L$,
\begin{equation}
\label{eq:oe}
\frac{O_L}{E_L}, \qquad E_L = \sum_{i:\,L_i = L} p_i ,
\end{equation}
with $O_L$ the observed number of symmetric segments and $E_L$ the expected number under random rearrangement of each segment's own notes, so that $O_L/E_L > 1$ means more symmetry
than each segment's notes give by chance. Lengths are pooled into bins for display, with exact
Poisson intervals \cite{garwood1936fiducial}; intervals that resample whole songs or works,
and per-length values smoothed with an empirical-Bayes model \cite{clayton1987empirical}, are
in Sections~\ref{app:stats} and~\ref{app:census}.

\paragraph{Relations.} The construction of the surrogates changes which orbits a line uses, so they cannot tell
whether its relations are chosen or follow from its vocabulary. The within-orbit shuffle keeps the
vocabulary. Each window keeps its position and orbit, and the forms that each orbit takes are
reassigned among its windows at random ($199$ permutations per piece), which leaves $\Hor$,
$\eta$, and $H_{\text{stab}}$ unchanged and alters only which form appears where. Identical
forms stay identical under this null, so repetition is compared instead with forms drawn
uniformly within each orbit and with the three surrogates. This null changes the relations between
windows while holding every quantity on which a vocabulary claim rests, so an excess of
transform-pairs over it is a statement about placement alone. For each type the observed share is
compared with the expected one as $O/E$, per piece and over the corpus with pieces weighted
equally.

\subsection{Statistical analysis}
\label{sec:stats}

We include statistical analysis for the four questions we explore in this work. The first is whether a piece has less orbit entropy than
surrogate lines that keep part of its structure, its pitch distribution, its power spectrum, or its
one-step transitions (Section~\ref{sec:nulls}). Its value is ranked among those of the $199$
surrogates from each null, and because this is tested for every piece, the expected share of
false positives is controlled across pieces \cite{benjamini1995fdr}. The second is whether
periods, forms, or composers differ. Here each piece, not each window, is one observation, and
group labels are shuffled to see how often a difference as large arises by chance. Pieces by
one composer resemble each other, so period labels are shuffled between composers and form
labels within each composer \cite{hurlbert1984}. In the Prelude \& Fugue set the three
composers are compared in historical order \cite{jonckheere1954}, and each prelude with its own
fugue. The third is whether whole segments are symmetric more often than their notes allow,
judged by the ratio of observed to expected counts, with intervals that resample whole songs or
works. Each comparison is reported with an effect size, the probability that a piece from one
group exceeds one from the other \cite{cliff1993dominance}, and corrected across the four window
sizes \cite{holm1979simple}. For relations, per-piece ratios add $0.5$ to observed and expected counts, so that pieces
without a given relation are kept \cite{haldane1956estimation,anscombe1956estimating}, and
composer means are shrunk toward their period mean by an empirical-Bayes model
\cite{morris1983parametric}. Procedures and robustness checks are in Section~\ref{app:stats}.

\section{Results}
\label{sec:results}

The results are structured in three parts. We ask whether melodies carry more orbit structure than
randomized versions of themselves (Section~\ref{sec:res-surrogates}), where that structure varies
across periods, forms, and composers who wrote in the same form (Sections~\ref{sec:res-hor-eta}
and~\ref{sec:res-pf}), and which symmetries it consists of, in short windows, in the relations
between them, and in whole phrases and themes (Sections~\ref{sec:res-genome},
\ref{sec:res-relations}, and~\ref{sec:res-census}). In brief, melodies draw on fewer motif classes
than chance but on more kinds of symmetry, periods differ in vocabulary, forms and composers in
deployment, recurring motifs are transposed more often than chance, preferably by a fifth, and whole
phrases and themes are symmetric more often than their notes allow. Full
results with their statistical procedures are in Sections~\ref{app:stats} and~\ref{app:figs}.

\subsection{Orbit structure against surrogates}
\label{sec:res-surrogates}

Figure~\ref{fig:surrogate-z} compares one typical piece from each period with its surrogates. In every case the real line (black) lies below all three bands, and across the corpus the
orbit entropy falls below the shuffled and IAAFT surrogates in more than $99\%$ of the $852$ pieces
and below the Markov surrogates, which keep the one-step transitions between notes, in $86$ to
$98\%$. The gaps between successive nulls show where this structure lies (Table~\ref{tab:rungs}).
Keeping the power spectrum adds at most $0.08$ bits, so the structure that is captured by $\Hor$ is not necessarily reflected in the power spectrum of the same series, despite the $1/f$-like spectra commonly found in music \cite{voss1975music,levitin2012rhythm}, as the
time irreversibility of pieces from the same collection also indicates \cite{Gonzalez-Espinoza2020}.
One-step transitions explain part of the structure, and what remains grows from $0.12$ bits at $m = 3$ to
$1.91$ bits at $m = 6$, so the structure spans whole motifs rather than neighboring notes.

Deployment departs from chance in a direction that depends on window size. In about $80\%$ of the
pieces, real lines repeat cells of three notes literally, in fewer transformed forms than their
surrogates, but restate motifs of five and six notes in more forms, transposed, inverted, or
reversed. These comparisons hold piece by piece after correction for multiple testing
(Section~\ref{app:stats}; Figure~\ref{fig:si-nulls} and Table~\ref{tab:si-deta}).

\begin{table}[h]
\centering
\caption{Measured gaps between consecutive levels of Eq.~\eqref{eq:rungs}: median over
the 852 pieces, in bits. At large $m$ the first row also contains the sampling ceiling (a
line of $N$ windows cannot exceed $\log_2 N$ bits).}
\label{tab:rungs}
\small
\begin{tabular}{l l r r r r}
\toprule
\textbf{Gap} & \textbf{Structure it reflects} & $m=3$ & $m=4$ & $m=5$ & $m=6$ \\
\midrule
$\log_2 K(m) - \Hor^{\text{shuffle}}$ & pitch bias (key, scale, range) & 0.26 & 0.51 & 1.14 & 3.05 \\
$\Hor^{\text{shuffle}} - \Hor^{\text{IAAFT}}$ & linear correlation (spectrum) & 0.00 & 0.04 & 0.07 & 0.08 \\
$\Hor^{\text{IAAFT}} - \Hor^{\text{Markov-1}}$ & one-step transitions & 0.38 & 0.60 & 0.89 & 0.69 \\
$\Hor^{\text{Markov-1}} - \Hor^{\text{real}}$ & beyond one-step transitions & 0.12 & 0.57 & 1.21 & 1.91 \\
\bottomrule
\end{tabular}
\end{table}

\begin{figure}[h]
\centering
\includegraphics[width=0.75\linewidth]{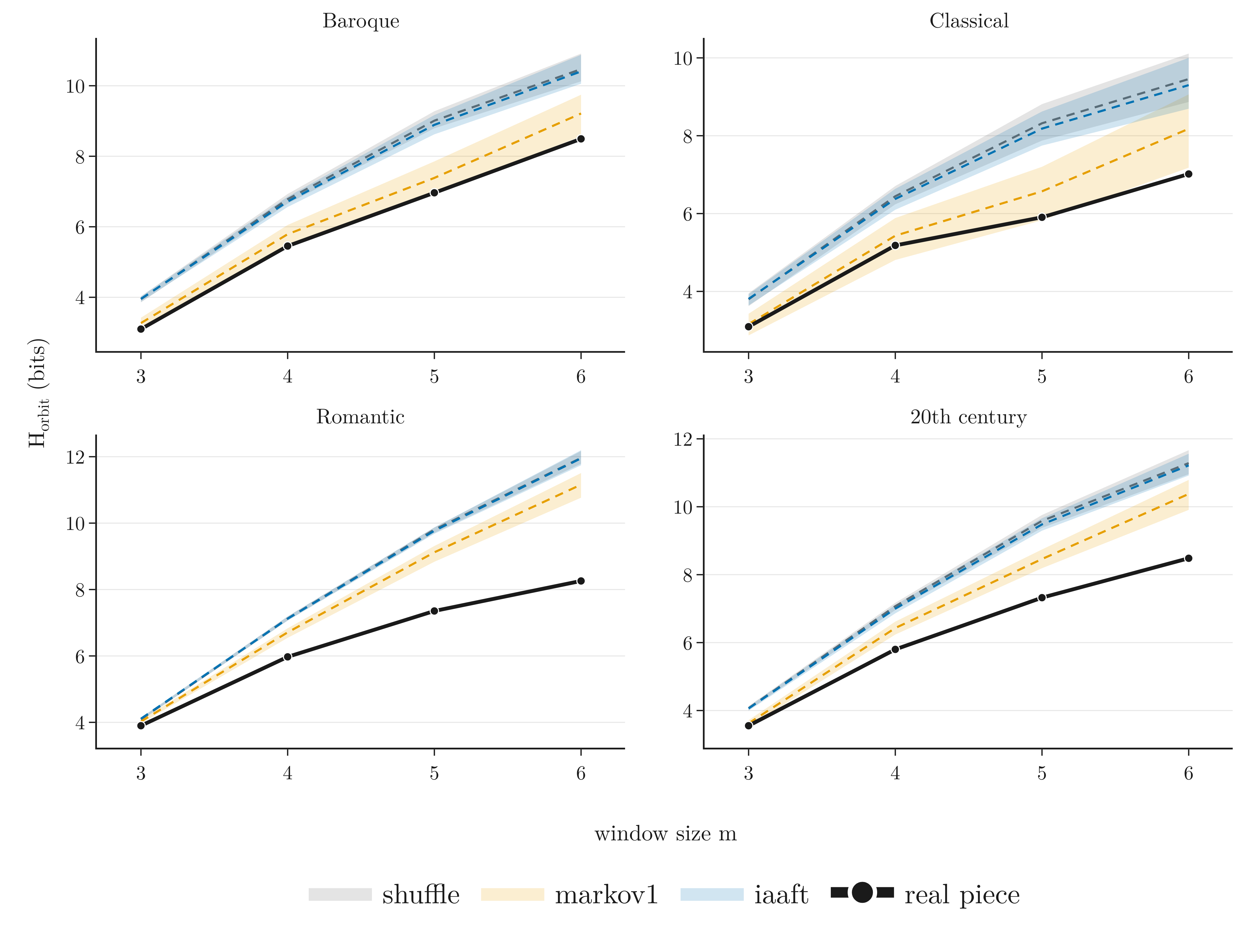}
\caption{Real $\Hor$ (black) against the shuffle (gray), IAAFT (blue), and Markov-1 (yellow) nulls (bands:
central $95\%$ of 199 surrogates; dashed: median) for one typical piece per period, the piece
whose IAAFT deficit at $m = 4$ equals its period's median: Froberger, Capriccio~3; Mozart,
Fugue K.~375g; Liszt, Grand \'Etude~4; Scriabin, \'Etude Op.~8 No.~6.}
\label{fig:surrogate-z}
\end{figure}

\subsectionnobarrier{Vocabulary across periods, deployment across forms}
\label{sec:res-hor-eta}
\begin{figure}[!tbp]
\centering
{\small\textbf{(a)}}\\[2pt]
\includegraphics[width=\linewidth]{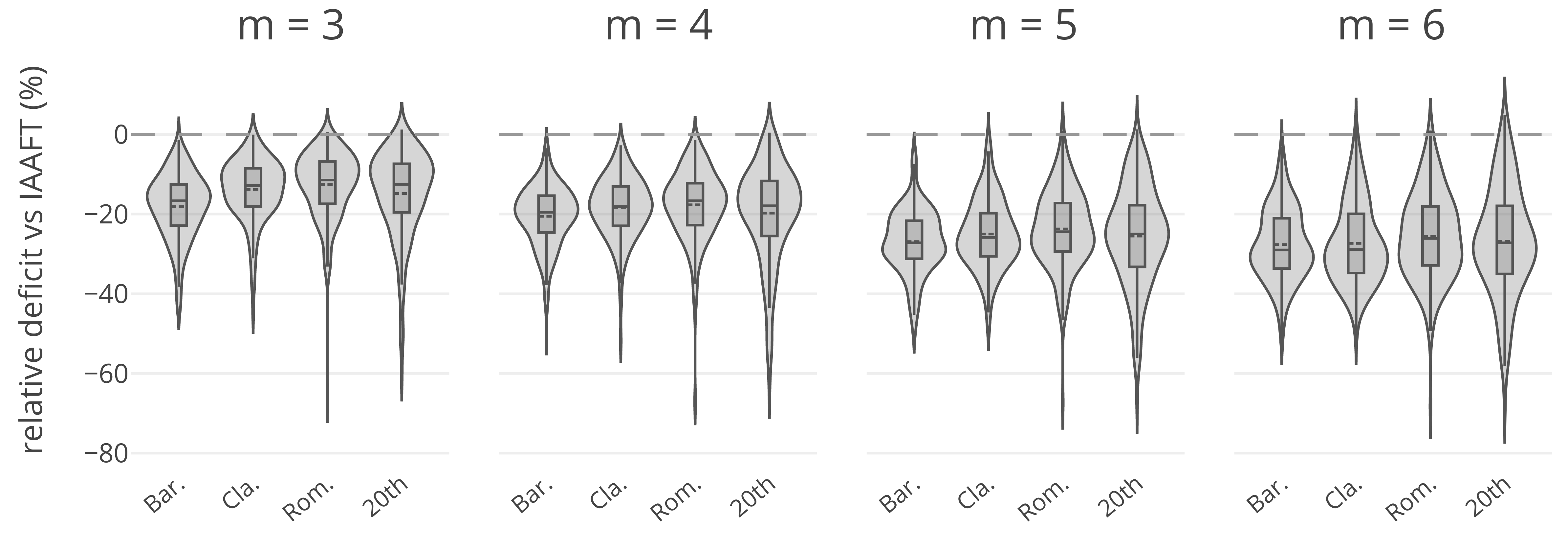}\\[6pt]
{\small\textbf{(b)}}\\[2pt]
\includegraphics[width=\linewidth]{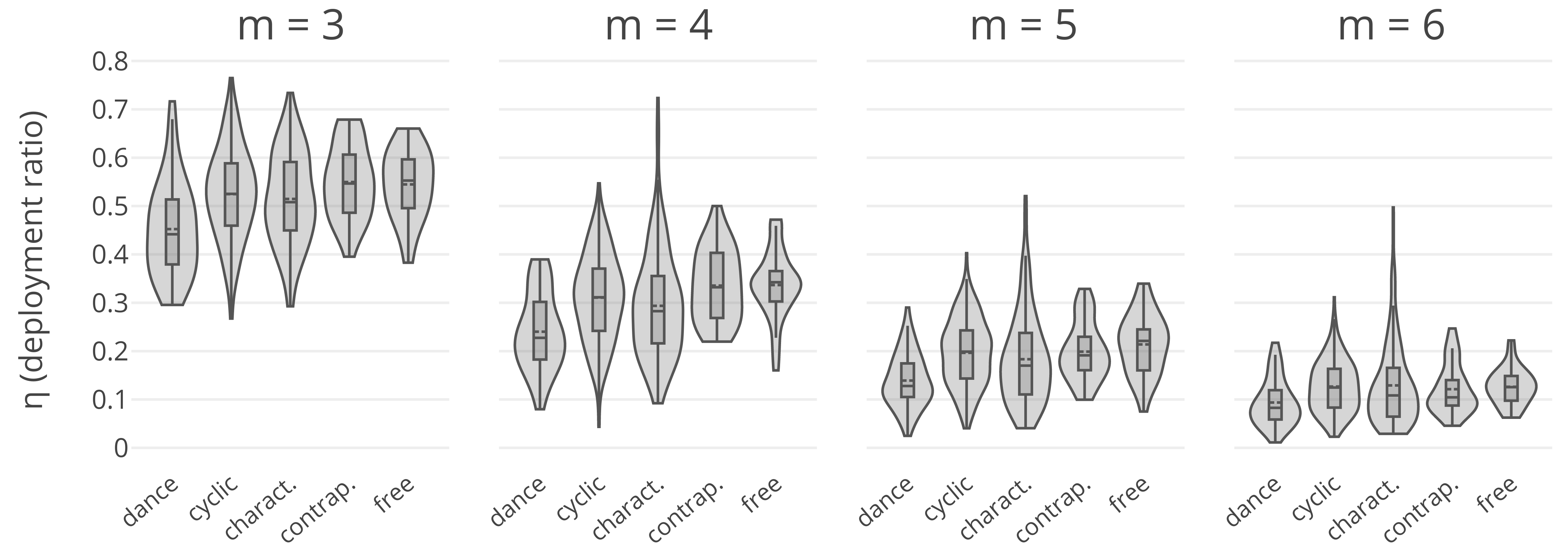}\\[6pt]
{\small\textbf{(c)}}\\[2pt]
\includegraphics[width=\linewidth]{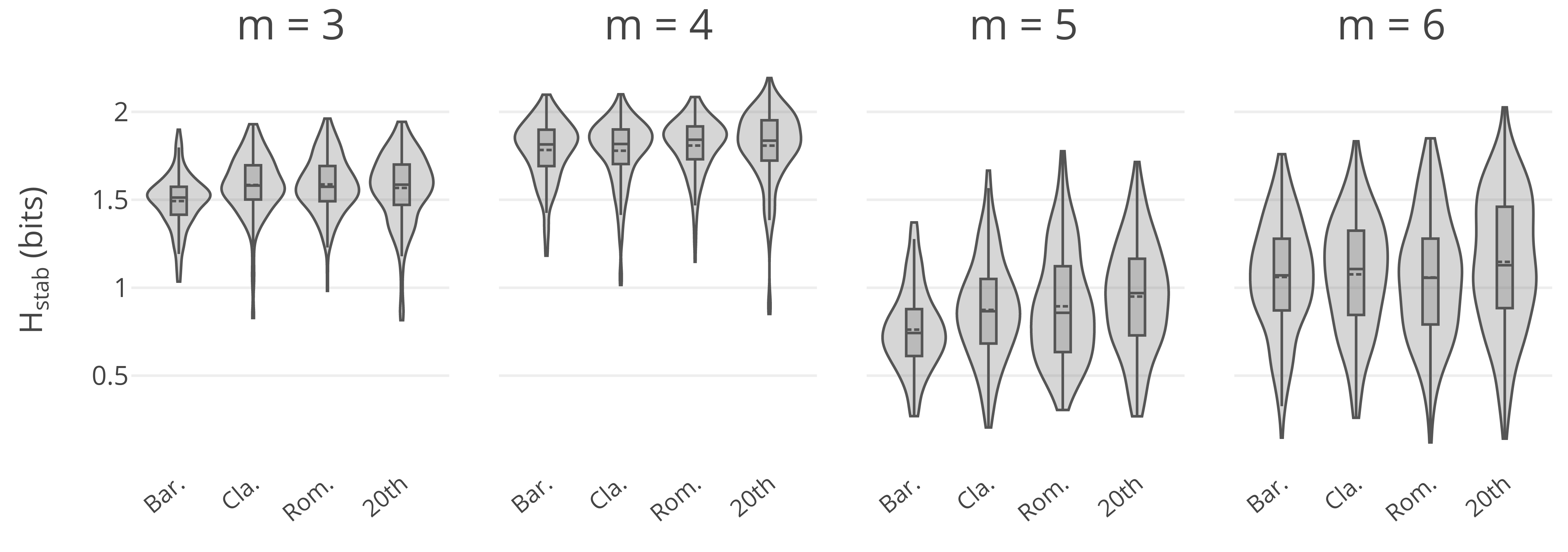}
\caption{The three measures across periods and forms, $m = 3$--$6$. \textbf{(a)} Relative
deficit of $\Hor$ against IAAFT by period (vocabulary). \textbf{(b)} Deployment ratio $\eta$ by
form family (deployment). \textbf{(c)} Symmetry-type entropy $H_{\text{stab}}$ by period (mix of
symmetry types). $213$ pieces per period (Bar.\ Baroque, Cla.\ Classical, Rom.\ Romantic, 20th:
20th century); $508$ labeled pieces by form (dance $65$, cyclic $211$, character $167$,
contrapuntal $35$, free $30$). Character: single-movement pieces (prelude, nocturne, \'etude,
impromptu, ballade, intermezzo); cyclic: sonata, suite, partita movements. Boxes: interquartile
range and median; dashed: mean. Separation tests for every measure and grouping are in
Table~\ref{tab:si-separation}.}
\label{fig:forms-periods}
\end{figure}

Figure~\ref{fig:forms-periods} shows how the three measures (the relative deficit of $\Hor$ against IAAFT, $\eta$, and $H_{\text{stab}}$) change across periods and forms.
Periods differ mostly over short windows. At $m = 3$ Baroque lines are the most structured, about
$17\%$ below their surrogates against $12$ to $13\%$ in later periods, a difference that fades as the
window grows (panel a), while the mix of symmetry types rises slowly from the Baroque to the 20th
century (panel c), carried by repeated notes and oscillating figures (Section~\ref{sec:res-genome}).
Forms differ instead in deployment, since pieces of different forms by the same composer draw on a
similar vocabulary but deploy it differently (panel b). Dances spread each motif over the fewest
transformed forms (median $\eta = 0.23$ at $m = 4$, against $0.28$ to $0.34$ for the other forms), as
befits music built on literal repeats, whereas contrapuntal and free pieces, which restate their
material in other keys, deploy it most widely. Across all three measures the composer (more than the
period or the form) accounts for most of the variation, and these differences hold when the pieces
of each composer are kept together (Section~\ref{app:stats}; Tables~\ref{tab:si-separation}
and~\ref{tab:si-variance}). Because $40\%$ of the pieces carry no form label and forms are unevenly
spread across composers, the form result rests on composers who wrote in several forms
(Section~\ref{app:corpus-build}; Figure~\ref{fig:si-grid}).

\subsectionnobarrier{Deployment in preludes and fugues}
\label{sec:res-pf}
\begin{figure}[!tbp]
\centering
\includegraphics[width=\linewidth]{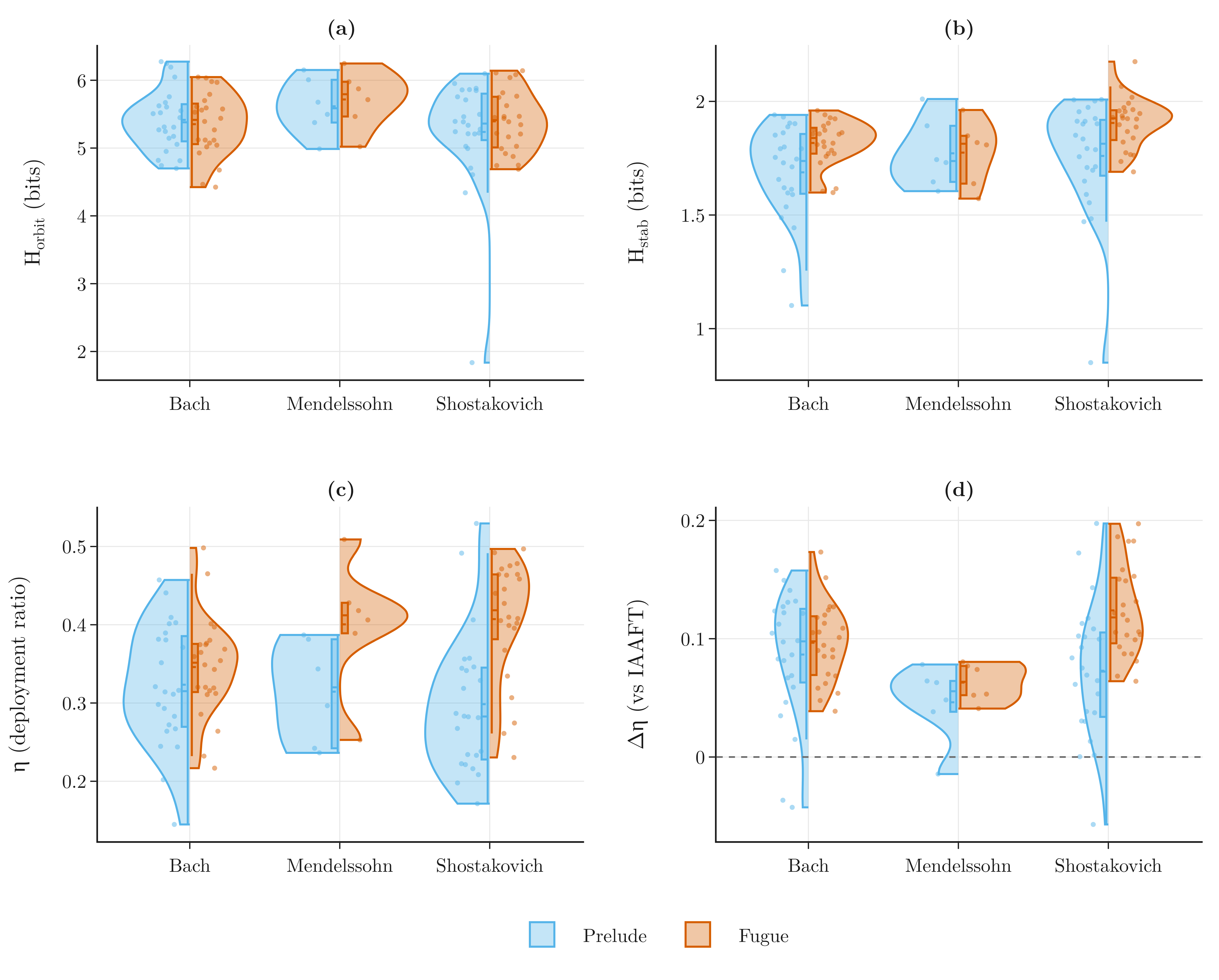}
\caption{Structure and deployment in the Prelude \& Fugue set, $m = 4$.
\textbf{(a)} $\Hor$, \textbf{(b)} $H_{\text{stab}}$ (six symmetry classes), \textbf{(c)} $\eta$ and
\textbf{(d)} $\Delta\eta$ (difference from the mean of 199 IAAFT surrogates), preludes
(left) and fugues (right) by composer ($n = 24$, $6$, $24$ pieces per form).
Jonckheere--Terpstra trend across composers for fugues, Holm-adjusted over $m = 4$--$6$:
$p = 1.0$ (a), $0.014$ (b), $0.002$ (c), $0.035$ (d).}
\label{fig:pf}
\end{figure}

Figure~\ref{fig:pf} compares the preludes and fugues of Bach, Mendelssohn, and Shostakovich, where
key and form are fixed by design. Their fugues draw on a similar vocabulary of motifs (panel a) but
differ in the mix of symmetry types (panel b) and, most clearly, in deployment (panels c and d).
Bach's fugues spread each motif over the fewest transformed forms (median $\eta = 0.35$),
Mendelssohn's and Shostakovich's over more ($0.41$ and $0.42$). The contrast between Bach and
Shostakovich remains when each fugue is compared with its own surrogates, which at this window size
also removes the effect of length, whereas Mendelssohn's high values are largely reproduced by his
surrogates and rest on six fugues. Bach deploys preludes and fugues alike, while Shostakovich deploys
his fugues more widely than his preludes.

\subsection{Symmetry types of short windows}
\label{sec:res-genome}

\begin{figure}[!tbp]
\centering
{\small\textbf{(a)}}\\[1pt]
\includegraphics[width=\linewidth]{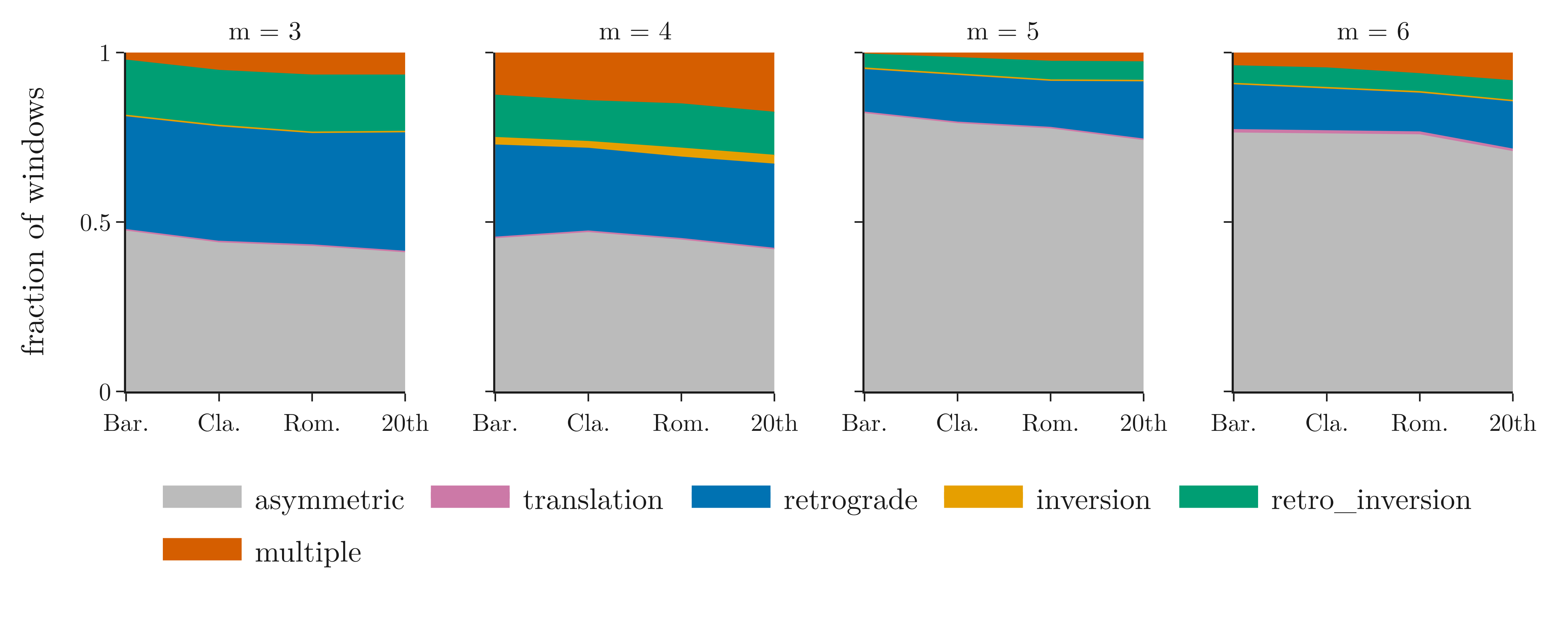}\\[4pt]
{\small\textbf{(b)}}\\[1pt]
\includegraphics[width=\linewidth]{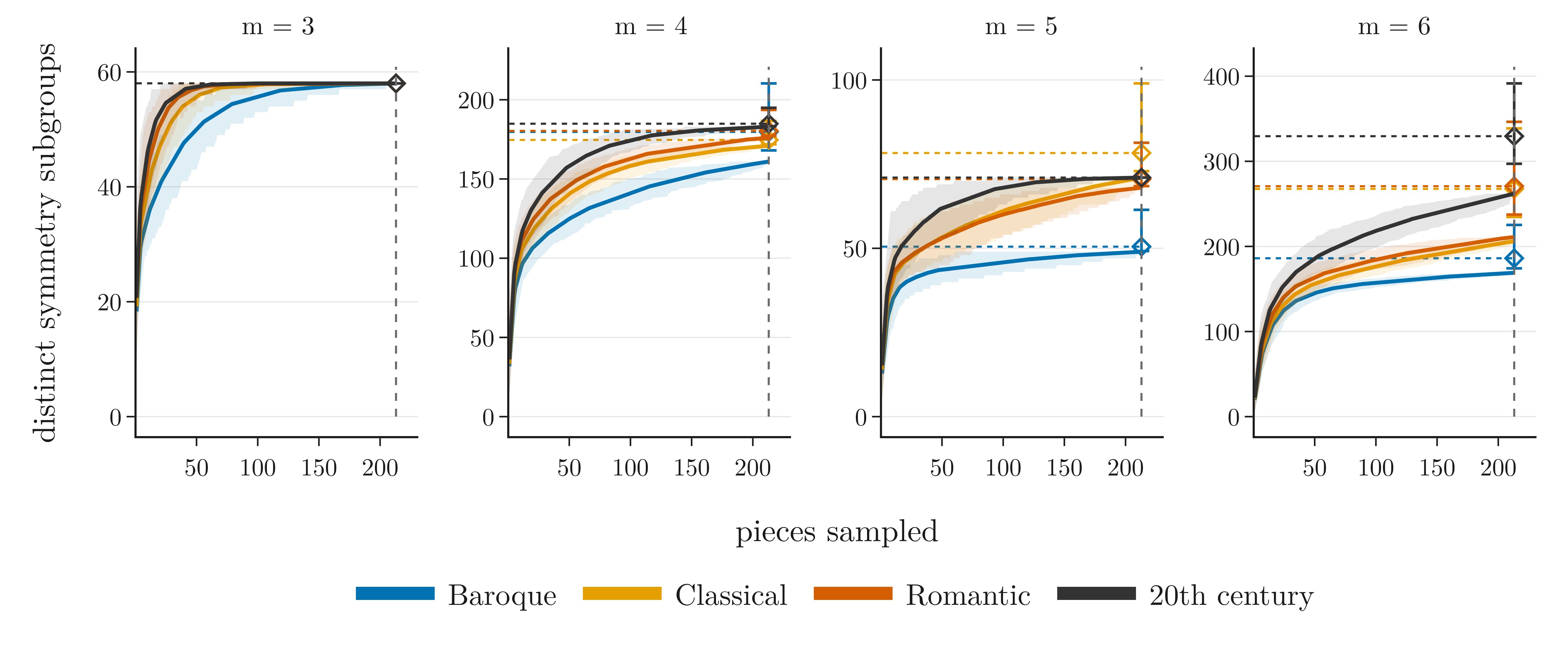}
\caption{Symmetry types by period, $m = 3$--$6$. \textbf{(a)} Share of windows in each symmetry class, computed per piece, averaged within each period, and stacked to one (Bar.\ Baroque, Cla.\ Classical, Rom.\ Romantic, 20th century). \textbf{(b)} Distinct symmetry subgroups found as pieces are added, by period. Lines and bands show the mean and $95\%$ range of the rarefaction curve, diamonds and horizontal dashed lines the estimated total (Chao2) with its interval, and the vertical dashed line the common sample of $213$ pieces.}
\label{fig:genome-periods}
\end{figure}

Most short windows are asymmetric, and more so as windows grow, as counting all $12^m$ windows
predicts. Asymmetric orbits, $7$ of $19$ at $m = 3$, already hold $58\%$ of all windows, a share that
reaches $95\%$ at $m = 6$ (Table~\ref{tab:si-baseline}). Real windows are nevertheless more often
symmetric than in lines that keep their notes in random order, increasingly with window size, and
retrograde windows are about twice as common as in shuffled lines at $m = 6$. Melodic lines that keep only
the one-step transitions reproduce most of this excess, so mirror figures in short windows come
largely from stepwise motion that returns, such as neighbor notes and turns. Windows with several
symmetries at once, mostly repeated notes and two-note oscillations, stay above every null and become
more common over time, from $13\%$ of windows in the Baroque to $18\%$ in the 20th century at $m = 4$
(Figure~\ref{fig:genome-periods}a). Later repertoires use more repetitive figuration or patterns. Windows
symmetric under inversion alone are rarer than chance.

The symmetry-type entropy $H_{\text{stab}}$ summarizes this mix for each piece
(Figure~\ref{fig:forms-periods}c). From $m = 5$ on, real lines carry a more varied mix of symmetry
types than any of their surrogates, the counterpart of their lower orbit entropy, so melodies use
fewer motif classes than chance but more kinds of symmetry among them. This variety grows from the
Baroque to the 20th century at $m = 3$ and $5$ (Tables~\ref{tab:si-classnull},
\ref{tab:si-hstabnull}, and~\ref{tab:si-separation}).

Figure~\ref{fig:genome-periods}b counts how many distinct symmetry subgroups, the exact sets of
operations that fix a window, each period draws on as pieces are added. At $m = 3$ every period
reaches all $58$ possible subgroups. From $m = 4$ the curves separate, with the Baroque lowest at
every window size and the 20th century nearly exhausting those possible at $m = 4$ and $5$ ($183$ of
$183$ and $71$ of $72$). At $m = 6$, with every period sampled at $213$ pieces, Baroque pieces use
$169$ subgroups against $262$ in the 20th century, and the estimated totals ($186$ against $329$)
have intervals that do not overlap. Later repertoires thus use not only more repetition-based figuration but a wider range of
symmetry types, in
agreement with earlier corpus studies, in which the harmonic vocabulary of classical music grows
richer \cite{Serra-Peralta2021}, its key diversity rises \cite{gonzalezespinoza2023harmony}, and
melodies of art music become more complex \cite{mcbride2024information} from the Baroque on. The
curves treat pieces as independent draws, so the comparison is descriptive
(Section~\ref{app:richness}).

\subsectionnobarrier{Transformations between recurring motifs}
\label{sec:res-relations}

Figure~\ref{fig:relations} shows results for the $O/E$ ratio in the operations that link recurring windows. A typical piece has
more than $1{,}000$ of such pairs at $m = 4$, and three quarters of its windows repeat at least once. Literal
repetition is $14$ times more common than for forms drawn at random within each orbit and $3.4$
times more than in shuffled lines, mostly reproduced by lines that keep their one-step transitions,
and pieces that repeat more deploy their motifs over fewer forms. Relations that merely rotate a
window, $60\%$ of the total, are rarer than for random forms in $98\%$ of pieces and rarer than under
the within-orbit shuffle in $88\%$.

\begin{figure}[!htb]
\centering
\includegraphics[width=\linewidth]{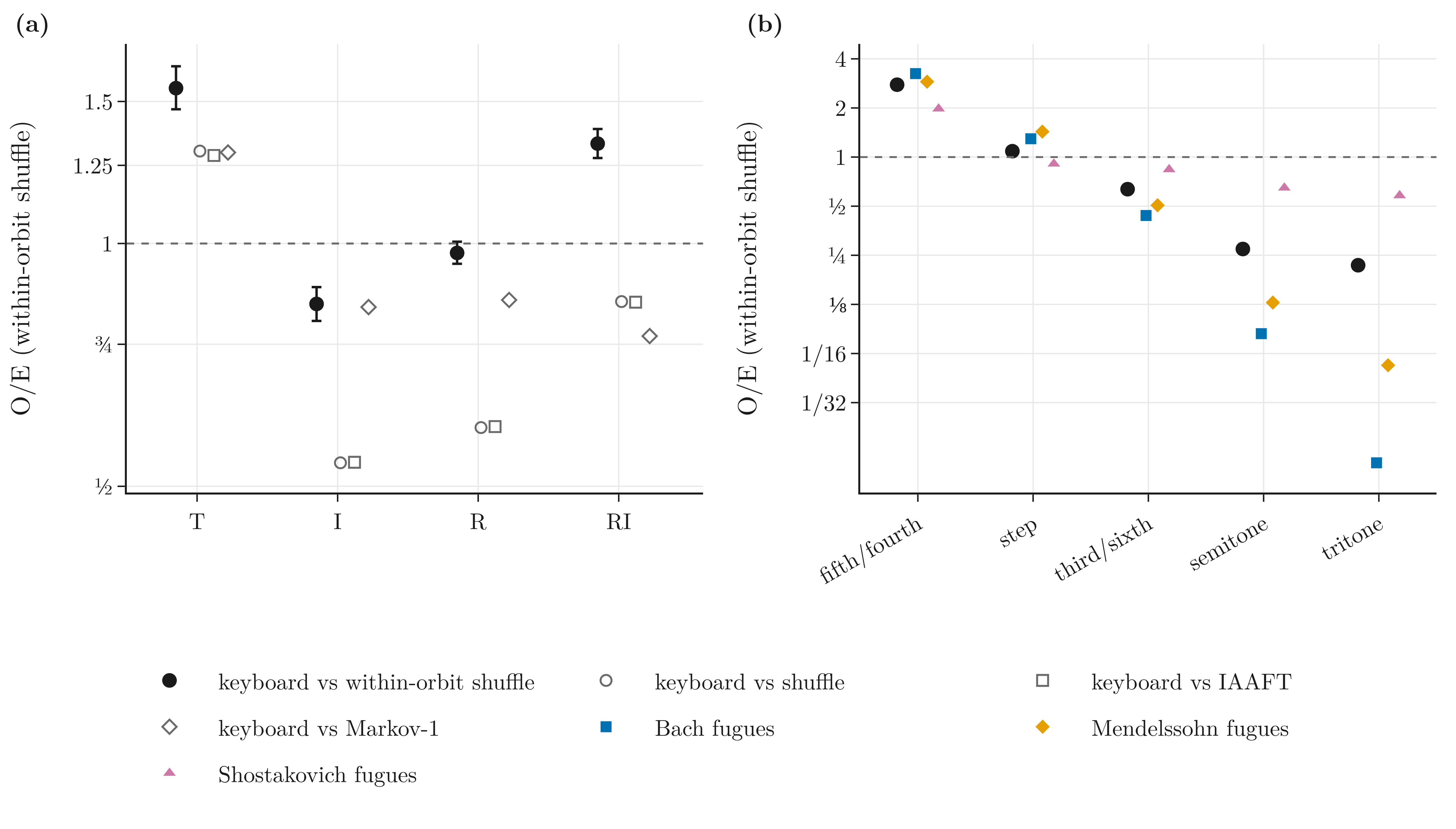}
\caption{Relations between recurring windows, $m = 4$, pairs of asymmetric windows.
\textbf{(a)} Share of each relation type relative to the within-orbit shuffle, which keeps the orbit of
every window and the forms each orbit takes (filled, with $90\%$ intervals that resample
composers), and relative to shuffled, IAAFT, and Markov-1 surrogates (open). T transposition,
I inversion, R retrograde, RI retrograde-inversion. \textbf{(b)} Transposition intervals relative to
the within-orbit shuffle in the keyboard corpus and in the fugues of the Prelude \& Fugue set, fifth
or fourth ($k = 5, 7$), step ($2, 10$), third or sixth ($3, 4, 8, 9$), semitone ($1, 11$), and
tritone ($6$). Dashed lines mark $O/E = 1$.}
\label{fig:relations}
\end{figure}

\begin{figure}[!tbp]
\centering
\includegraphics[width=\linewidth]{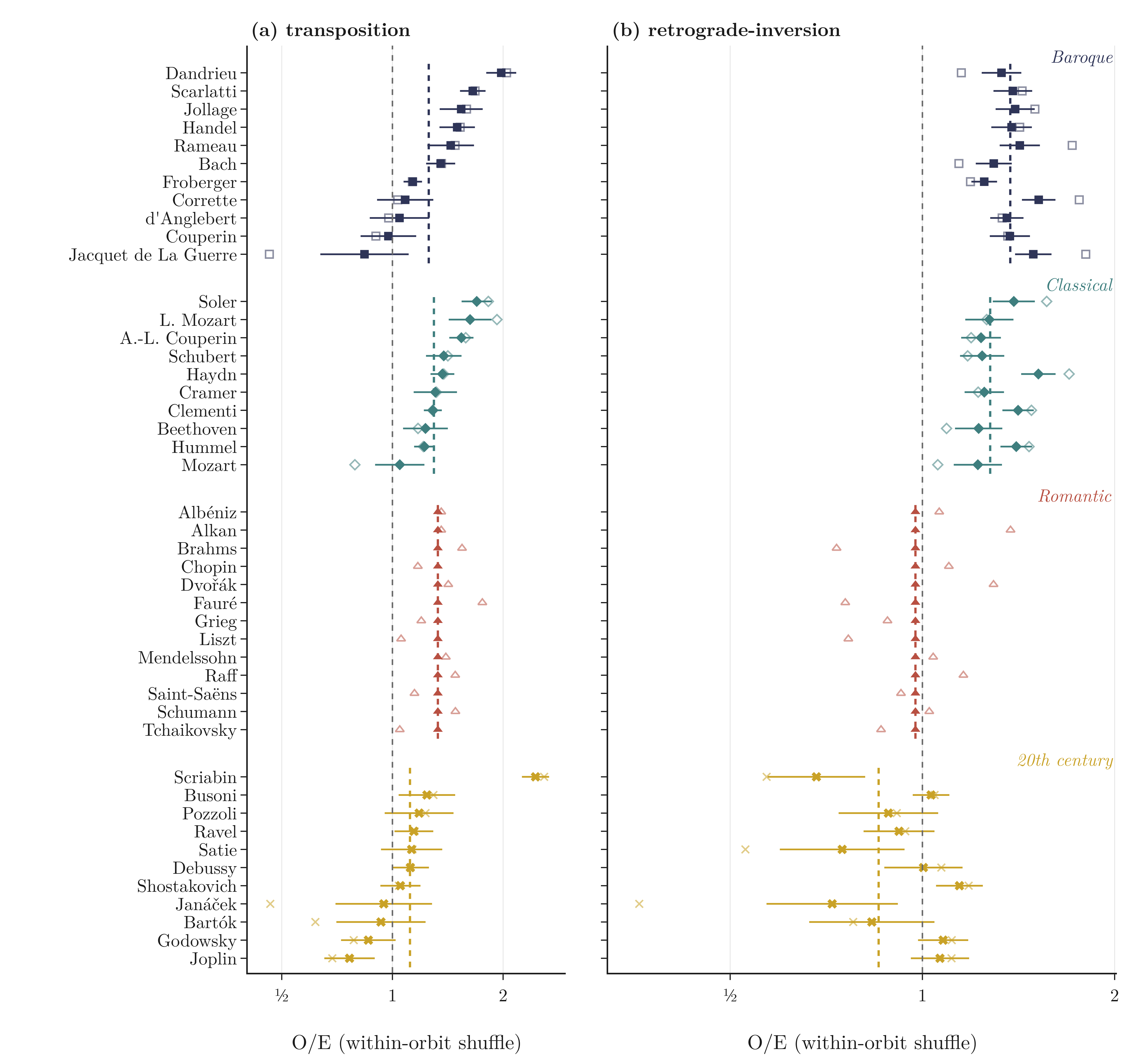}
\caption{Composers relative to the within-orbit shuffle, $m = 4$, pairs of asymmetric windows.
\textbf{(a)} Transposition and \textbf{(b)} retrograde-inversion for composers with at least $10$
pieces, in the same rows (ordered by transposition), shrunk toward the period mean (filled, $\pm 1$
s.d.), with the unshrunk mean (open) and the period mean (dotted). Transposition varies more between
composers than between periods (composer $13\%$ and period $0.1\%$ of the variance), whereas
retrograde-inversion varies between periods (trend $\rho = -0.24$). Romantic composers coincide after
shrinkage because their between-composer variance is estimated as zero.}
\label{fig:relations-composers}
\end{figure}

With vocabulary and deployment held fixed, transposition is $1.56$ times more common than in random
permutations of the forms, in three quarters of the pieces, and twice as common at $m = 5$
(Figure~\ref{fig:relations}a). Retrograde-inversion is also favored ($1.33$), inversion is avoided
($0.84$), and retrograde occurs about as often as chance. The five types are used about as evenly as
under the null ($2.2$ of a possible $2.3$ bits in every period), so the preference lies in which
operations a line uses, not in how many, but the intervals within transposition are not used
evenly. Their entropy lies $0.30$ bits below the within-orbit shuffle at four notes, about $0.45$
bits in the Baroque and Classical periods against $0.21$ in the Romantic and $0.13$ in the 20th
century, and in the fugues it falls from Bach ($0.50$) to Shostakovich ($0.05$)
(Figure~\ref{fig:si-rel-entropy}).

Transposition operations are biased toward the fifth (Figure~\ref{fig:relations}b). Transpositions by a fifth or fourth
make up half of all transpositions in a typical piece, nearly three times their share under the
null, while thirds and sixths, semitones, and tritones are rarer, and steps, the interval of the
sequence, occur as often as chance. The share of fifths falls from about $60\%$ of transpositions in
the Baroque and Classical periods to $42\%$ in the Romantic and $35\%$ in the 20th century, and the
fugues repeat this history in miniature, from $57\%$ in Bach to $52\%$ in Mendelssohn and $35\%$ in
Shostakovich, whose fugues transpose by semitone and tritone at more than half the chance rate,
where Bach's almost never do.

Composers differ in transposition more than periods (Figure~\ref{fig:relations-composers}a). Within
each period they range from $0.76$ times the null for Joplin to $2.5$ for Scriabin, and composer
accounts for $13\%$ of the variation between pieces, period for less than $1\%$, with the same
pattern for inversion at every window size and Romantic composers as alike as sampling can detect at
four notes. Retrograde-inversion shows the opposite pattern (Figure~\ref{fig:relations-composers}b),
the only type whose use changes across periods. Baroque and Classical composers restate motifs as
their retrograde inversion about $1.3$ times more often than chance, Romantic composers at chance,
and 20th-century composers below it ($0.85$), Shostakovich being the exception ($1.14$). A prelude and
its fugue do not differ. These comparisons use the within-orbit shuffle, keep pieces by one composer
together, and are corrected across window sizes (Section~\ref{app:relations},
Figures~\ref{fig:si-rel-types}--\ref{fig:si-rel-repetition}).

\subsection{Symmetric phrases and themes}
\label{sec:res-census}

Figure~\ref{fig:census} shows how often whole phrases and themes are exactly symmetric, by length.
Most are not, about $6\%$ of MTC-ANN phrases, $10\%$ of Essen phrases, and $2\%$ of MTD themes, and
whole melodies almost never are, as counting predicts, since the share of symmetric sequences over a
uniform alphabet falls from $42\%$ at three notes to below $1\%$ at eight. Compared with
rearrangements of their own notes, however, symmetric segments are more common than chance, about
$1.6$ times in the folk corpora and $5$ times among the MTD themes, whether notes or intervals are
rearranged, and increasingly so with length, from no excess in phrases of four to six notes to about
$13$ times in Essen phrases of $13$ to $16$ notes. The kind of symmetry changes with scale. Folk
phrases are symmetric mostly by mirroring ($59$ to $69\%$ of their symmetric phrases), while longer
segments turn to repetition, with translation, a repeated or sequenced cell, accounting for half of
the symmetric MTD themes and for $55$ of the $57$ symmetric Essen melodies. These excesses hold when
segments from the same song or work are resampled together (Section~\ref{app:census}).

\begin{figure}[h]
\centering
\includegraphics[width=\linewidth]{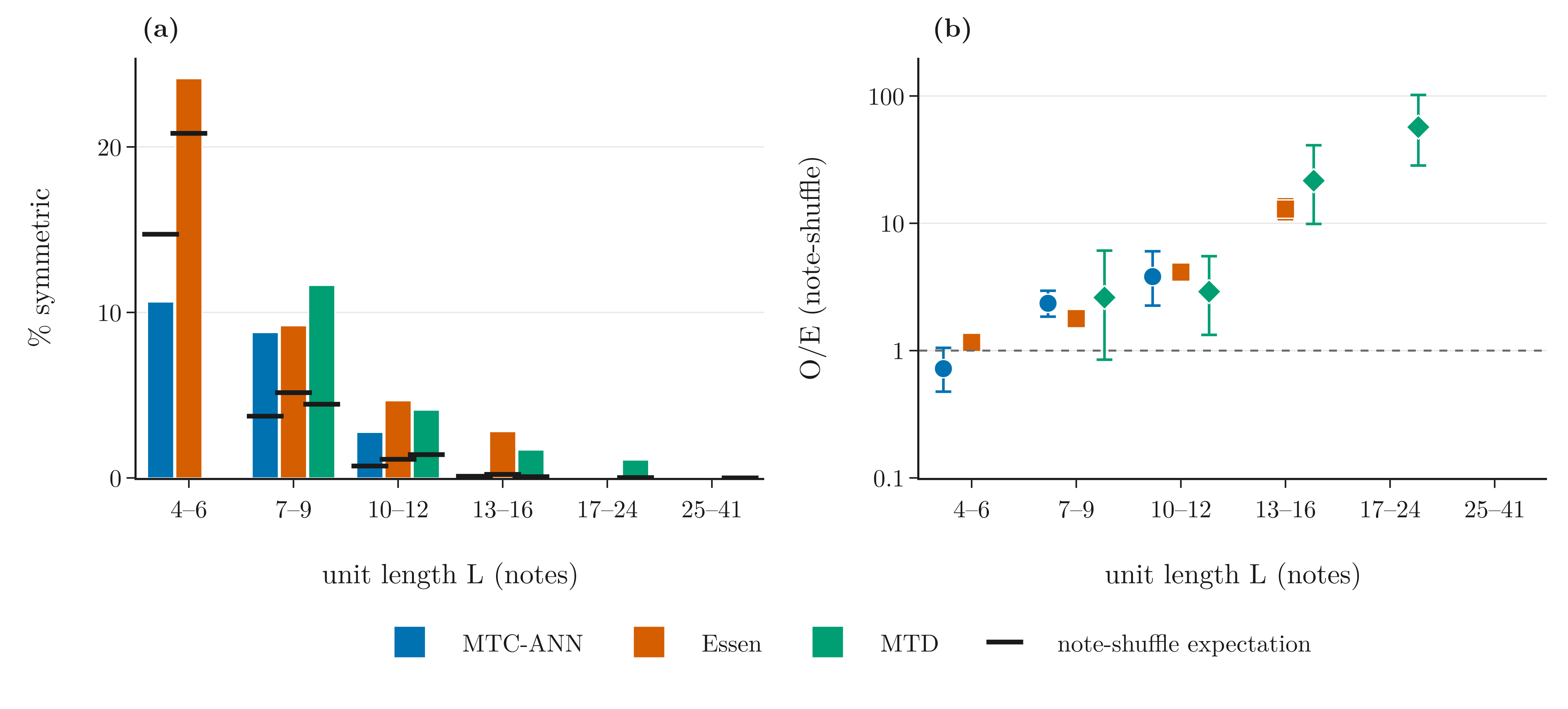}
\caption{Exact symmetry of whole sequences by length bin. \textbf{(a)} Share of symmetric
segments (bars) and the share expected from each segment's own notes in random order (black
tick). \textbf{(b)} Enrichment $O/E = \sum O_L / \sum E_L$ within each bin, with exact
Poisson $95\%$ interval \cite{garwood1936fiducial}. The dotted line marks $O/E = 1$. Bins with
fewer than ten segments, fewer than $0.1$ expected or no observed symmetric segments are omitted
from (b). Per-length estimates are in
Figure~\ref{fig:si-census-null}.}
\label{fig:census}
\end{figure}

\section{Discussion}
\label{sec:discussion}

We developed a framework that reads a melodic line through the orbits and stabilizers of its short
windows under transposition, inversion, retrograde, and cyclic shift, and splits the entropy of the
windows, as statistical mechanics separates macrostates from microstates, into how many motifs a line
draws on and how widely it transforms each. Control sequences fall where this reading interprets them, a
strict ostinato at the crystal limit and a random line near the equipartition ceiling. Real melodies
sit between the two limits. Nearly every piece draws on fewer motif classes than surrogates that keep its
pitches, spectrum, or one-step transitions, a gap that widens with window size
(Section~\ref{sec:res-surrogates}), and from five notes on ($m \geq 5$) it uses a more varied mix of symmetry types
than chance, as if composers reused few ideas and varied them by transformation. The same deficit appears under other readings: block entropy, which uses a smaller group, and permutation entropy, which uses a different one, also fall below the surrogates. Neither splits the count into vocabulary and deployment, so neither can trace a difference to its source; only the orbit decomposition assigns differences between periods to vocabulary and differences between forms to deployment (Section~\ref{app:entropies}).

Most windows and whole sequences are asymmetric, as counting predicts (sorting all the permutations by their symmetry group and selecting randomly a symmetry type with a dice weighted by its proportion), since each added note is
another condition a symmetry must satisfy (e.g., a window of three notes reads the same backwards
when its outer notes agree, one of five notes only when two pairs agree)
\cite{reiner1985enumeration,fripertinger1992enumeration}, so the finding lies in the departure from
counting. In short windows the excess of mirror figures over shuffled lines comes largely from
one-step transitions, stepwise motion that returns, such as neighbor notes and turns. Repetition-based
symmetry, in contrast, exceeds every null and grows from the Baroque to the 20th century, although it
covers only $13$ to $18\%$ of windows at four notes. Whole phrases and themes are rarely symmetric
($2$ to $10\%$) but $1.6$ to $5$ times more often than their own notes allow, by mirroring in folk
songs and by repetition in instrumental themes. Their null reorders the notes, and so breaks the steps.
Short folk mirrors are therefore likely stepwise returns, since folk phrases move by step and trace
arches \cite{huron1996melodic}, whereas palindromes of $13$ to $16$ notes need an exact order and
should survive a null that keeps each phrase's steps
\cite{altschul1985significance,kandel1996shuffling}, a test left for future work.

Periods differ in vocabulary and symmetry types, forms and composers in how they deploy their motifs
(Sections~\ref{sec:res-hor-eta} and~\ref{sec:res-pf}). Baroque lines are the most structured over
three notes and draw on the fewest symmetry subgroups, and later repertoires add repetitive figuration
and a wider range of symmetry types, the same direction as in the harmonic vocabulary and key
diversity of the repertoire \cite{Serra-Peralta2021,gonzalezespinoza2023harmony}. With form held
fixed, deployment separates composers, and Bach's fugues spread each motif over fewer forms than
Shostakovich's. The relations between motifs add how ideas are restated. Composers differ in how often
they transpose, and the preference for the fifth interval weakens from the
Baroque to the 20th century and from Bach's fugues to Shostakovich's, where transposition intervals
are spread almost as by chance. Retrograde-inversion is the one relation that follows the periods, and
since it reads a motif's intervals backwards, turning a major broken chord into a minor one, its
excess fits the alternating major and minor figuration of Baroque and Classical sequences and its
decline parallels that of the fifth. The distributions of the seventeen wallpaper groups separate the
cultures that produced an ornament \cite{erbudak2019sanmarco,erbudak2020ornaments}, while the mix of
window symmetry types changes only mildly over four centuries (Figure~\ref{fig:genome-periods}). In
melody, style seems to lie less in which symmetries occur than in how figuration repeats and how
motifs are related. Counts of symmetry classes also depend on how they are implemented, as successive
studies of the Alhambra show \cite{grunbaum1986moorish,grunbaum2006alhambra}.

Counting also makes precise a long-standing view in music theory, that symmetry operations act mainly as
relations between distinct objects \cite{lewin1987gmit,forte1973structure,Tymoczko2011}, from motivic
development \cite{schoenberg1950style,reti1951thematic} to tune families \cite{bayard1950}, with
self-symmetry rare and marked \cite{messiaen1944technique}. By the orbit--stabilizer theorem an
asymmetric motif has the most forms, so its inversion is a new object, as the major and minor triads,
the chiral trichords, are of each other \cite{rahn1980basic}, and as a chiral molecule is of its
mirror image. Transposition, the relation listeners recognize most readily, is the one favored beyond chance, while inversion and retrograde, recognized poorly~\cite{dowling1972recognition,krumhansl1990cognitive}, sit at the surrogate level. Accidental transform-pairs between asymmetric motifs are already present in the surrogates at their expected rate, and such accidents would enrich every relation alike; the enrichment is instead specific to transposition, which chance among asymmetric figures does not produce. The null that isolates these choices keeps a line's vocabulary and deployment, holds each window's position fixed, and resamples only which form of its orbit appears there, so the test reports how motifs are related with which motifs are used held constant. In Lewin's terms these relations are
intervals between motifs, and the census gives their distribution over a corpus and not within a
single work. The excess symmetry of protein complexes has been explained by generative processes that
favor short descriptions \cite{goodsell2000structural,johnston2022symmetry}, and composition and oral
transmission, which work with short rules such as repeat, answer, and mirror, would likewise predict
more excess symmetry in more formulaic repertoires.

The framework presented in this work reads one extracted line at a time, in equal temperament and without rhythm. The
choice of line has little effect on the conclusions. Two selectors, one reading register and one
information content, return the same line in only $40$ to $47\%$ of pieces with several tracks, yet
$70$ of $74$ tests keep their conclusion with either, and in fugues both lines carry the subject
(Section~\ref{app:extraction-check}). The keyboard corpus is balanced by period, composer, and length
but not by form, with periods and forms assigned from metadata, and all corpora come from digitized
Western repertoire. The census segments were cut by people, who may favor the symmetry easiest to
recognize, and the census corpora differ in content as well as in how they were cut. Relations
between voices and pieces, as in canon and fugue \cite{mann1958study}, serial rows
\cite{babbitt1960twelve}, and tune families \cite{bayard1950}, are the natural next count. Reading
durations as well as pitches would separate rhythmic variants, though by counting alone nearly every
segment would then be asymmetric, so the count of relations between segments would matter more than
their self-symmetry. At six notes, $1{,}122$ of the $10{,}962$ orbits never occur in the corpus,
inversion-type orbits in larger proportion, and whether they are avoided or only undersampled remains
open, as does the rise of windows with several symmetries at once, which stay above every null.

Nothing in the orbit decomposition is specific to pitch classes. It needs an
alphabet, a group of operations that identify a figure with its transforms, and a
window short enough that its patterns recur. Changing any of the three poses the same question of a
new system. For example, with the seven scale degrees and diatonic transposition, a tonal sequence
that moves a figure up by step, turning a major second into a minor one, falls into
one orbit instead of several, so the deficit measured here could be split into
chromatic and diatonic parts. With onset and rest as the alphabet and the rotations of the cycle as the group, a
timeline together with its rotations forms a necklace, the orbit named by the geometry
of rhythm~\cite{toussaint2013geometry}, and a nontrivial stabilizer marks the rhythms
whose onset pattern repeats under a rotation shorter than the cycle. Outside music, time and amplitude reversal act on the ordinal patterns of a
time series~\cite{bandt2002permutation}, and the reverse complement acts on a DNA
strand, where the near-equal counts of complementary bases within one
strand~\cite{rudner1968separation} are exactly within-orbit equipartition, $\eta = 1$
at a window of one. In each case orbits, stabilizers, and relations between windows are
defined exactly as above, and only the nulls must be rebuilt for the question asked.
What carries over is the separation reported above, between how many distinct ideas a
sequence uses and how widely it transforms each, a distinction a single entropy cannot
draw.

\ifrsos
\section*{Ethics}
This article does not present research with ethical considerations.

\section*{Data accessibility}
Sources: Kunst der Fuge MIDI files (\url{https://www.kunstderfuge.com}); OpenWTC from PDMX
\cite{long2024pdmx} (dataset CC BY 4.0; the $24$ scores used are CC0, none flagged with a
license conflict); MTC-ANN 2.0.1 (CC BY-NC-SA 3.0); the Essen Folksong Collection
(CC BY-SA 4.0); and the Musical Theme Dataset, which states no license (the CC BY 4.0 of
\cite{zalkow2020mtd} covers the article) and is used as distributed at
\url{https://www.audiolabs-erlangen.de/resources/MIR/MTD}, not redistributed. Selection lists,
extracted melodic lines, and all derived tables are available to reviewers at submission and will be deposited in a
public archive with a persistent identifier upon acceptance.

The analysis uses two Julia packages written for this study, \texttt{MelodicSymmetry.jl}
(group action, orbits, stabilizers, and entropies;
\url{https://github.com/spiralizing/MelodicSymmetry.jl}) and \texttt{MelodyExtraction.jl}
(melody extraction; \url{https://github.com/spiralizing/MelodyExtraction.jl}), and the scripts
and notebooks that produce every table and figure
(\url{https://github.com/spiralizing/SymmetryEntropiesInMelodies}). All code is released under
the MIT License, is available to reviewers at submission, and will be made public upon acceptance.

\section*{Declaration of AI use}
We used VSCode with GitHub Copilot (Claude Sonnet 4.6) as auto-complete assistant to write most of the code in the functions of the packages \texttt{MelodicSymmetry.jl} and \texttt{MelodyExtraction.jl}, and the Notebooks for the computations and plots in this work. Claude Code with Opus 4.8 was used to build tests and documentation for the packages. Claude Code with Claude Opus 5.5 was used to write most of the code for the statistical
analyses reported in the Supplementary Information, to create and modify the TikZ diagrams, and to organize text and tables in the Supplementary Information. All code and documentation generated with LLMs was reviewed, tested, and modified prior to publication.

\section*{Authors' contributions}
A.G.-E.: [to be completed before submission, e.g. conceptualization, methodology, software, formal
analysis, writing].

\section*{Conflict of interest declaration}
I declare I have no competing interests.

\section*{Funding}
This work received no specific funding.

\section*{Acknowledgements}
I thank Gustavo Mart\'inez-Mekler and Eduardo Garc\'ia Castrej\'on for valuable feedback during the conceptual development of this work. I am grateful to the open-source and open-science communities whose freely shared software, data, and documentation made this work possible.
\else
\section*{Acknowledgments}
I thank Gustavo Mart\'inez-Mekler and Eduardo Garc\'ia Castrej\'on for valuable feedback during the conceptual development of this work. I am grateful to the open-source and open-science communities whose freely shared software, data, and documentation made this work possible. 

\section*{Data Availability}
Sources: Kunst der Fuge MIDI files (\url{https://www.kunstderfuge.com}); OpenWTC from PDMX
\cite{long2024pdmx} (dataset CC BY 4.0; the $24$ scores used are CC0, none flagged with a
license conflict); MTC-ANN 2.0.1 (CC BY-NC-SA 3.0); the Essen Folksong Collection
(CC BY-SA 4.0); and the Musical Theme Dataset, which states no license (the CC BY 4.0 of
\cite{zalkow2020mtd} covers the article) and is used as distributed at
\url{https://www.audiolabs-erlangen.de/resources/MIR/MTD}, not redistributed. Selection lists,
extracted melodic lines, and all derived tables will be deposited in a public archive with a
persistent identifier upon journal publication and are available from the author on request.

\section*{Code Availability}
The analysis uses two Julia packages written for this study, \texttt{MelodicSymmetry.jl}
(group action, orbits, stabilizers, and entropies;
\url{https://github.com/spiralizing/MelodicSymmetry.jl}) and \texttt{MelodyExtraction.jl}
(melody extraction; \url{https://github.com/spiralizing/MelodyExtraction.jl}), and the scripts
and notebooks that produce every table and figure
(\url{https://github.com/spiralizing/SymmetryEntropiesInMelodies}). All code is released under
the MIT License, will be made public upon journal publication, and is available from the author on
request.

\section*{Use of LLM disclosure}

We used VSCode with GitHub Copilot (Claude Sonnet 4.6) as auto-complete assistant to write most of the code in the functions of the packages \texttt{MelodicSymmetry.jl} and \texttt{MelodyExtraction.jl}, and the Notebooks for the computations and plots in this work. Claude Code with Opus 4.8 was used to build tests and documentation for the packages. Claude Code with Claude Opus 5.5 was used to write most of the code for the statistical
analyses reported in the Supplementary Information, to create and modify the TikZ diagrams, and to organize text and tables in the Supplementary Information. All code and documentation generated with LLMs was reviewed, tested, and modified prior to publication.
\fi

\bibliographystyle{unsrt}
\bibliography{references}

%% file: SI.tex
\clearpage
\setcounter{section}{0}\setcounter{subsection}{0}\setcounter{figure}{0}
\setcounter{table}{0}\setcounter{equation}{0}
\renewcommand{\thesection}{S\arabic{section}}
\renewcommand{\thefigure}{S\arabic{figure}}
\renewcommand{\thetable}{S\arabic{table}}
\renewcommand{\theequation}{S\arabic{equation}}
\renewcommand{\theHsection}{S\arabic{section}}
\renewcommand{\theHfigure}{S\arabic{figure}}
\renewcommand{\theHtable}{S\arabic{table}}
\renewcommand{\theHequation}{S\arabic{equation}}
\begin{center}{\Large\bfseries Supplementary Information}\end{center}
\addcontentsline{toc}{section}{Supplementary Information}
\startcontents[si]
\begingroup
\setcounter{tocdepth}{2}
\printcontents[si]{}{1}{\noindent{\large\bfseries Contents}\par\medskip}
\endgroup
\clearpage
\section{Mathematical details}
\label{app:decomposition}

\subsection{Commutativity of tonal and positional components}
\label{app:commute}

\begin{theorem}[Commutativity of tonal and positional components]
\label{thm:commute}
For any $g = (g_1, g_2) \in G$ and any $x \in \Z_{12}^m$,
$g_2 \cdot (g_1 \cdot x) = g_1 \cdot (g_2 \cdot x)$.
\end{theorem}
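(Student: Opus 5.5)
The plan is to write both actions coordinatewise and compare the two compositions entry by entry. For $x \in \Z_{12}^m$, the tonal component $g_1 \in D_{12}$ acts pointwise on values:
\[
(g_1 \cdot x)_i = g_1(x_i),
\]
where $g_1 = f^a r^k$ is the affine map $p \mapsto (-1)^a p + c \pmod{12}$ for a suitable constant $c$. The positional component $g_2 \in D_m$ acts on indices only:
\[
(g_2 \cdot x)_i = x_{g_2^{-1}(i)},
\]
as in the definition of $x \ast g$ following Eq.~\eqref{eq:generators}. The key observation is that each component touches a different coordinate: $g_1$ touches the values, $g_2$ the positions.

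First I compute $g_2 \cdot (g_1 \cdot x)$ at position $i$. Let $y = g_1 \cdot x$. Then
\[
\bigl(g_2 \cdot y\bigr)_i = y_{g_2^{-1}(i)} = g_1\bigl(x_{g_2^{-1}(i)}\bigr).
\]
Second I compute $g_1 \cdot (g_2 \cdot x)$ at position $i$. Let $z = g_2 \cdot x$, so $z_i = x_{g_2^{-1}(i)}$. Then
\[
\bigl(g_1 \cdot z\bigr)_i = g_1(z_i) = g_1\bigl(x_{g_2^{-1}(i)}\bigr).
\]
The two expressions coincide for every $i \in \{1,\dots,m\}$, so the sequences are equal. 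Both also equal $(x \ast g)_i$, which confirms that the product action is well defined independently of the order of application.

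The only step needing care is the convention for the positional action, $g_2^{-1}(i)$ versus $g_2(i)$. Whichever convention is fixed, it depends only on $g_2$ and on the index, never on the values. Likewise $g_1$ is a single function $\Z_{12} \to \Z_{12}$ applied identically at every index. Hence the argument does not depend on the generators $f, r, R, F$ or on the form of the maps: it holds for any permutation of indices and any map of the alphabet. I would note this explicitly, since it also shows that $G$ acts as the direct product $D_{12} \times D_m$.
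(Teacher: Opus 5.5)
Your proof is correct and follows the paper's argument: both compute the $i$-th entry of each composition and get $g_1\bigl(x_{g_2^{-1}(i)}\bigr)$, because $g_1$ acts only on values and $g_2$ only on indices. Your remark that the argument holds for any index permutation and any alphabet map is a valid generalization that the paper leaves implicit.
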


\begin{proof}
$g_1 \in D_{12}$ acts on values and $g_2 \in D_m$ on indices, so
$\bigl(g_1 \cdot (g_2 \cdot x)\bigr)_i = g_1 \cdot x_{g_2^{-1}(i)} = \bigl(g_2 \cdot (g_1 \cdot x)\bigr)_i$
for every $i$.
\end{proof}

\subsection{Entropy chain rule and the intra-orbit bound}

For the projection $\pi: \Z_{12}^m \to X/G$, $w \mapsto \hat o$, and a window $W$ drawn from the
empirical distribution,
\begin{equation}
\label{eq:app-chain}
H(W) \;=\; H(\pi(W)) \;+\; H(W \mid \pi(W)),
\end{equation}
with $H(W) = \Hmic$ (surface variety), $H(\pi(W)) = \Hmac = \Hor$ (orbit variety), and
$H(W \mid \pi(W)) = \langle H_{\text{intra-orbit}} \rangle$ (within-orbit variety),
\begin{equation}
\label{eq:app-intraorbit}
\langle H_{\text{intra-orbit}} \rangle
\;=\;
\sum_{o \in X/G} P(o) \, H\!\left(P_{\text{within}}(\,\cdot \mid o)\right),
\qquad
P_{\text{within}}(w \mid o) = \frac{P(w)}{P(o)}
\text{ for } w \in o.
\end{equation}
$P_{\text{within}}(\cdot \mid o)$ has support $|\Orb(o)|$, so
\begin{equation}
\label{eq:app-perorbit}
H\!\left(P_{\text{within}}(\,\cdot \mid o)\right) \;\le\; \log_2 |\Orb(o)| \;=\; \log_2 \frac{|G|}{|\Stab(o)|} \;=\; S(o),
\end{equation}
and, averaging over $P(o)$,
\begin{equation}
\label{eq:app-globalceiling}
\langle H_{\text{intra-orbit}} \rangle \;\le\; \langle S \rangle,
\end{equation}
with equality iff every visited orbit is uniformly deployed.

\subsection{Closed form for $|\Fix(g)|$ in the orbit-counting lemma}

A window is fixed by $g = (g_1, g_2)$ iff it is constant on each cycle of $g_2$ with a value
fixed by $g_1$, so
\begin{equation}
\label{eq:app-cycle}
|\Fix(g)| \;=\; \prod_{C \in \text{cycles}(g_2)} |\Fix_{\Z_{12}}(g_1)|
              \;=\; |\Fix_{\Z_{12}}(g_1)|^{c(g_2)},
\end{equation}
with $c(g_2)$ the number of cycles of $g_2$, which in Eq.~\eqref{eq:orbit-counting} gives $K(m)$ in
closed form \citeSI{SI:ibragimova2025algebraic}.

\subsection{Group-action conventions and canonicalization}
\label{app:conventions}

The generators of Eq.~\eqref{eq:generators} satisfy $r^{12} = f^2 = e$ and $f r f = r^{-1}$,
so $|D_{12}| = 24$; likewise $R^m = F^2 = e$ and $F R F = R^{-1}$ in $D_m$. Composite
elements are read by functional composition, $r^k$ first and then $f$,
\begin{equation}
\label{eq:fr-convention}
f r^k \cdot p \;=\; -p - k \bmod 12,
\end{equation}
that is, inversion through the pitch axis at $-k/2 \bmod 6$.
Pitch classes are indexed from $C = 0$. Ibragimova and Nehaniv
\citeSI{SI:ibragimova2025algebraic} index them $1$--$12$ with $\mathrm{Inv}(m) = 13 - m$, the same
$D_{12}$ action with another origin.

\emph{Canonicalization.} The $48m$ images $w_i \ast g$ (tonal part $f^a r^k$, then $F^b R^j$, in
either order by Theorem~\ref{thm:commute}) are compared lexicographically, and the minimum is the
canonical representative $\hat o_i$, which assigns and labels the orbit in one pass. $K(m)$ up to
$m = 8$ is in Table~\ref{tab:Km}.

\begin{table}[h]
\centering
\caption{Total orbit count $K(m)$ for $G = D_{12} \times D_m$.}
\label{tab:Km}
\small
\begin{tabular}{rrrrrrr}
\toprule
$m$    &  3 &   4 &      5 &       6 &        7 &           8 \\
\midrule
$K(m)$ & 19 & 163 & 1{,}110 & 10{,}962 & 107{,}509 & 1{,}126{,}566 \\
\bottomrule
\end{tabular}
\end{table}

\subsection{Stabilizer levels and retrograde-inversion}
\label{app:stab-levels}

A stabilizer is read at four levels (Table~\ref{tab:stab-levels}), the exact subgroup, its
conjugacy class, shared by all windows of an orbit ($6$ classes at $m = 3$, $21$ at $m = 4$), its
abstract family ($\Z_2$, $\Z_2 \times \Z_2$, $D_3$, \ldots), and its class, used in the main text.
Conjugation in $G$ acts on each dihedral factor separately and keeps rotations and reflections
apart, so the class is a function of the conjugacy class and hence of the orbit. Class and family
are not nested ($58$ subgroups, $19$ orbits at $m = 3$).

\begin{table}[h]
\centering

\caption{Stabilizers of four motifs at $m = 3$ ($|G| = 144$), with the number of
surface forms $|G|/|\Stab|$. Elements are written $f^a r^k F^b R^j$ (inversion,
transposition, retrograde, cyclic shift).}
\label{tab:stab-levels}
\small
\begin{tabular}{l l c l l r}
\toprule
\textbf{Motif} & \textbf{Non-identity elements of $\Stab$} & $|\Stab|$ &
\textbf{Family} & \textbf{Class} & \textbf{Forms} \\
\midrule
$(0,4,7)$ & none & 1 & trivial & asymmetric & 144 \\
$(0,1,2)$ & $f r^{10} F$ & 2 & $\Z_2$ & retrograde-inversion & 72 \\
$(0,6,0)$ & $F$,\; $f$,\; $fF$ & 4 & $\Z_2 \times \Z_2$ & multiple & 36 \\
$(0,4,8)$ & $r^4 R$,\; $r^8 R^2$,\; $f F R^2$,\; $f r^4 F$,\; $f r^8 F R$ & 6 & $D_3$ & multiple & 24 \\
\bottomrule
\end{tabular}
\end{table}

Retrograde windows detect return figures (arches, neighbors, turns), and RI windows inversional
balance, $x_i + x_{\rho(i)} = \text{const}$. Retrograde-inversion reverses the order of the
intervals, so every constant-step window, such as $(0,1,2)$, is RI-type at every $m$
(Figure~\ref{fig:ri-contour}).

\begin{figure}[h]
\centering
\begin{tikzpicture}[every node/.style={font=\scriptsize}]
  \foreach \ox in {0, 1.9, 3.8, 5.7} {
    \draw[->, kink!35] (\ox, 0) -- ({\ox+1.35}, 0);
    \draw[->, kink!35] (\ox, 0) -- (\ox, 1.45);
  }
  \node[kink!60, below]      at (1.25, 0)   {$t$};
  \node[kink!60, left]       at (0, 1.35)   {$x$};
  \draw[thick, kink] (0.1,0.6) -- (0.45,0.6) -- (0.85,1.2) -- (1.2,1.2);
  \begin{scope}[shift={(1.9,0)}]
    \draw[kindigo, dashed] (0.65, 0.1) -- (0.65, 1.35);
    \draw[thick, kink] (0.1,1.2) -- (0.45,1.2) -- (0.85,0.6) -- (1.2,0.6);
  \end{scope}
  \begin{scope}[shift={(3.8,0)}]
    \draw[kterra, dashed] (0.05, 0.6) -- (1.3, 0.6);
    \draw[thick, kink] (0.1,0.0) -- (0.45,0.0) -- (0.85,0.6) -- (1.2,0.6);
  \end{scope}
  \begin{scope}[shift={(5.7,0)}]
    \draw[kink!35, dashed] (0.1,0.0) -- (0.45,0.0) -- (0.85,0.6) -- (1.2,0.6);
    \draw[->, kteal, thick] (0.28, 0.08) -- (0.28, 0.52);
    \draw[thick, kink] (0.1,0.6) -- (0.45,0.6) -- (0.85,1.2) -- (1.2,1.2);
  \end{scope}
  \draw[->, thick, kindigo] (1.4, 0.7) -- (1.8, 0.7);
  \draw[->, thick, kterra]  (3.3, 0.7) -- (3.7, 0.7);
  \draw[->, thick, kteal]   (5.2, 0.7) -- (5.6, 0.7);
  \node[kink]    at (0.65, -0.35) {start};
  \node[kindigo] at (2.55, -0.35) {retrograde};
  \node[kterra]  at (4.45, -0.35) {inversion};
  \node[kteal]   at (6.35, -0.35) {transposition};
  \node[kink]    at (6.35, -0.65) {$=$ start};
\end{tikzpicture}
\caption{Retrograde-inversion returns a flat--rise--flat shape or {\em contour} to
itself. Retrograde flips the contour in time (dashed vertical axis),
inversion flips it in pitch (dashed horizontal axis), and transposition
moves it back into place. The three operations together amount to a
half-turn rotation of the contour, so any contour that looks the same
rotated by $180^\circ$ is fixed by retrograde-inversion. Every
constant-step sequence (chromatic when step is one) is such a contour.}
\label{fig:ri-contour}
\end{figure}

\section{Entropy estimation: the NSB estimator}
\label{app:nsb}

The plug-in estimate $\hat H_{\mathrm{MLE}} = -\sum_o \hat P(o) \log_2 \hat P(o)$, with
$\hat P(o) = n(o)/N$, is biased downward when $N \lesssim K(m)$ (Table~\ref{tab:Km}) and returns $0$
for a piece that visits one orbit. We use the Nemenman--Shafee--Bialek (NSB) estimator
\citeSI{SI:nsb2002,SI:nemenman2011}. A symmetric Dirichlet prior of concentration $\beta$ over the
$K = K(m)$ orbits,
\begin{equation}
\label{eq:nsb-prior}
p(P \mid \beta) = \frac{\Gamma(K\beta)}{\Gamma(\beta)^{K}}
   \prod_{o=1}^{K} P(o)^{\beta - 1},
\end{equation}
gives the Dirichlet--multinomial evidence
\begin{equation}
\label{eq:nsb-evidence}
\mathcal{L}(\beta) = \frac{\Gamma(K\beta)}{\Gamma(N + K\beta)}
   \prod_{o=1}^{K} \frac{\Gamma\bigl(n(o) + \beta\bigr)}{\Gamma(\beta)}
\end{equation}
and, at fixed $\beta$, the posterior mean entropy in nats \citeSI{SI:wolpert1995}
\begin{equation}
\label{eq:nsb-wolpert}
\langle H \rangle_{\beta}
  = \psi_0(N + K\beta + 1)
  - \sum_{o=1}^{K} \frac{n(o) + \beta}{N + K\beta}\,
      \psi_0\bigl(n(o) + \beta + 1\bigr),
\end{equation}
with $\psi_0$ the digamma function. The hyperprior
\begin{equation}
\label{eq:nsb-hyperprior}
\rho(\beta) \propto \frac{d\xi}{d\beta}, \qquad
\xi(\beta) = \psi_0(K\beta + 1) - \psi_0(\beta + 1),
\end{equation}
makes the induced prior on the entropy approximately uniform on $[0, \log_2 K]$, and the estimate
is the posterior mean
\begin{equation}
\label{eq:nsb}
\hat H_{\mathrm{NSB}}
  = \frac{\displaystyle\int d\beta\; \rho(\beta)\,
          \mathcal{L}(\beta)\, \langle H \rangle_{\beta}}
         {\displaystyle\int d\beta\; \rho(\beta)\, \mathcal{L}(\beta)},
\end{equation}
converted to bits by $1/\ln 2$. It does not collapse to $0$ under sparse sampling and gives a
posterior variance. Since the correction depends on the alphabet size, the support is fixed at the
exact orbit count $K(m)$ (Eq.~\eqref{eq:orbit-counting}), not at the number of observed orbits.

\section{Orbit entropy among other entropy measures}
\label{app:entropies}

\textbf{Readings of one count.} Each reading is the entropy of a quotient of the window space
(Table~\ref{tab:si-entropies-def}), the block ($m$-gram) entropy of pitch classes ($\Hmic$,
Eq.~\eqref{eq:decomp}), the interval $m$-gram entropy (up to transposition, as in $n$-gram models of
melody \citeSI{SI:conklin1995multiple}), and $\Hor$ (up to $G$). Permutation entropy
\citeSI{SI:bandt2002permutation} counts rank orders of pitch height, with repeated notes as ties
(weak orderings of $m$ items). The first three are nested, $H_{\text{block}} \ge
H_{\text{interval}} \ge \Hor$. The entropy rate $h_m = H_{\text{block}}(m) - H_{\text{block}}(m-1)$
is the conditional entropy of an order-$(m-1)$ model \citeSI{SI:crutchfield2003regularities}. Each
entropy is estimated with NSB over its own $K$.

\begin{table}[htbp]
\centering\footnotesize
\caption{Readings of the windows of a line and their numbers of states $K$.}
\label{tab:si-entropies-def}
\setlength{\tabcolsep}{4pt}
\begin{tabular}{l l l r}
\toprule
Reading & Windows equal when & Group & $K$ at $m = 3, 4, 5, 6$ \\
\midrule
block ($m$-gram) & identical & $\{e\}$ & 1{,}728; 20{,}736; 248{,}832; 2{,}985{,}984 \\
interval $m$-gram & equal up to transposition & $\Z_{12}$ & 144; 1{,}728; 20{,}736; 248{,}832 \\
orbit & equal up to T, I, R, cyclic shift & $D_{12} \times D_m$ & 19; 163; 1{,}110; 10{,}962 \\
permutation & same rank order (ties kept) & order-preserving maps & 13; 75; 541; 4{,}683 \\
\bottomrule
\end{tabular}
\end{table}

\textbf{Hierarchy and sampling.} Medians at $m = 4$ are $8.38$ bits (block), $7.25$ (interval),
$5.47$ (orbit), and $5.08$ (permutation) (Figure~\ref{fig:si-entropies}a). Transposition removes
$1.1$--$1.7$ bits from the block entropy and the rest of $G$ another $1.2$--$2.3$ ($m = 3$--$6$). At
$m = 4$ a typical piece visits $0.8\%$ of the pitch-class 4-grams, $7\%$ of the interval 3-grams,
$37\%$ of the orbits, and $65\%$ of the rank patterns, with NSB corrections of $1.02$, $0.62$,
$0.22$, and $0.15$ bits (Figure~\ref{fig:si-entropies}b). The group keeps the entropy estimable as
windows grow.

\begin{figure}[htbp]
\centering
\includegraphics[width=\linewidth]{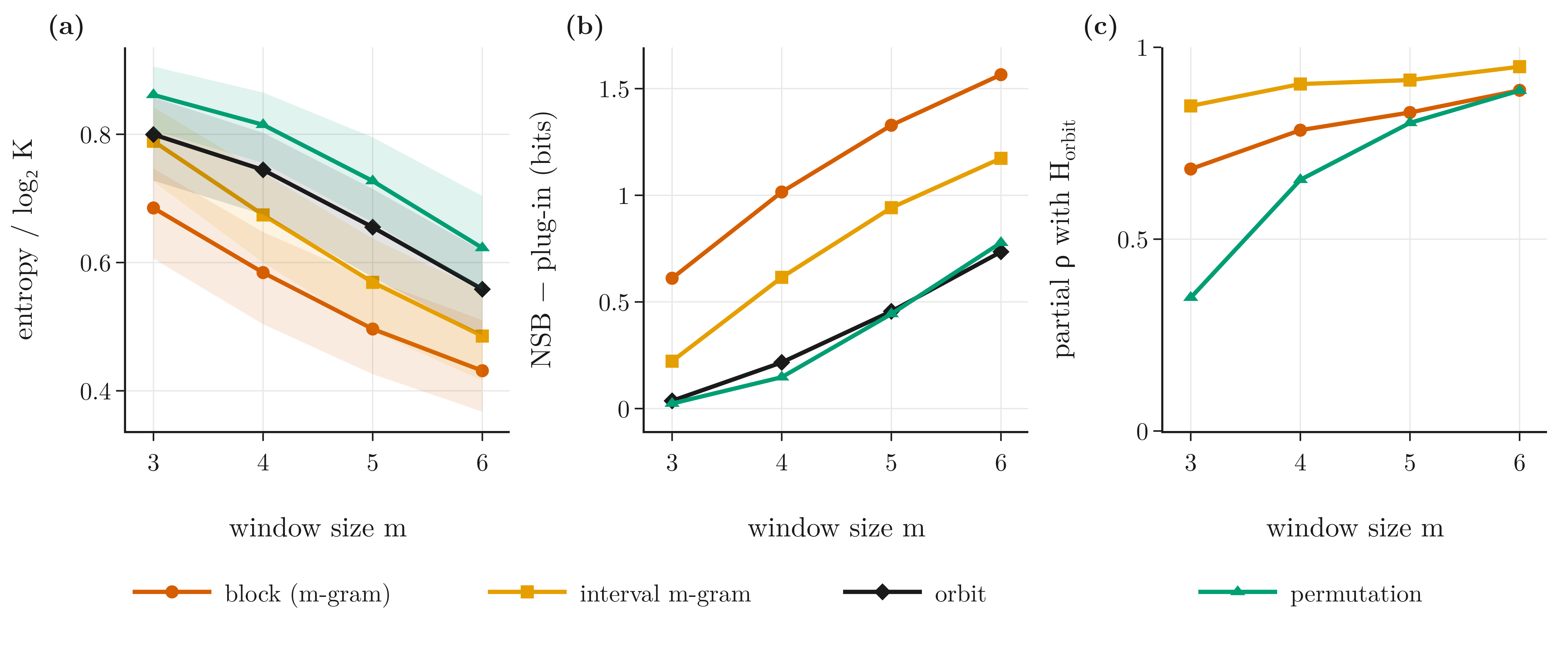}
\caption{Orbit entropy among other entropy measures, keyboard corpus. \textbf{(a)} Median entropy
normalized by $\log_2 K$ (bands: interquartile range), \textbf{(b)} NSB correction to the plug-in
estimate, and \textbf{(c)} partial rank correlation with $\Hor$ (length removed), for windows of
$m = 3$--$6$ notes.}
\label{fig:si-entropies}
\end{figure}

\textbf{Agreement and separation.} Partial rank correlations with $\Hor$ (length removed) are
$0.85$--$0.95$ (interval), $0.68$--$0.89$ (block), and $0.35$--$0.89$ (permutation, $m = 3 \to 6$)
(Figure~\ref{fig:si-entropies}c). Tests are in Table~\ref{tab:si-entropies}. Block entropy is $\Hor$
plus the deployment term (Eq.~\eqref{eq:decomp}), so its form and fugue differences come from
deployment, which $\eta$ isolates. Neither $\eta$ nor the relations between motifs
(Section~\ref{app:relations}) has a counterpart in readings that discard the forms within an orbit.

\begin{table}[htbp]
\centering\footnotesize
\caption{Entropy readings compared at $m = 4$, keyboard corpus (Prelude \& Fugue set for the fugue
trend). Periods: Kruskal--Wallis with composers permuted between periods (Holm over $m$; values at
$m = 3$ where they differ); ordered trend as Spearman $\rho$. Forms: Kruskal--Wallis with labels
permuted within composers. Variance: adjusted $R^2$ of ranks for period, composer beyond period,
and form beyond composer. Fugue trend: Jonckheere--Terpstra, Bach to Shostakovich. Last column:
share of pieces significantly below / above their IAAFT surrogates after Benjamini--Hochberg control.
Bold: $p < 0.05$. All values for $m = 3$--$6$ in \texttt{Data/entropy\_comparison\_stats.csv}; permutations
are independent of Table~\ref{tab:si-separation}, so $p$ may differ in the third decimal.}
\label{tab:si-entropies}
\setlength{\tabcolsep}{3pt}
\resizebox{\linewidth}{!}{%
\begin{tabular}{l l c c c c}
\toprule
Measure & Periods $p$ ($\rho$) & Forms $p$ & Variance & Fugue $p$ & Below / above (\%) \\
\midrule
block & 1.00 ($+0.13$) & \textbf{0.026} & 0.02 / 0.31 / 0.02 & \textbf{0.021} & 98 / 0 \\
interval & 1.00 ($+0.11$) & 0.437 & 0.02 / 0.28 / 0.01 & 0.526 & 99 / 0 \\
orbit & 0.187 ($+0.16$); \textbf{0.001} at $m = 3$ & 1.00 & 0.04 / 0.22 / 0.01 & 0.864 & 97 / 0 \\
permutation & 1.00 ($+0.03$) & 0.144; \textbf{0.040} at $m = 3$ & 0.00 / 0.18 / 0.01 & 0.376 & 91 / 1 \\
rate & 1.00 ($+0.02$) & \textbf{0.001} & 0.00 / 0.34 / 0.02 & 0.080 & 96 / 0 \\
$\eta$ & 1.00 ($-0.04$) & \textbf{0.001} & 0.00 / 0.19 / 0.02 & \textbf{0.002} & 27 / 42 \\
$H_{\text{stab}}$ & 0.869; \textbf{0.005} at $m = 3$ & 0.435; \textbf{0.012} at $m = 3$ & 0.00 / 0.11 / 0.01 & \textbf{0.011} & 9 / 23 \\
\bottomrule
\end{tabular}}
\end{table}

\section{Data}
\label{app:data}

\subsection{Construction of the solo keyboard corpus}
\label{app:corpus-build}

The source is the Kunst der Fuge collection of about $18{,}000$ MIDI files
\citeSI{SI:kunstderfugue}, the same collection used in earlier corpus studies
\citeSI{SI:Gonzalez-Espinoza2017,SI:Gonzalez-Espinoza2020,SI:Park2020,SI:Serra-Peralta2021}. It was
curated to balance periods, composers, and melody lengths in three scripted stages
(\texttt{corpus\_build/}).
\textbf{Candidates:} only piano or harpsichord programs (GM 0--7); no vocal, ensemble,
organ, four-hand, or arrangement words in the title; at most two note-bearing channels or a
keyboard-genre title (keyboard fugues are often encoded one voice per channel); one file
per duplicated title. \textbf{Periods:} median year of composition per composer, binned at 1750, 1820, and 1890; Schubert and Hummel placed in
Classical. \textbf{Pieces:} multi-movement files split at movement-index markers
(\textit{(2)}, \textit{II}, \textit{No.~2}), else at a silence of $\ge 8$ beats coinciding
with a meter or key change, with $\ge 300$ notes per segment (Beethoven Op.~106 $\to$ 4,
Chopin Op.~25 $\to$ 12, Liszt's Sonata $\to$ 1). Movements share a work only in
multi-movement works; each piece of a collection is its own work. \textbf{Sample:}
melodies extracted with seeded \texttt{:hybrid}; $100\le L\le 1198$; $\le 2$ pieces per
work; composers with $\ge 10$ eligible pieces; equal quotas per period in six log-$L$ bins,
round-robin over composers, $\le 30$ per composer, fixed seed
(Table~\ref{tab:si-composers}). \textbf{Limitations:} heuristic classification; dating per composer; form labels unknown for 40\% of pieces;
Telemann, Mattheson, Gade, and Krebs end below ten pieces. \textbf{Curation check:} on an earlier,
unbalanced set of 1{,}131 solo piano pieces $\Delta H < 0$ in $99$--$99.6\%$ of pieces at
$m = 3$--$6$, as on the present corpus ($98.9$--$99.9\%$).

\begin{table}[h]
\centering\small
\caption{Composers (pieces).}
\label{tab:si-composers}
\begin{tabular}{l p{12.5cm}}
\toprule
Baroque & Bach 24, Handel 22, D.~Scarlatti 22, Couperin 21, d'Anglebert 20, Dandrieu 20,
Froberger 17, Corrette 12, Jollage 12, Jacquet de La Guerre 11, Rameau 11, Krebs 9,
Mattheson 8, Telemann 4 \\
Classical & Beethoven 30, Mozart 30, Haydn 28, Schubert 23, A.-L.~Couperin 22, Cramer 20,
Clementi 18, L.~Mozart 17, Hummel 14, Soler 11 \\
Romantic & Chopin 20, Alkan 19, Raff 19, Schumann 19, Brahms 18, Faur\'e 18, Liszt 18,
Tchaikovsky 17, Saint-Sa\"ens 13, Alb\'eniz 12, Dvo\v{r}\'ak 11, Grieg 11, Mendelssohn 10,
Gade 8 \\
20th century & Busoni 30, Debussy 30, Scriabin 30, Joplin 20, Pozzoli 17, Ravel 17,
Shostakovich 17, Satie 16, Bart\'ok 13, Godowsky 13, Jan\'a\v{c}ek 10 \\
\bottomrule
\end{tabular}
\end{table}

\begin{table}[htbp]
\centering
\caption{Solo keyboard corpus by period. ``Top'' is the share of the period's pieces by its
most represented composer; Schubert and Hummel are placed in the Classical period.}
\label{tab:si-corpus}
\small
\begin{tabular}{l c r r r c r}
\toprule
\textbf{Period} & \textbf{Dates} & \textbf{Composers} & \textbf{Pieces} & \textbf{Works} &
\textbf{$L$, median (IQR)} & \textbf{Top} \\
\midrule
Baroque      & before 1750 & 14 & 213 & 208 & 301 (186--577) & 11\% \\
Classical    & 1750--1819  & 10 & 213 & 193 & 322 (184--584) & 14\% \\
Romantic     & 1820--1889  & 14 & 213 & 209 & 317 (173--561) &  9\% \\
20th century & from 1890   & 11 & 213 & 212 & 327 (179--563) & 14\% \\
\midrule
Total        &             & 49 & 852 & 822 & 322 (179--571) &      \\
\bottomrule
\end{tabular}
\end{table}

\subsection{Prelude \& Fugue set}
\label{app:pf}

108 pieces: Bach WTC~I from OpenWTC in PDMX \citeSI{SI:openwtc,SI:long2024pdmx}, split at the
fugue's first measure (no clean split edition in Kunst der Fuge); Mendelssohn Op.~35 and
Shostakovich Op.~87 from Kunst der Fuge. Extraction is fixed to seeded \texttt{:hybrid}
(the method changes $\Hor$ by $\approx 0.8$ bits). In Op.~35 No.~6 (prelude) a 28-note
secondary track was removed (melody 15 $\to$ 292 notes); fugue results are unchanged.

\subsection{Census corpora}
\label{app:census}

Essen \citeSI{SI:essen1995} and MTC-ANN~2.0.1 \citeSI{SI:vankranenburg2016mtcann} are read through
MTCFeatures \citeSI{SI:vankranenburg2019mtcfeatures}; a phrase is a run of notes with one
phrase index, kept if $4\le L\le 16$ (1{,}890 and 44{,}765 phrases). MTD
\citeSI{SI:zalkow2020mtd}: 2{,}067 themes from MusicXML, top note under simultaneities, all
$L$ (no license stated; used as distributed, not redistributed).
Boundaries were set by people, Essen encoders mostly following text lines
\citeSI{SI:hothker2001segmentation}, MTC transcribers rhyme and syntax
\citeSI{SI:vankranenburg2014mtc}, and MTD editors complete themes.
 \textbf{Null:} for each segment, the exact probability $p_i$ that a random
rearrangement of its notes (or of its intervals, from the same first note) has a
non-trivial stabilizer under $D_{12}\times D_L$, by counting fixed arrangements per
prime-order element (first-order bound above $2\times10^6$ arrangements: 113 MTD segments,
overcount $<1\%$). Enrichment $O_L/E_L$ with $E_L=\sum_{L_i=L}p_i$, per exact $L$,
smoothed by a Gamma--Poisson model with parity-specific prior means (the null itself
favors even lengths, so expected counts are computed per exact length before pooling)
\citeSI{SI:clayton1987empirical}, shown where $E_L\ge 0.1$. The main-text figure pools exact lengths into bins
($4$--$6$, $7$--$9$, $10$--$12$, $13$--$16$, $17$--$24$, $25$--$41$), $O/E = \sum O_L/\sum
E_L$ with an exact Poisson interval \citeSI{SI:garwood1936fiducial}; pooling replaces the
smoothing, whose estimates fall below one at lengths with no symmetric segment and small $E_L$
(MTD, $L = 9$ and $11$).

\section{Melody extraction}

\subsection{Melody extraction methods}
\label{app:extraction}

Skyline \citeSI{SI:uitdenbogerd1998manipulation} keeps the highest note at each onset and truncates
it at the next onset. The revised variant \citeSI{SI:chai2001melody} first drops notes mostly
covered by other voices. \texttt{:hybrid} \citeSI{SI:ozcan2005melody} clusters the note groups (MIDI
tracks, MusicXML parts) by mean pitch with $k$-means, keeps the highest cluster, and applies
skyline. \texttt{:complexity} \citeSI{SI:madsen2007complexity} keeps the group with the largest
summed entropy of pitch, interval, and duration, regardless of register, and applies skyline. All
analyses use \texttt{:hybrid}, and \texttt{:complexity} is the check of
Section~\ref{app:extraction-check}. Lines are not transposed to a common key (e.g.\ with the Center
of Effect \citeSI{SI:chew2000spiral}). $\Hor$, $\eta$, and the symmetry types are invariant under a
global transposition, and a local key change alters at most $m - 1$ windows.

\subsection{Implementation}

The pipeline is implemented in Julia with \texttt{MelodicSymmetry.jl} and
\texttt{MelodyExtraction.jl}, developed for this study and tested by a sweep over every group
element (Theorem~\ref{thm:commute}), hand-computed stabilizers, and published worked examples
\citeSI{SI:ibragimova2025algebraic}. Stabilizer and canonical representative cost $O(48m^2)$ per
window ($48m$ group elements, $O(m)$ each), so a line of $N$ windows costs $O(48Nm^2)$; $K(m)$ is
exact in \texttt{BigInt} through Eq.~\eqref{eq:app-cycle}.

\subsection{Extraction consistency: \texttt{:hybrid} versus \texttt{:complexity}}
\label{app:extraction-check}
\textbf{Agreement.} Each piece was also reduced with \texttt{:complexity} and with plain skyline,
which agree by construction for the $393$ single-track keyboard pieces. In the $459$ multi-track
pieces \texttt{:hybrid} and \texttt{:complexity} return the identical line in $40\%$ (two tracks
$61\%$, four or more $21\%$, dances $63\%$, contrapuntal and free $27$--$30\%$), and in $47\%$ of
the Prelude \& Fugue set (fugues $43\%$, preludes $52\%$). Disagreeing selectors keep a different
track and share no notes in $73\%$ of cases.

\textbf{Distance between the two lines.} The Jensen--Shannon divergence of the orbit
distributions is scaled from $0$, between the two halves of the extracted line, to $1$, to another
piece by the same composer, at equal length. Where the selectors disagree, the other line lies at
$0.54$ at $m = 4$ (dances $0.86$, cyclic $0.64$, contrapuntal $-0.33$), at $-0.19$ to $0.02$ in
fugues ($m = 3$--$6$), whose voices share the subject, and at $0.69$--$0.75$ in preludes, where it
is the accompaniment. The scale resolves best at $m = 3$--$4$.

\textbf{Sensitivity.} Over multi-track pieces at $m = 4$ the median change is $0.05$ bits in
$\Hor$, $0.008$ in $\eta$, and $0.02$ bits in $H_{\text{stab}}$ (rank correlations between
selectors $0.77$, $0.81$, $0.63$), and $0.26$, $0.038$, and $0.10$ where the lines differ. These
are below the period differences in $\Hor$ ($0.40$ bits between medians) and the form differences
in $\eta$ ($0.12$), but comparable to the period differences in $\eta$ and $H_{\text{stab}}$ at
$m = 4$.

\textbf{Robustness.} Every main-text test was rerun on the $578$ keyboard and $51$ Prelude \&
Fugue pieces with identical lines, and on the \texttt{:complexity} line of every piece with IAAFT
nulls on the same seeds. In each reading $70$ of $74$ tests keep their conclusion and $92$--$97\%$
of pieces stay significantly below IAAFT. With \texttt{:complexity} lines the period trend in
$H_{\text{stab}}$ becomes significant at $m = 4$, the form difference in the IAAFT deficit at
$m = 6$, and the fugue trend in $H_{\text{stab}}$ at $m = 6$, while the form difference in
$H_{\text{stab}}$ at $m = 3$ is lost (Table~\ref{tab:si-separation}). On identical lines only
Prelude \& Fugue tests change, by loss of power (one Mendelssohn and two Shostakovich fugues keep
their line). The $\Delta\eta$ fugue trend and the Shostakovich fugue$-$prelude gap hold with
\texttt{:complexity} lines.

\section{Statistical procedures}
\label{app:stats}

Inference is rank- or permutation-based, with an effect size for every test. Pieces are the
observations, and dependence is respected by permuting or resampling whole clusters (composers,
songs, works) \citeSI{SI:hurlbert1984}. Table~\ref{tab:si-stats} lists the procedure behind each
result.

\begin{table}[htbp]
\centering\footnotesize
\caption{Procedure behind each result. KW: Kruskal--Wallis; JT: Jonckheere--Terpstra;
BH: Benjamini--Hochberg; $\delta$: Cliff's delta.}
\label{tab:si-stats}
\begin{tabular}{>{\raggedright\arraybackslash}p{2.3cm} >{\raggedright\arraybackslash}p{1.9cm}
>{\raggedright\arraybackslash}p{3.6cm} >{\raggedright\arraybackslash}p{2.0cm}
>{\raggedright\arraybackslash}p{2.3cm}}
\toprule
\textbf{Question} & \textbf{Data} & \textbf{Test} & \textbf{Multiplicity} & \textbf{Effect size} \\
\midrule
Lower $\Hor$ than nulls (\S\ref{sec:res-surrogates}) & piece & rank of $\Hor$ among 199 surrogates (two-sided); sign test across pieces & BH over pieces & $\Delta H$, $\Delta H/\overline{H}_{\text{null}}$ \\
Periods (\S\ref{sec:res-hor-eta}) & piece & KW; period labels permuted between composers (9{,}999); ordered trend as Spearman $\rho$, same permutation; measures $\Hor$, $\eta$, $H_{\text{stab}}$ & Holm over $m$ & $\varepsilon^2$; $\delta$ with composer-bootstrap 90\% CI \\
Forms (\S\ref{sec:res-hor-eta}) & labeled piece & KW; form labels permuted within composers (9{,}999); $\eta$ also as residual on $\log L$; $\eta$ and $\Delta\eta$ against IAAFT & Holm over $m$ & $\varepsilon^2$; $\delta$ \\
Where measures vary (\S\ref{sec:res-hor-eta}) & piece & adjusted $R^2$ of ranks on period, composer, form (descriptive) & --- & variance share \\
Composers, form fixed (\S\ref{sec:res-pf}) & piece & one-sided JT (50{,}000 permutations); exact Wilcoxon signed-rank for fugue$-$prelude pairs & Holm over $m = 4$--$6$ per measure & $\delta$ with 90\% stratified-bootstrap CI \\
Symmetry types (\S\ref{sec:res-genome}) & piece & class share ranked among 199 surrogates per null; sign test; period trend as Spearman $\rho$ with periods permuted between composers & BH over pieces & $O/E$ of class shares \\
Relations (\S\ref{sec:res-relations}) & piece & $\log_2 O/E$ against the within-orbit shuffle (Haldane--Anscombe); KW and Spearman trend with composers permuted between periods; composers permuted within periods; JT and Wilcoxon in the P\&F set; operation entropy per
within-orbit shuffle, period trend and JT on the interval deficit & Holm over $m$; BH over pieces & corpus $O/E$, composer-bootstrap 90\% CI; adjusted $R^2$ \\
Other entropy measures (\S\ref{app:entropies}) & piece & NSB per reading; periods and forms as above; JT on fugues; rank among 199 IAAFT surrogates & Holm over $m$; BH over pieces & partial $\rho$, adjusted $R^2$, share below / above surrogates \\
Extraction (\S\ref{app:extraction-check}) & piece & identity, Jaccard overlap, and orbit JSD against a noise floor and a different-music ceiling; every test rerun on pieces with identical lines and with \texttt{:complexity} lines & as in each test & share of tests with unchanged conclusion \\
Whole sequences (\S\ref{sec:res-census}) & segment & exact $p_i$ under note shuffle (interval shuffle as check) & --- & $O/E$, Garwood and song/work-bootstrap 95\% CI \\
\bottomrule
\end{tabular}
\end{table}

\textbf{Equivalence.} A non-significant difference is not evidence of equality. Similarity
claims are read from $\delta$ intervals against the negligible band $|\delta| < 0.147$
\citeSI{SI:romano2006ordinal}; intervals wider than the band (e.g.\ $\Delta H$, Bach versus
Shostakovich, $[-0.38, 0.19]$) are reported as no detectable difference.

\textbf{Reproducibility.} The script behind each table and figure is listed in the repository README
(Data accessibility). Surrogates are seeded from each file name; permutation seeds are fixed in each
script.

\section{Supplementary results}
\label{app:figs}

\subsection{Control sequences}

\begin{figure}[htbp]
\centering
\includegraphics[width=\linewidth]{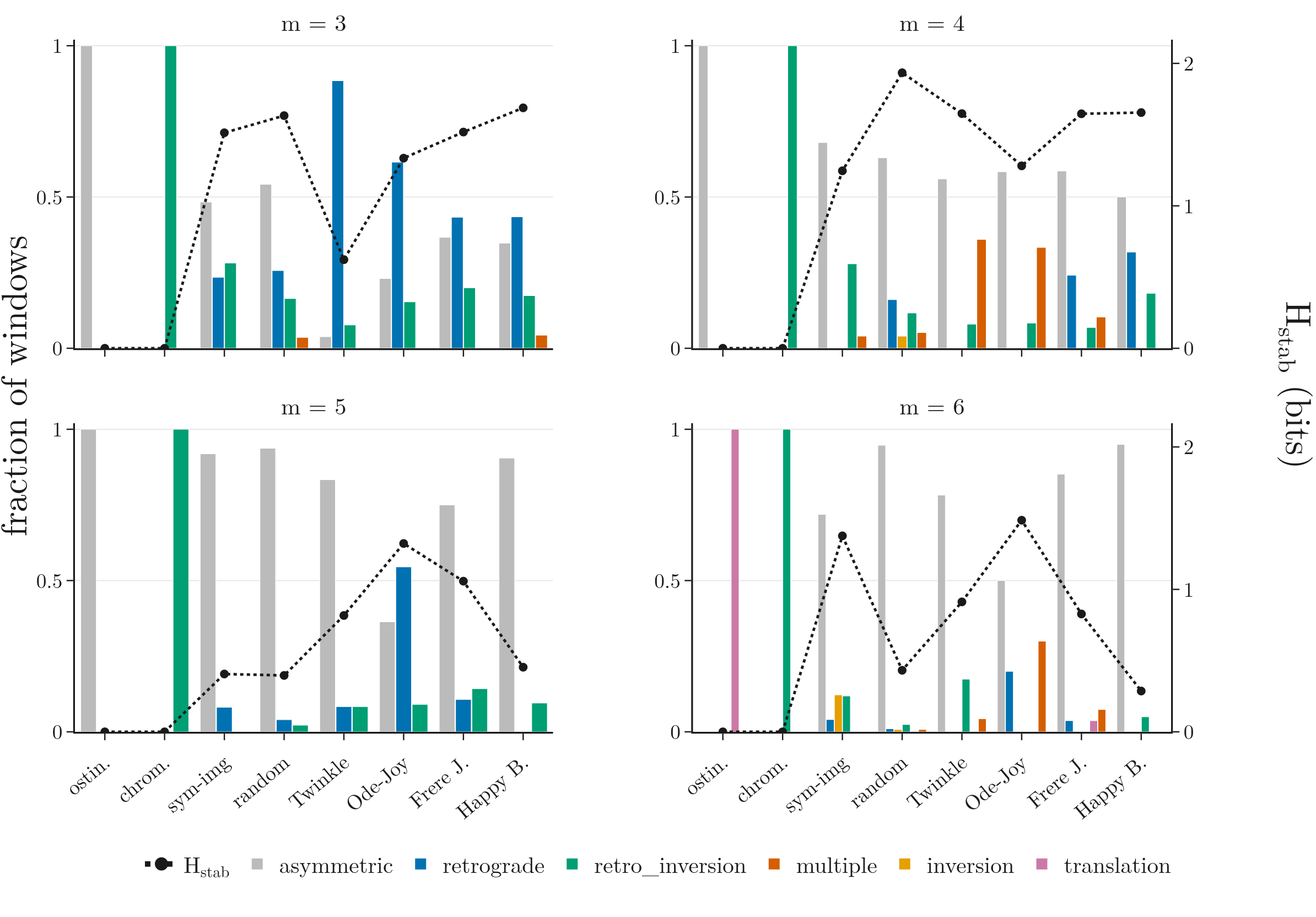}
\caption{Window stabilizer-type profiles of the annotated melodies and controls at
$m = 3$ and $m = 5$. At $m = 3$ short palindromes dominate; at $m = 5$ the types separate
return figures (retrograde), scalar or mirror motion (RI) and directed subjects
(asymmetric).}
\label{fig:si-window}
\end{figure}

\subsection{Orbit structure against surrogates}

\textbf{Real versus null lines.} With BH control over the $852$ pieces, $93$--$98\%$ are
significantly below the shuffle and IAAFT nulls at $m = 3$--$6$, and $26$, $72$, $83$, and
$87\%$ below Markov-1 \citeSI{SI:benjamini1995fdr}. The full ordering of
Eq.~\eqref{eq:rungs} holds piece by piece in $87$--$90\%$ of pieces at $m \ge 4$ and in $36\%$ at
$m = 3$, where the Markov null nearly reproduces the real line (Figure~\ref{fig:si-nulls}).

\begin{figure}[htbp]
\centering
\includegraphics[width=\linewidth]{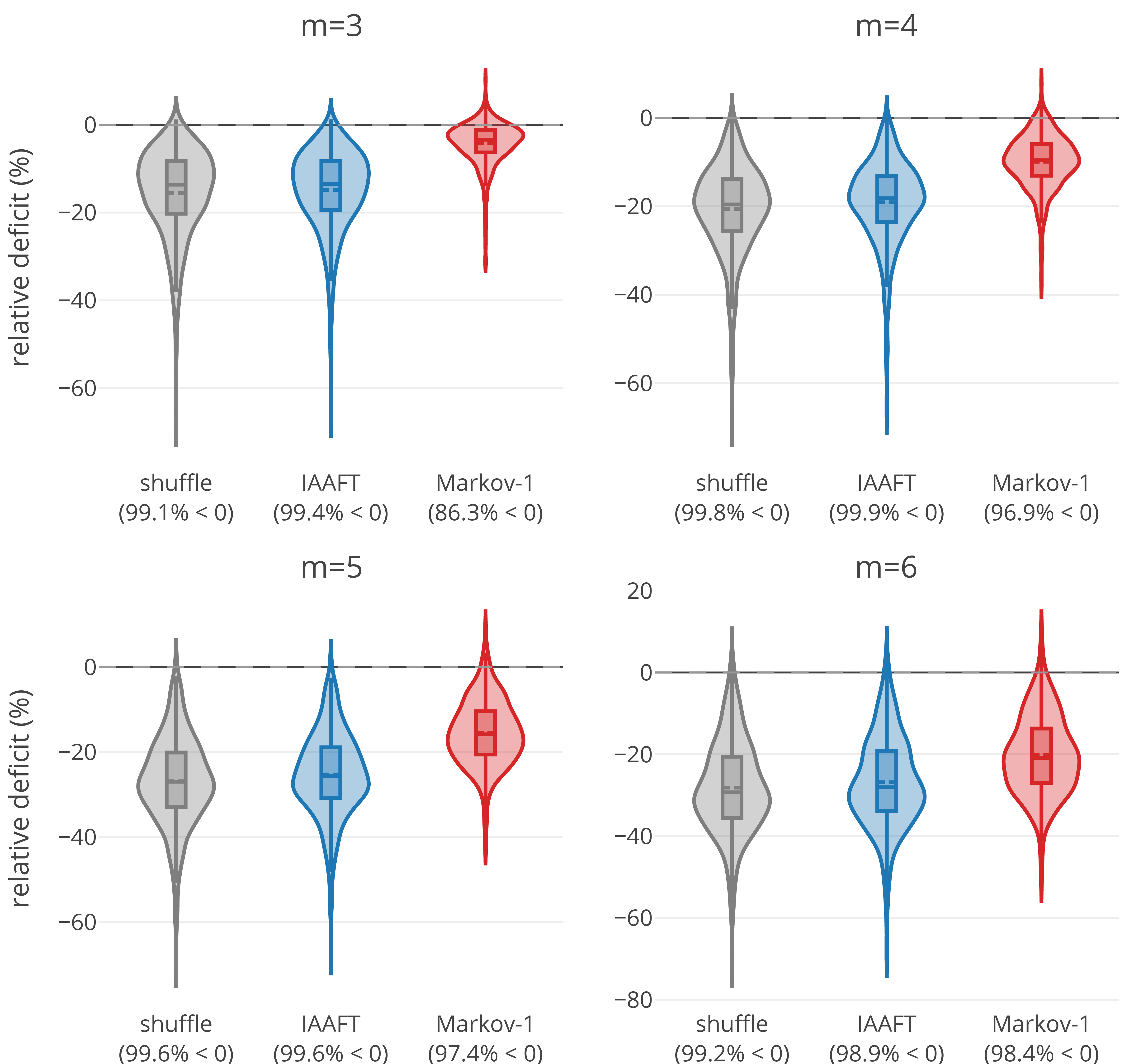}
\caption{Relative deficit $(\Hor - \langle H_{\text{null}}\rangle)/\langle H_{\text{null}}\rangle$
of the 852 corpus pieces under shuffle, IAAFT, and Markov-1 surrogates (199 each), $m = 3$--$6$;
labels give the share of pieces below their null.}
\label{fig:si-nulls}
\end{figure}

\subsection{Vocabulary across periods, deployment across forms}

\textbf{Separation tests} (Table~\ref{tab:si-separation}; panels not in
Figure~\ref{fig:forms-periods} in Figure~\ref{fig:si-grid}). Forms differ in the deficit and in
$\Hor$ only when pieces are treated as independent, and lose the effect once labels are permuted
within composers. The form effect on $\eta$ survives at every $m$, with or without $\log L$ removed.

\begin{table}[htbp]
\centering\footnotesize
\caption{Separation of periods and forms for each measure. Periods: Kruskal--Wallis with
period labels permuted between composers (Holm over $m$), $\varepsilon^2$, and the ordered trend
Baroque $\to$ 20th century as Spearman $\rho$ with composers permuted as blocks (Holm $p$ in
parentheses). Forms: Kruskal--Wallis with form labels permuted within composers (Holm over $m$),
$\varepsilon^2$. Bold: $p < 0.05$. Pairwise $\delta$ in \texttt{Data/cluster\_tests.csv}. $^\dagger$Not retained with \texttt{:complexity} lines (Section~\ref{app:extraction-check}).}
\label{tab:si-separation}
\begin{tabular}{l c ccc cc}
\toprule
 & & \multicolumn{3}{c}{Periods} & \multicolumn{2}{c}{Forms} \\
\cmidrule(lr){3-5}\cmidrule(lr){6-7}
Measure & $m$ & $p$ & $\varepsilon^2$ & trend $\rho$ ($p$) & $p$ & $\varepsilon^2$ \\
\midrule
\multirow{4}{*}{$\Hor$ deficit vs IAAFT} & 3 & \textbf{0.003} & 0.06 & \textbf{+0.19 (0.010)} & 1.000 & 0.00 \\
 & 4 & 1.000 & 0.02 & +0.10 (0.599) & 1.000 & 0.01 \\
 & 5 & 1.000 & 0.01 & +0.09 (0.599) & 1.000 & 0.05 \\
 & 6 & 1.000 & 0.00 & +0.05 (0.599) & 0.175 & 0.10 \\
\addlinespace
\multirow{4}{*}{$\Hor$} & 3 & \textbf{0.002} & 0.09 & \textbf{+0.22 (0.008)} & 1.000 & 0.00 \\
 & 4 & 0.191 & 0.04 & +0.16 (0.161) & 1.000 & 0.01 \\
 & 5 & 0.265 & 0.04 & +0.14 (0.182) & 1.000 & 0.04 \\
 & 6 & 0.498 & 0.02 & +0.10 (0.239) & 0.573 & 0.10 \\
\addlinespace
\multirow{4}{*}{$\eta$} & 3 & 1.000 & -0.00 & +0.03 (1.000) & \textbf{0.003} & 0.06 \\
 & 4 & 1.000 & 0.00 & -0.04 (1.000) & \textbf{$<0.001$} & 0.08 \\
 & 5 & 0.698 & 0.02 & -0.08 (1.000) & \textbf{$<0.001$} & 0.07 \\
 & 6 & 1.000 & 0.01 & -0.02 (1.000) & \textbf{0.001} & 0.03 \\
\addlinespace
\multirow{4}{*}{$H_{\text{stab}}$} & 3 & \textbf{0.005} & 0.06 & \textbf{+0.18 (0.015)} & \textbf{0.010}$^\dagger$ & 0.04 \\
 & 4 & 0.868 & 0.00 & +0.08 (0.288) & 0.442 & 0.02 \\
 & 5 & \textbf{0.008} & 0.05 & \textbf{+0.22 (0.002)} & 0.442 & 0.03 \\
 & 6 & 0.868 & 0.01 & +0.07 (0.288) & 0.442 & 0.02 \\
\addlinespace
\multirow{4}{*}{$\eta$, length removed} & 3 & --- & --- & --- & \textbf{$<0.001$} & 0.14 \\
 & 4 & --- & --- & --- & \textbf{$<0.001$} & 0.11 \\
 & 5 & --- & --- & --- & \textbf{$<0.001$} & 0.08 \\
 & 6 & --- & --- & --- & \textbf{$<0.001$} & 0.03 \\
\addlinespace
\multirow{4}{*}{$\Delta\eta$ vs IAAFT} & 3 & 1.000 & 0.00 & --- & \textbf{$<0.001$} & 0.11 \\
 & 4 & 1.000 & 0.01 & --- & \textbf{$<0.001$} & 0.13 \\
 & 5 & 1.000 & 0.00 & --- & \textbf{$<0.001$} & 0.08 \\
 & 6 & 1.000 & 0.00 & --- & \textbf{$<0.001$} & 0.03 \\
\bottomrule
\end{tabular}
\end{table}

\textbf{Variance shares} (Table~\ref{tab:si-variance}). Composer has the largest share for every
measure. $\Delta\eta$ (Table~\ref{tab:si-deta}) is nearly length-free only at $m = 4$
(Section~\ref{app:length}). Within each period dances have the lowest $\eta$
(Table~\ref{tab:si-formperiod}).

\begin{table}[htbp]
\centering\footnotesize
\caption{Where the measures vary: share of rank variance explained (adjusted $R^2$ of a
least-squares fit on ranks) by period, composer, composer beyond period, form (labeled pieces),
and form beyond composer. Descriptive; composers are nested in periods.}
\label{tab:si-variance}
\begin{tabular}{l c ccccc}
\toprule
Measure & $m$ & period & composer & composer $|$ period & form & form $|$ composer \\
\midrule
\multirow{4}{*}{$\Hor$ deficit} & 3 & 0.06 & 0.17 & 0.10 & 0.00 & 0.02 \\
 & 4 & 0.02 & 0.22 & 0.21 & 0.01 & 0.00 \\
 & 5 & 0.01 & 0.26 & 0.25 & 0.05 & -0.00 \\
 & 6 & 0.00 & 0.30 & 0.30 & 0.10 & 0.00 \\
\addlinespace
\multirow{4}{*}{$\Hor$} & 3 & 0.09 & 0.21 & 0.13 & 0.00 & 0.02 \\
 & 4 & 0.04 & 0.26 & 0.22 & 0.01 & 0.01 \\
 & 5 & 0.04 & 0.28 & 0.25 & 0.04 & 0.01 \\
 & 6 & 0.02 & 0.32 & 0.30 & 0.10 & 0.01 \\
\addlinespace
\multirow{4}{*}{$\eta$} & 3 & -0.00 & 0.21 & 0.21 & 0.06 & 0.02 \\
 & 4 & 0.00 & 0.20 & 0.19 & 0.08 & 0.02 \\
 & 5 & 0.02 & 0.20 & 0.18 & 0.07 & 0.03 \\
 & 6 & 0.01 & 0.18 & 0.17 & 0.03 & 0.02 \\
\addlinespace
\multirow{4}{*}{$\eta$, length removed} & 3 & 0.00 & 0.23 & 0.23 & 0.14 & 0.03 \\
 & 4 & 0.02 & 0.20 & 0.18 & 0.11 & 0.04 \\
 & 5 & 0.05 & 0.19 & 0.14 & 0.08 & 0.05 \\
 & 6 & 0.02 & 0.15 & 0.13 & 0.03 & 0.04 \\
\addlinespace
\multirow{4}{*}{$H_{\text{stab}}$} & 3 & 0.06 & 0.14 & 0.08 & 0.04 & 0.03 \\
 & 4 & 0.00 & 0.11 & 0.11 & 0.02 & 0.01 \\
 & 5 & 0.05 & 0.18 & 0.13 & 0.03 & -0.00 \\
 & 6 & 0.01 & 0.12 & 0.12 & 0.02 & 0.00 \\
\addlinespace
\multirow{4}{*}{$\Delta\eta$} & 3 & 0.00 & 0.19 & 0.19 & 0.11 & 0.04 \\
 & 4 & 0.01 & 0.18 & 0.17 & 0.13 & 0.04 \\
 & 5 & 0.00 & 0.16 & 0.16 & 0.08 & 0.04 \\
 & 6 & 0.00 & 0.18 & 0.17 & 0.03 & 0.03 \\
\bottomrule
\end{tabular}
\end{table}

\begin{table}[htbp]
\centering\small
\caption{Deployment ratio against $199$ IAAFT surrogates per piece: medians of the real
$\eta$, the null mean, and $\Delta\eta$; percentage of pieces significantly above / below the
null after Benjamini--Hochberg control; Spearman $\rho$ of $\eta$ and $\Delta\eta$ with $\log L$.}
\label{tab:si-deta}
\begin{tabular}{c ccc c cc}
\toprule
$m$ & $\eta$ & $\overline{\eta}_{\text{null}}$ & $\Delta\eta$ & above / below (\%) & $\rho(\eta)$ & $\rho(\Delta\eta)$ \\
\midrule
3 & 0.501 & 0.576 & -0.071 & 4 / 80 & +0.76 & -0.38 \\
4 & 0.284 & 0.265 & +0.016 & 42 / 27 & +0.76 & -0.15 \\
5 & 0.170 & 0.096 & +0.063 & 76 / 3 & +0.66 & +0.35 \\
6 & 0.107 & 0.044 & +0.058 & 81 / 2 & +0.58 & +0.54 \\
\bottomrule
\end{tabular}
\end{table}

\begin{table}[htbp]
\centering\footnotesize
\caption{Deployment ratio $\eta$ by form within each period: Cliff's $\delta$ of the other
labeled forms against dances (positive: dances lower), with the $p$-value of a Kruskal--Wallis
test over all forms with form labels permuted within composers (9{,}999 permutations). Bold:
$p < 0.05$.}
\label{tab:si-formperiod}
\setlength{\tabcolsep}{3pt}
\begin{tabular}{l c c cccc}
\toprule
Period & \shortstack{composers\\ with $\ge 2$ forms} & dances & $m = 3$ & $m = 4$ & $m = 5$ & $m = 6$ \\
\midrule
Baroque & 5 & 19 & +0.36 (0.240) & +0.36 (0.180) & +0.43 (0.212) & +0.32 (0.098) \\
Classical & 8 & 23 & \textbf{+0.72 ($<0.001$)} & \textbf{+0.47 (0.011)} & \textbf{+0.41 (0.035)} & +0.17 (0.119) \\
Romantic & 10 & 12 & +0.05 (0.657) & +0.16 (0.672) & +0.20 (0.541) & +0.23 (0.655) \\
20th century & 5 & 11 & +0.35 (0.314) & \textbf{+0.56 (0.043)} & \textbf{+0.62 (0.015)} & \textbf{+0.67 (0.009)} \\
\bottomrule
\end{tabular}
\end{table}

\begin{figure}[htbp]
\centering
{\small\textbf{(a)}}\\[2pt]
\includegraphics[width=\linewidth]{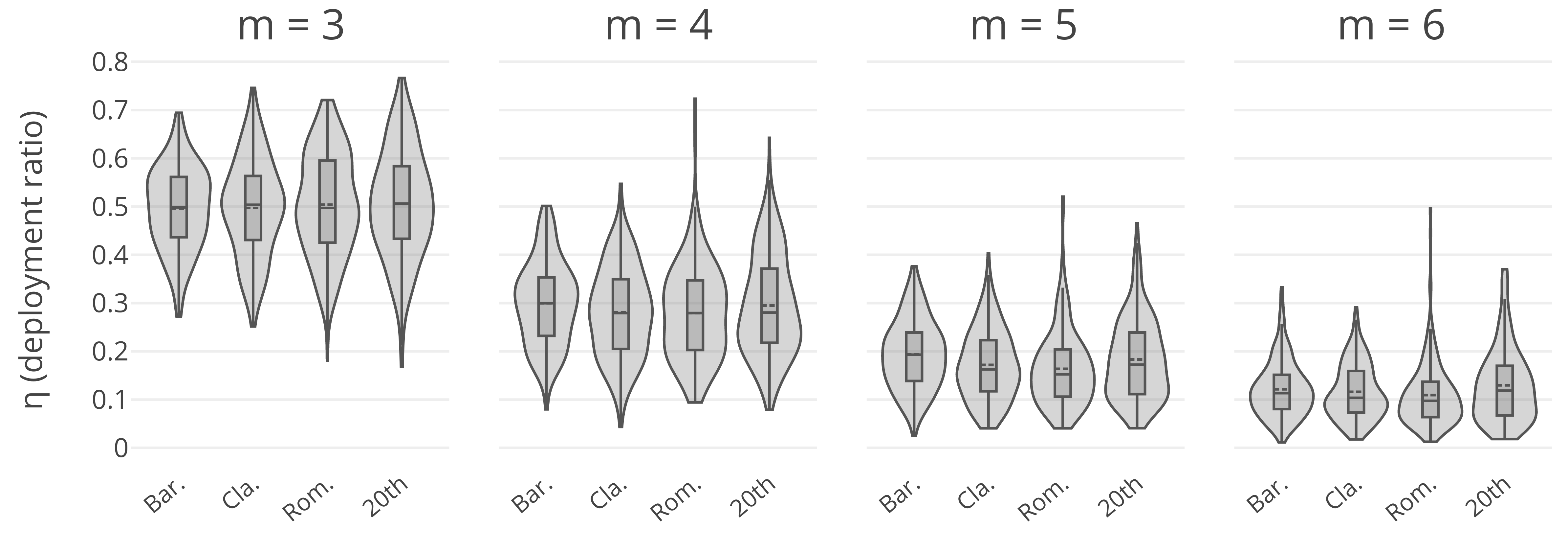}\\[4pt]
{\small\textbf{(b)}}\\[2pt]
\includegraphics[width=\linewidth]{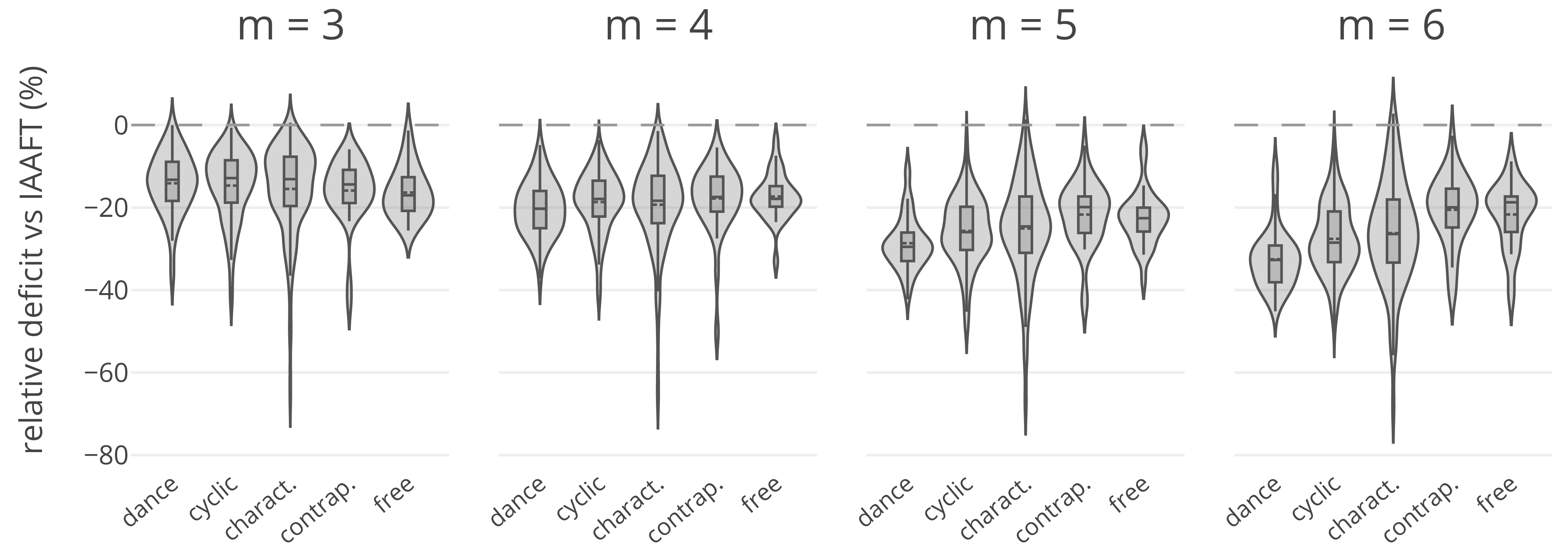}\\[4pt]
{\small\textbf{(c)}}\\[2pt]
\includegraphics[width=\linewidth]{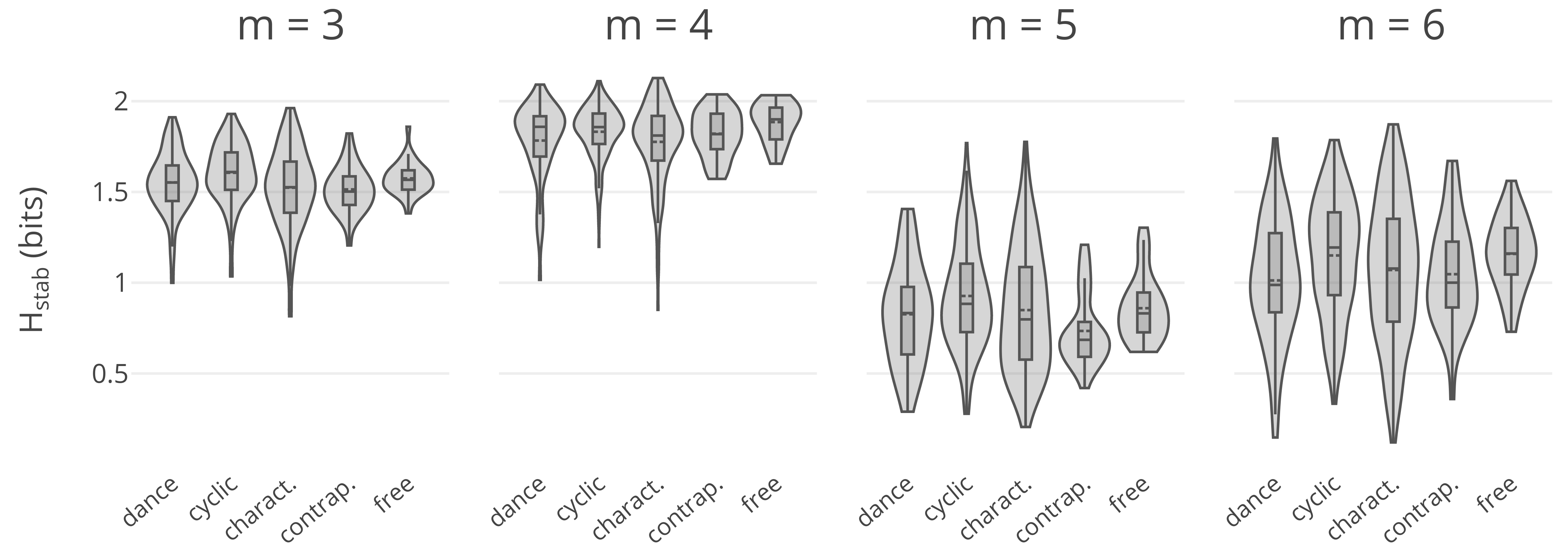}
\caption{Panels complementing Figure~\ref{fig:forms-periods}, $m = 3$--$6$. \textbf{(a)} $\eta$ by
period ($213$ pieces each). \textbf{(b)} Relative deficit of $\Hor$ against IAAFT and
\textbf{(c)} $H_{\text{stab}}$ by form family ($508$ labeled pieces). Tests in
Table~\ref{tab:si-separation}.}
\label{fig:si-grid}
\end{figure}

\subsection{Deployment in preludes and fugues}

Tests at $m = 4$ are in Table~\ref{tab:si-pf}, $\eta$ at every $m$ in Figure~\ref{fig:si-pf-eta},
and paired values in Figure~\ref{fig:si-pf-pairs}. At $m = 5$--$6$ $\Delta\eta$ still depends on
length (Section~\ref{app:length}), so these sizes point the same way but carry no claims.
$H_{\text{stab}}$ is taken over the six classes of Table~\ref{tab:window-types}; over stabilizer
conjugacy classes the fugue trend is the same (Holm $p = 0.018$ at $m = 4$).

\begin{table}[htbp]
\centering\small
\caption{Prelude \& Fugue tests at $m = 4$. Fugue trend: one-sided Jonckheere--Terpstra, Bach $<$
Mendelssohn $<$ Shostakovich, Holm over $m = 4$--$6$. $\delta$: Cliff's delta with $90\%$
stratified-bootstrap interval. Paired: exact Wilcoxon signed-rank on fugue$-$prelude $\Delta\eta$.}
\label{tab:si-pf}
\begin{tabular}{l l}
\toprule
Test & Result \\
\midrule
Fugue trend, $p$ & $\Hor$ 1.0;\; $\Delta H$ 1.0;\; $H_{\text{stab}}$ \textbf{0.014};\; $\eta$ \textbf{0.002};\; $\Delta\eta$ \textbf{0.035} \\
$\delta$ of fugue $\Delta\eta$ & Bach$\to$Shostakovich $0.38$ $[0.12, 0.62]$;\; Bach$\to$Mendelssohn $-0.64$ $[-0.86, -0.36]$ \\
Paired fugue$-$prelude, $p$ & Bach 1.0;\; Mendelssohn 0.94;\; Shostakovich \textbf{0.002} \\
Interaction trend, $p$ & \textbf{0.003} \\
$\rho$ with fugue/prelude length ratio & $0.14$ ($p = 0.30$) \\
\bottomrule
\end{tabular}
\end{table}

\subsubsection{Length dependence of the deployment ratio}
\label{app:length}

$\eta$ is a plug-in estimate and grows with melody length: in the Prelude \& Fugue set
$\rho(\eta,\log L) = 0.79$, $0.73$, $0.69$ at $m = 4$--$6$, also within composers
(Figure~\ref{fig:si-length}); $\Delta\eta$ is length-free only at $m = 4$ ($\rho = 0.06$).
Hence the length-matched corpus, on which period differences in $\eta$ largely vanish
($m = 4$ medians $0.28$--$0.30$), while $\Hor$ and $\Delta H$ stay near length-flat
($|r| = 0.12$, $0.10$).

\begin{figure}[h]
\centering
\includegraphics[width=\linewidth]{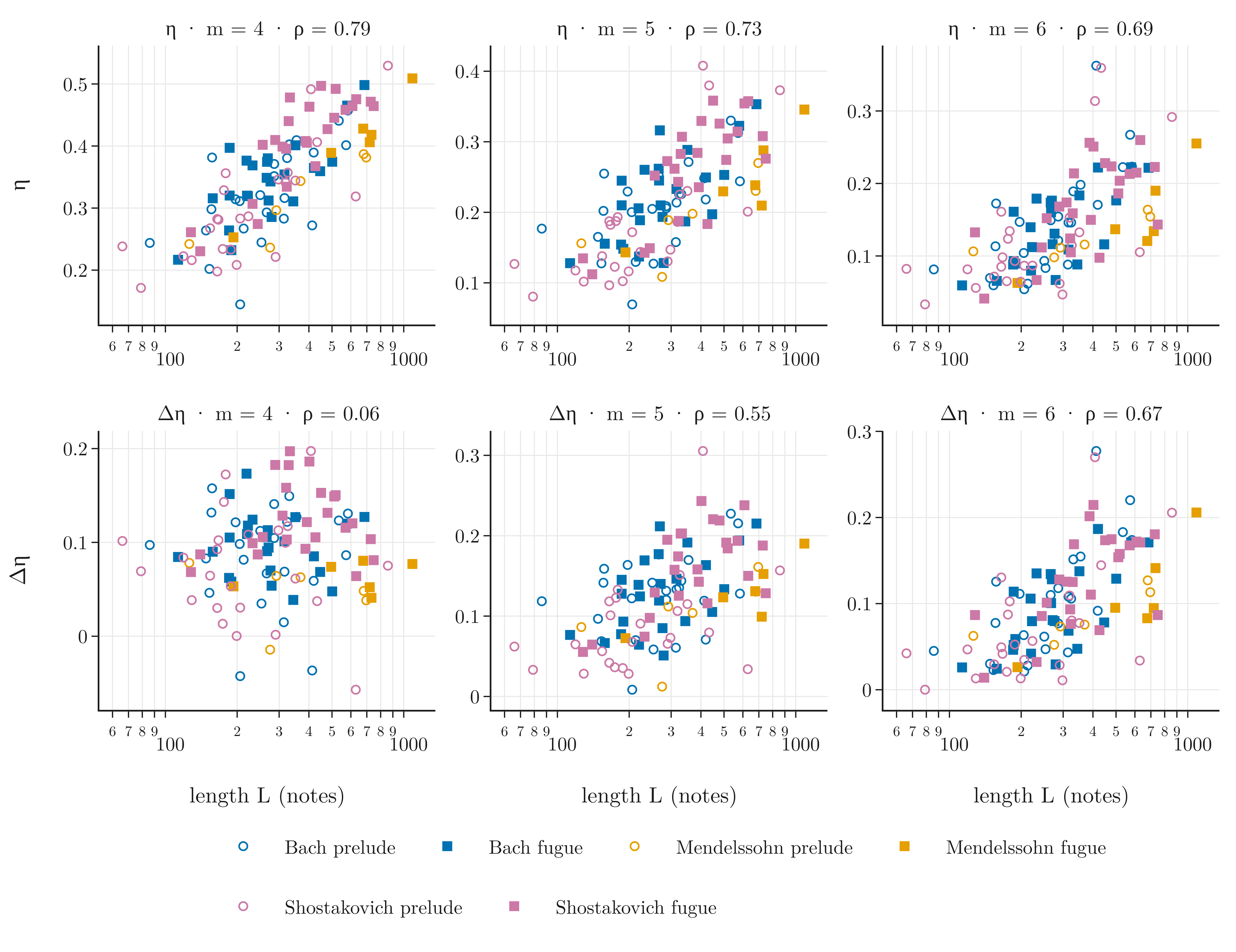}
\caption{$\eta$ (top) and $\Delta\eta$ (bottom) against melody length, Prelude \& Fugue
set, $m = 4$--$6$; $\rho$: Spearman correlation with $\log L$.}
\label{fig:si-length}
\end{figure}

\begin{figure}[htbp]
\centering
\includegraphics[width=\linewidth]{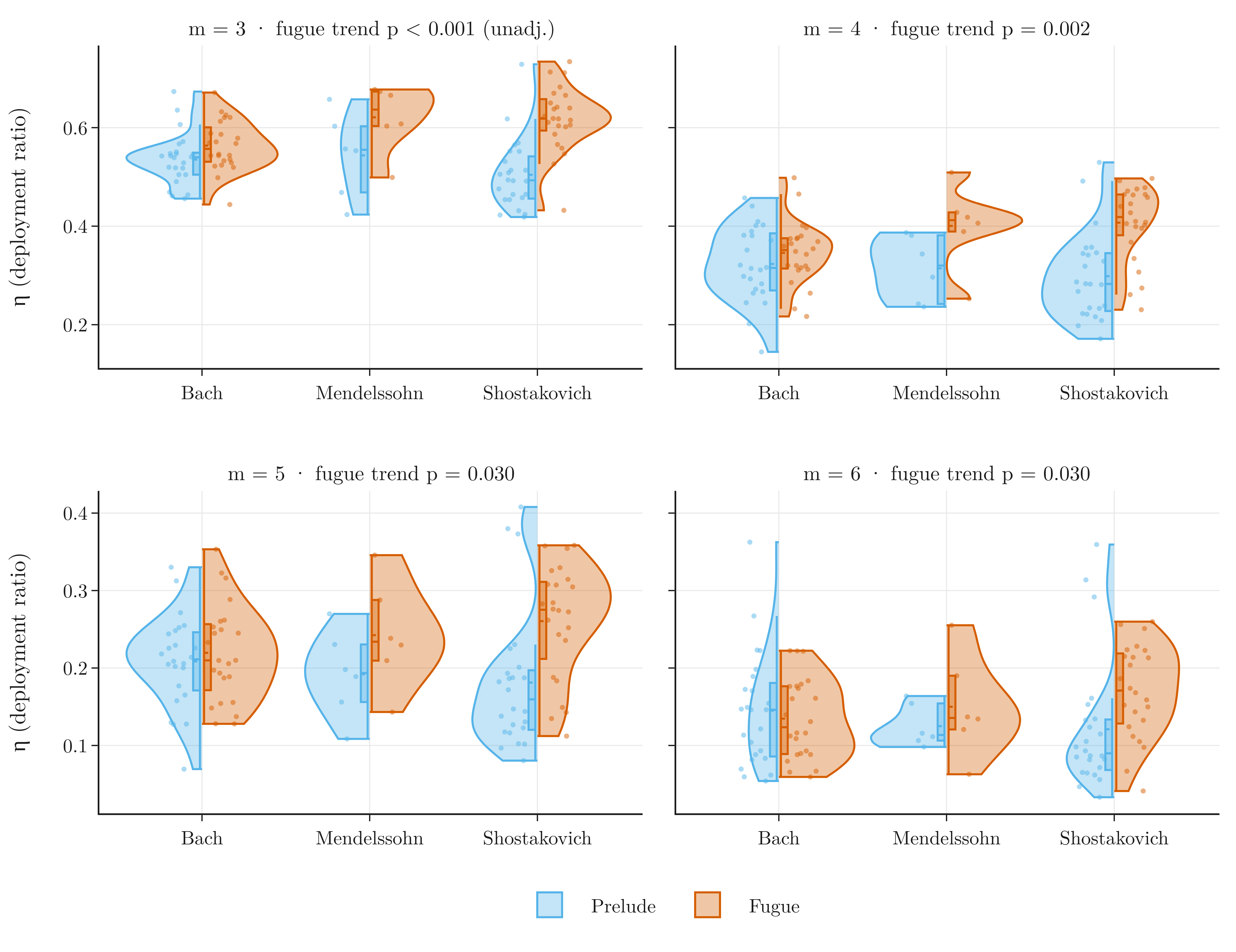}
\caption{$\eta$ of preludes and fugues by composer, $m = 3$--$6$; titles give the
Jonckheere--Terpstra fugue trend (Holm over $m = 4$--$6$).}
\label{fig:si-pf-eta}
\end{figure}

\begin{figure}[htbp]
\centering
\includegraphics[width=\linewidth]{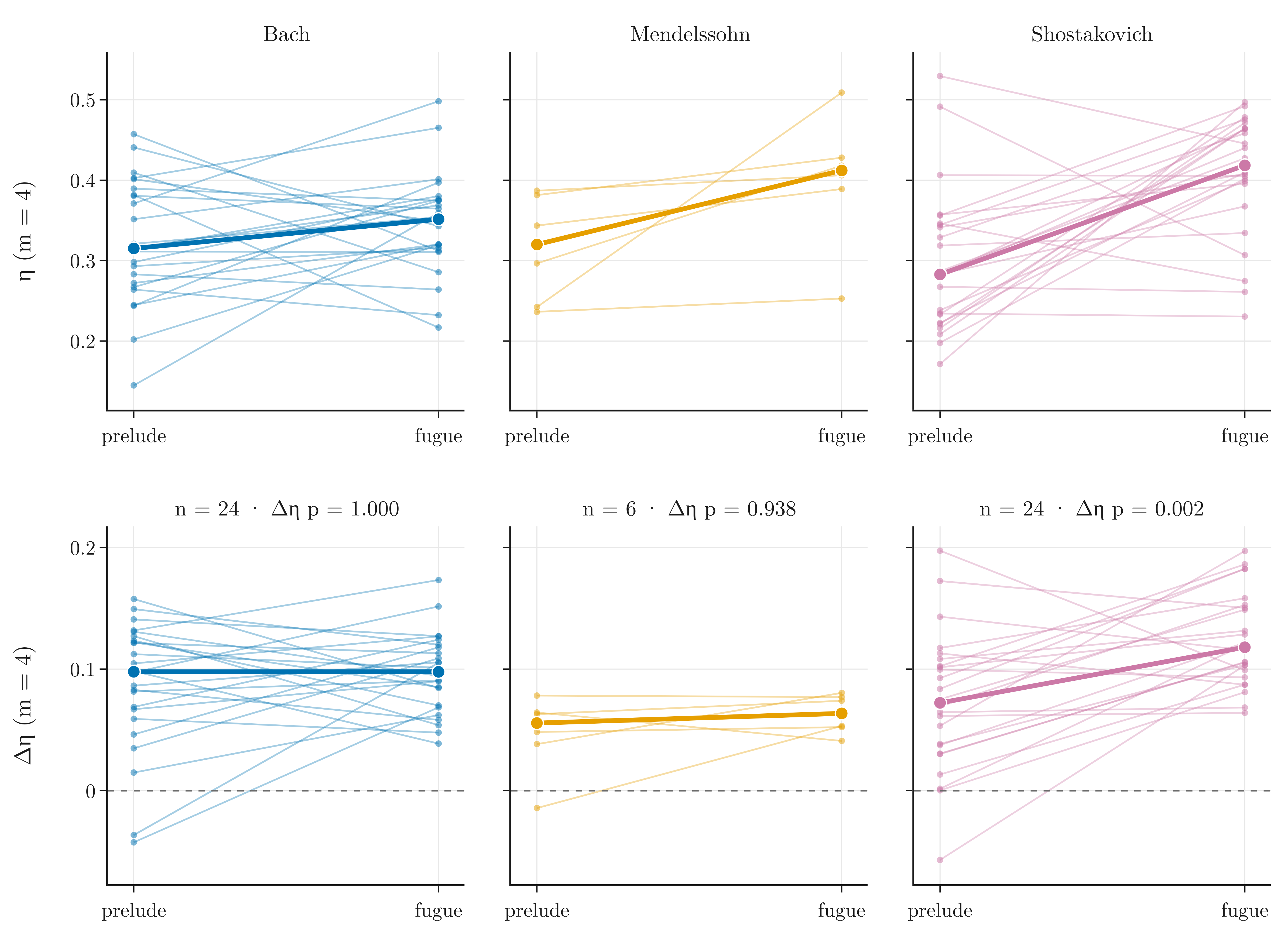}
\caption{Paired prelude and fugue of each number, $\eta$ (top) and $\Delta\eta$ (bottom),
$m = 4$; thick line: medians; $p$: exact Wilcoxon signed-rank test on $\Delta\eta$.}
\label{fig:si-pf-pairs}
\end{figure}

\subsection{Symmetry types of short windows}

Counting baseline over all $12^m$ windows for Figure~\ref{fig:genome-periods}a: Table~\ref{tab:si-baseline}.

\begin{table}[htbp]
\centering\small
\caption{Share of windows per stabilizer class (\%): uniform alphabet (u, all $12^m$
windows) and the range of the four period means in the corpus (real).}
\label{tab:si-baseline}
\begin{tabular}{l rr rr rr rr}
\toprule
 & \multicolumn{2}{c}{$m = 3$} & \multicolumn{2}{c}{$m = 4$} & \multicolumn{2}{c}{$m = 5$} & \multicolumn{2}{c}{$m = 6$} \\
\cmidrule(lr){2-3}\cmidrule(lr){4-5}\cmidrule(lr){6-7}\cmidrule(lr){8-9}
class & u & real & u & real & u & real & u & real \\
\midrule
asymmetric      & 58.3 & 41.3--47.7 & 63.9 & 42.2--47.3 & 93.2 & 74.4--82.4 & 94.9 & 70.7--76.2 \\
retrograde      & 20.8 & 33.3--35.4 & 13.9 & 24.0--27.2 & 3.4  & 13.0--17.3 & 2.2  & 11.6--14.0 \\
retro-inversion & 16.7 & 16.2--16.8 & 12.5 & 12.0--13.1 & 3.4  & 4.1--5.5   & 2.1  & 5.2--5.8 \\
inversion       & 0    & 0          & 4.6  & 2.0--2.6   & 0    & 0          & 0.5  & 0.3--0.4 \\
translation     & 0    & 0          & 0    & 0          & 0    & 0          & 0.1  & 0.7--1.1 \\
multiple        & 4.2  & 2.3--6.7   & 5.1  & 12.7--17.7 & 0.1  & 0.5--2.8   & 0.3  & 4.0--8.4 \\
\bottomrule
\end{tabular}
\end{table}

\textbf{Against nulls.} Class shares against each piece's surrogates are in
Table~\ref{tab:si-classnull}. The retrograde excess over shuffled lines grows with $m$ and nearly
vanishes against Markov-1. The multiple class (repeated notes, oscillations) exceeds every null,
and its rise across periods holds with composers permuted as blocks ($\log O/E$ against Markov-1,
$\rho = 0.19$, $p = 0.002$ at $m = 4$, and $\rho = 0.22$, $p < 0.001$ at $m = 6$). Inversion
windows are below every null.

\begin{table}[htbp]
\centering\small
\caption{Observed over expected share of windows per symmetry class (pieces weighted
equally), against 199 surrogates per piece.}
\label{tab:si-classnull}
\begin{tabular}{l rr rr rr}
\toprule
 & \multicolumn{2}{c}{shuffle} & \multicolumn{2}{c}{IAAFT} & \multicolumn{2}{c}{Markov-1} \\
\cmidrule(lr){2-3}\cmidrule(lr){4-5}\cmidrule(lr){6-7}
Class & $m = 4$ & $m = 6$ & $m = 4$ & $m = 6$ & $m = 4$ & $m = 6$ \\
\midrule
retrograde & 1.20 & 2.36 & 1.07 & 1.74 & 0.99 & 1.16 \\
RI         & 1.10 & 2.25 & 1.21 & 2.11 & 1.10 & 1.53 \\
inversion  & 0.59 & 0.77 & 0.67 & 0.77 & 0.58 & 0.86 \\
multiple   & 2.12 & 8.18 & 1.65 & 4.44 & 1.12 & 1.80 \\
\bottomrule
\end{tabular}
\end{table}

\textbf{Symmetry-type entropy.} $H_{\text{stab}}$ (Eq.~\eqref{eq:hstab}) is at or slightly below
the nulls at $m = 3$--$4$ and above every null from $m = 5$ (Table~\ref{tab:si-hstabnull}). It rises
across periods at $m = 3$ and $5$ only, possibly because at odd $m$ the multiple class holds almost
only repeated notes, whereas at even $m$ two-note oscillations, common in every period, also enter
it (not tested). Its correlation with $\log L$ ($\rho = 0.13$--$0.35$) does not affect the
length-matched periods but weakens the comparison of forms.

\begin{table}[htbp]
\centering\small
\caption{$H_{\text{stab}}$ of real lines against 199 surrogates per piece: median of
$H_{\text{stab}} - \overline{H}_{\text{stab,null}}$ in bits, with the percentage of pieces
significantly above / below the null after Benjamini--Hochberg control in parentheses.}
\label{tab:si-hstabnull}
\begin{tabular}{l cccc}
\toprule
Null & $m = 3$ & $m = 4$ & $m = 5$ & $m = 6$ \\
\midrule
shuffle & -0.06 (15/29) & +0.10 (34/9) & +0.22 (49/3) & +0.56 (80/1) \\
IAAFT & -0.09 (12/36) & +0.07 (23/9) & +0.10 (33/11) & +0.45 (70/2) \\
Markov-1 & -0.01 (0/0) & -0.06 (4/20) & +0.07 (20/2) & +0.19 (39/3) \\
\bottomrule
\end{tabular}
\end{table}

\subsection{Transformations between recurring motifs}
\label{app:relations}

\textbf{Definition.} The typing map $(a, k, b, j) \mapsto (a, b)$ is a homomorphism of $G$
onto $\Z_2 \times \Z_2$, so the four non-repetition types compose like the elements of the
Klein four-group, and the rotated flag $j \ne 0$ is read separately. The members $h_0 s$,
$s \in \Stab(w_p)$, of a relation coset (Eq.~\eqref{eq:relation}) share one type only when
the stabilizer holds no inversion or retrograde, which is why profiles are taken over pairs of
asymmetric windows. If the windows of an orbit take forms $g_p^{-1} \cdot \hat o$, their
relations are $g_q^{-1} g_p$, so the relation distribution of a line is the autocorrelation of
its form distribution over $G$ \citeSI{SI:diaconis1988group}. A line with $\eta = 0$ relates
each motif only to itself, and a line with $\eta = 1$ spreads its relations evenly, so $\eta$
bounds the spread of relations but does not fix their types.

\textbf{Nulls.} The within-orbit shuffle permutes, within each orbit, the forms of its windows, which
keeps every orbit, the number of windows in each form, $\Hor$, $\eta$, and $H_{\text{stab}}$.
Under uniform forms each window takes a form drawn uniformly from its orbit. Its expectation is
analytic, a rotated share of $(m - 1)/m$ (observed shares in Figure~\ref{fig:si-rel-rotated}) and,
among unrotated relations, $1/48$ repetition, $11/48$ transposition, and $12/48$ each of inversion,
retrograde, and retrograde-inversion. The surrogates are those of Section~\ref{sec:nulls}, with the
same seeds. Profiles over all windows dilute retrograde-inversion (Figure~\ref{fig:si-rel-leakage}).

\textbf{Model.} The corpus $O/E$ of a type is the mean observed over the mean expected share,
pieces weighted equally, with a $90\%$ composer-bootstrap interval (Table~\ref{tab:si-relations}).
Per piece, $\log_2 O/E$ uses pair counts plus $0.5$
\citeSI{SI:haldane1956estimation,SI:anscombe1956estimating}, since $4$--$10\%$ of pieces lack a
given type and dropping them would bias comparisons upward. Periods are compared by
Kruskal--Wallis and the ordered Spearman trend with composers permuted between periods, and
composers by permuting pieces between composers of one period, Holm-adjusted over $m = 3$--$6$.
Composer means are shrunk toward the period mean by a normal--normal empirical-Bayes model with
moment-estimated between-composer variance \citeSI{SI:morris1983parametric}, which is zero for
Romantic composers at $m = 4$ (Figures~\ref{fig:relations-composers} and~\ref{fig:si-rel-composers},
open markers unshrunk).

\textbf{Results} (Table~\ref{tab:si-relations}; Figures~\ref{fig:si-rel-types}--\ref{fig:si-rel-lags}).
Composers differ within periods for transposition and inversion at every $m$
(Holm $p < 0.001$), for retrograde-inversion at $m = 5$--$6$, and for retrograde at $m = 6$.
Composer explains $7$--$17\%$ of the rank variance of transposition and inversion, period at most
$2\%$. Across periods only retrograde-inversion differs, declining from the Baroque to the 20th
century ($\rho = -0.12$ to $-0.24$, Holm $p \le 0.03$), as does the share of fifths among
transpositions ($p < 0.001$ at every $m$; Figure~\ref{fig:si-rel-intervals}). In the fugues the fifth share falls from Bach to
Shostakovich (Jonckheere--Terpstra, Holm $p < 0.001$ at $m = 4$--$6$). Paired fugue$-$prelude
differences are not significant except for transposition in Shostakovich at $m = 6$ (Holm
$p = 0.04$). After Benjamini--Hochberg control few single pieces depart from the within-orbit
shuffle (none at $m = 4$, $31\%$ above for transposition at $m = 5$), so the results rest on
pooling pieces.

\begin{table}[htbp]
\centering\small
\caption{Relation types against the within-orbit shuffle, pairs of asymmetric windows: corpus $O/E$
with composer-bootstrap $90\%$ interval and, in parentheses, the percentage of pieces above
the null.}
\label{tab:si-relations}
\setlength{\tabcolsep}{4pt}
\begin{tabular}{l cccc}
\toprule
$m$ & transposition & inversion & retrograde & retrograde-inversion \\
\midrule
3 & 1.10 [1.06, 1.14] (54) & 0.90 [0.88, 0.92] (41) & 1.00 [0.98, 1.02] (54) & 1.21 [1.18, 1.23] (79) \\
4 & 1.56 [1.47, 1.66] (75) & 0.84 [0.80, 0.88] (36) & 0.97 [0.94, 1.01] (46) & 1.33 [1.28, 1.39] (71) \\
5 & 2.00 [1.86, 2.15] (76) & 0.75 [0.70, 0.80] (30) & 1.06 [1.01, 1.11] (46) & 1.38 [1.31, 1.46] (66) \\
6 & 3.17 [2.92, 3.44] (73) & 0.57 [0.51, 0.64] (22) & 0.84 [0.77, 0.92] (30) & 1.17 [1.06, 1.29] (48) \\
\bottomrule
\end{tabular}
\end{table}

\begin{figure}[htbp]
\centering
\includegraphics[width=\linewidth]{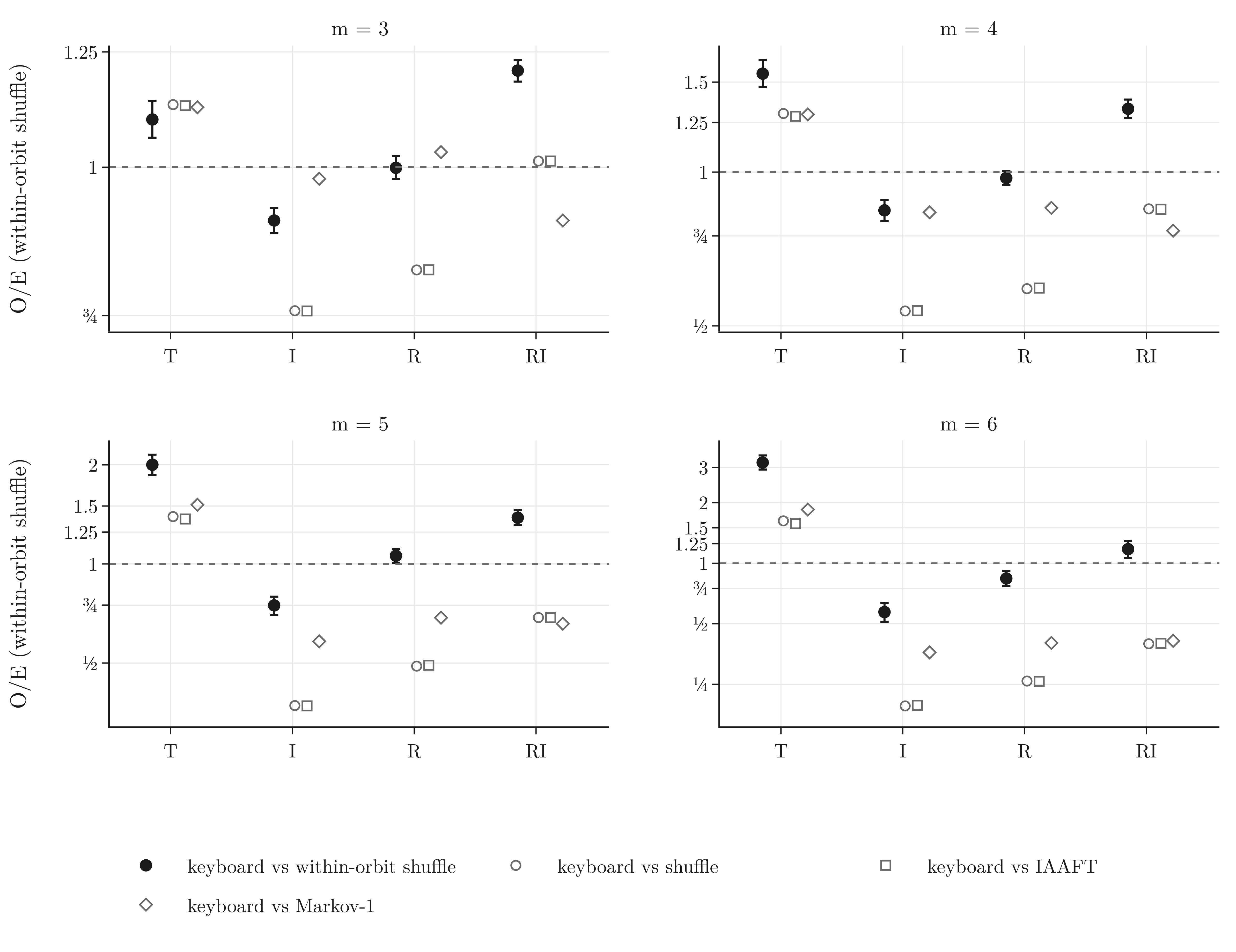}
\caption{Relation types relative to the within-orbit shuffle (filled, $90\%$ composer-bootstrap
interval) and to the shuffle, IAAFT, and Markov-1 surrogates (open), $m = 3$--$6$, pairs of
asymmetric windows.}
\label{fig:si-rel-types}
\end{figure}

\begin{figure}[htbp]
\centering
\includegraphics[width=\linewidth]{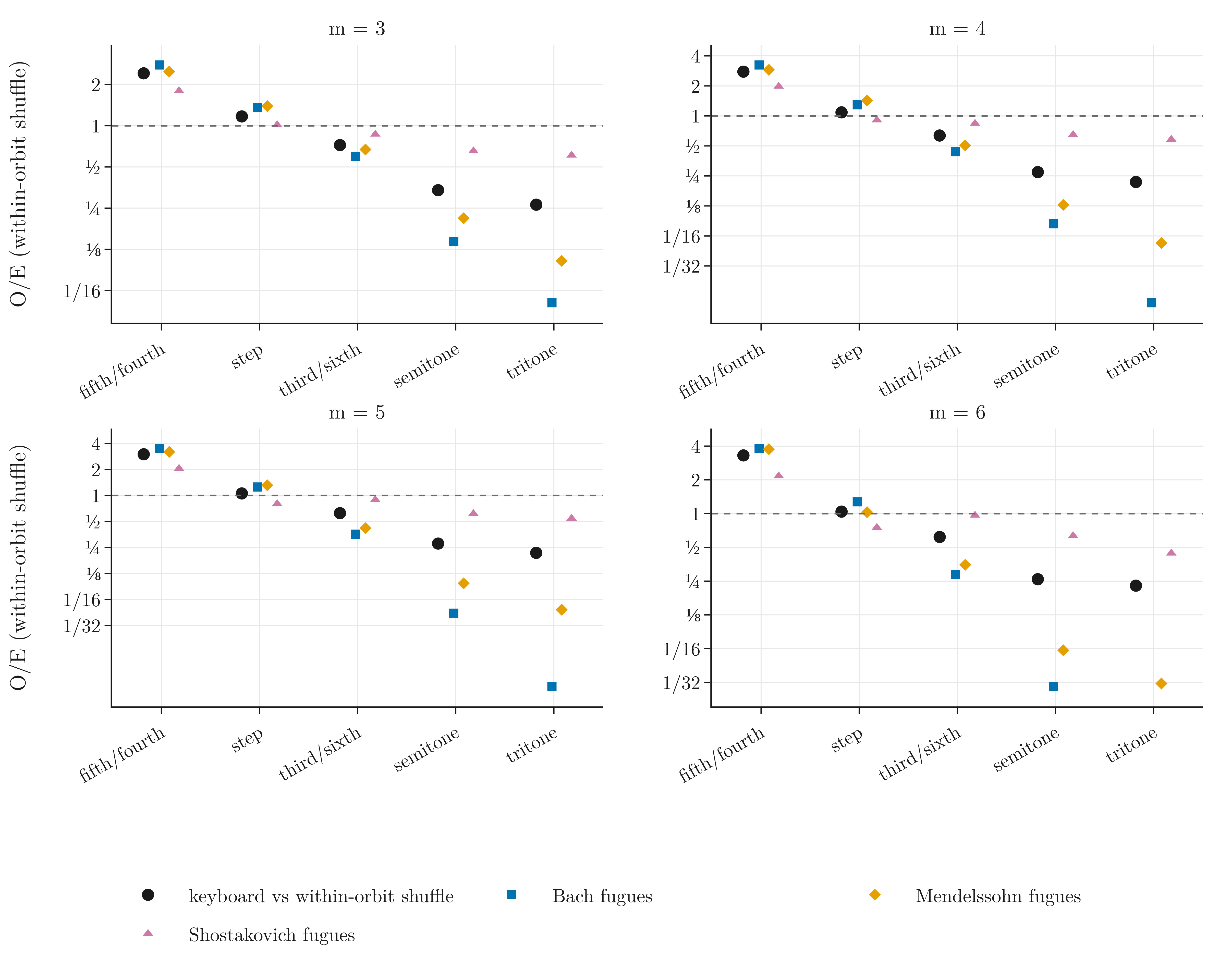}
\caption{Transposition intervals relative to the within-orbit shuffle, keyboard corpus and Prelude \&
Fugue fugues by composer, $m = 3$--$6$.}
\label{fig:si-rel-intervals}
\end{figure}

\begin{figure}[htbp]
\centering
\includegraphics[width=\linewidth]{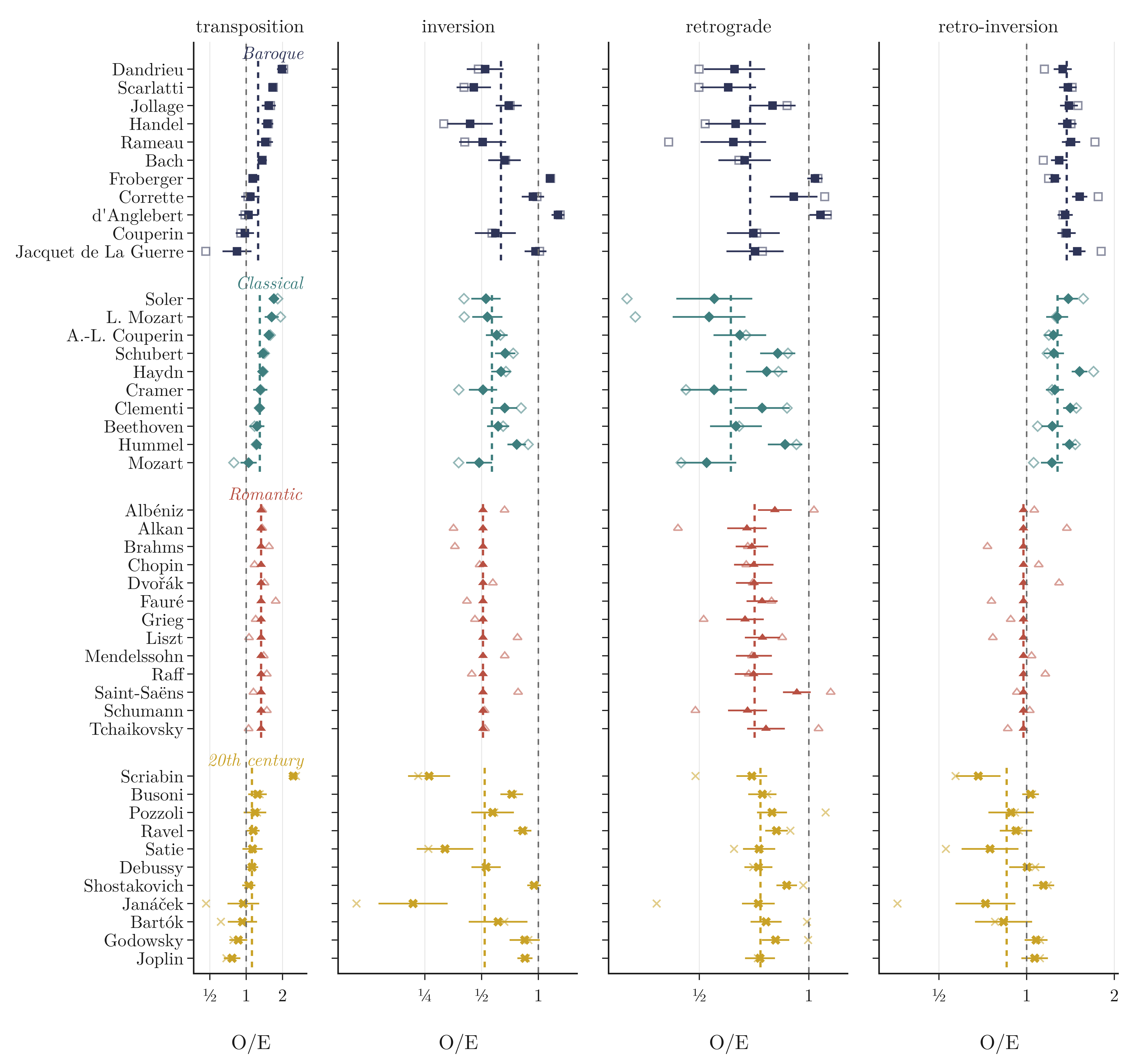}
\caption{Composer means of each relation type relative to the within-orbit shuffle, $m = 4$, shrunk
toward the period mean (filled, $\pm 1$ s.d.) with the unshrunk mean (open) and the period mean
(dotted). Composers are ordered by transposition in every panel.}
\label{fig:si-rel-composers}
\end{figure}

\begin{figure}[htbp]
\centering
\includegraphics[width=\linewidth]{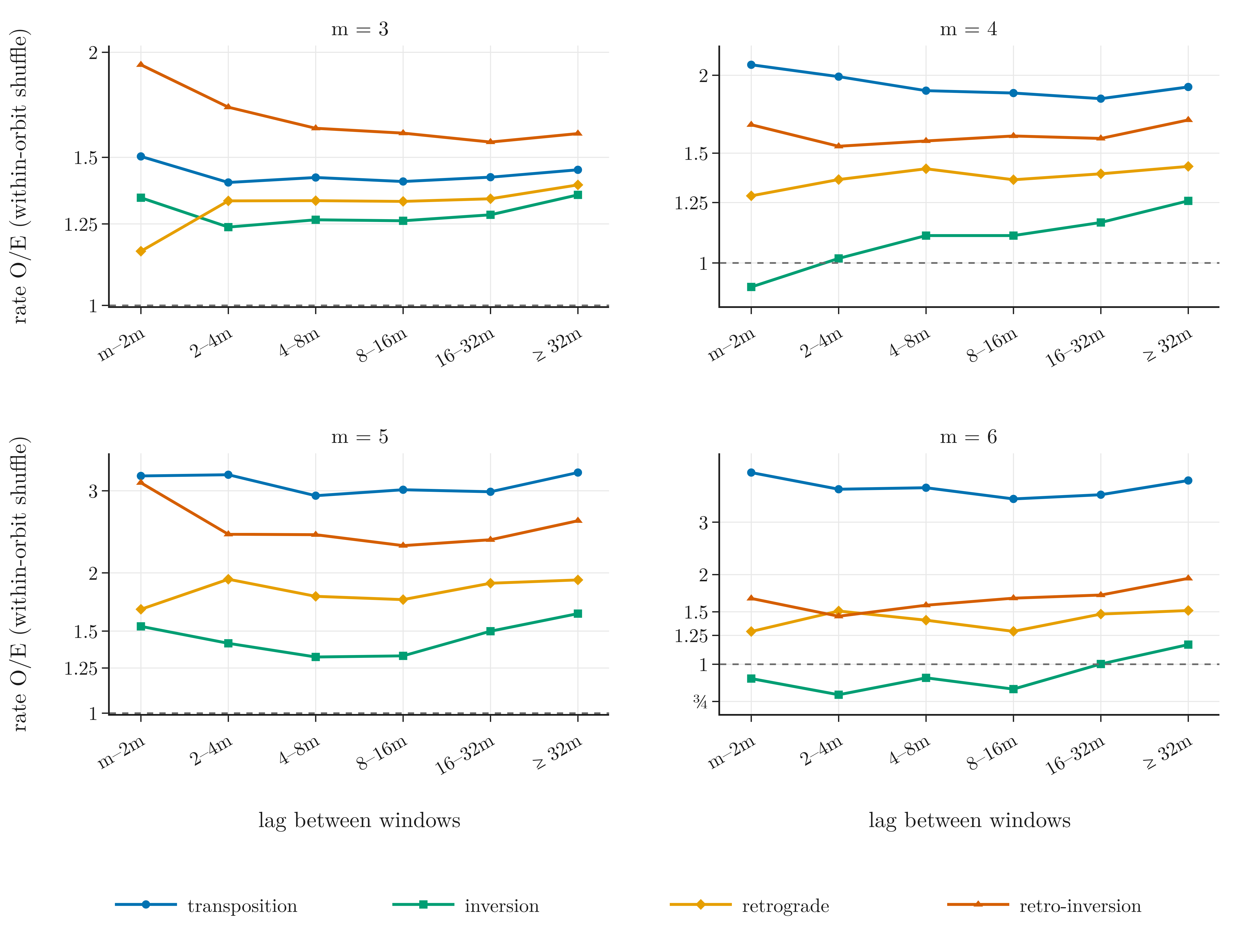}
\caption{Rate of each relation type per pair relative to the within-orbit shuffle, by the distance
$\Delta$ between the starts of the two windows, grouped in doubling ranges $[m, 2m), [2m, 4m), \dots,
\ge 32m$ (every integer $\Delta \ge m$ counted). Every rate exceeds one because rotated relations
are rarer than under the null, so types are compared within a distance range.}
\label{fig:si-rel-lags}
\end{figure}

\begin{figure}[htbp]
\centering
\includegraphics[width=\linewidth]{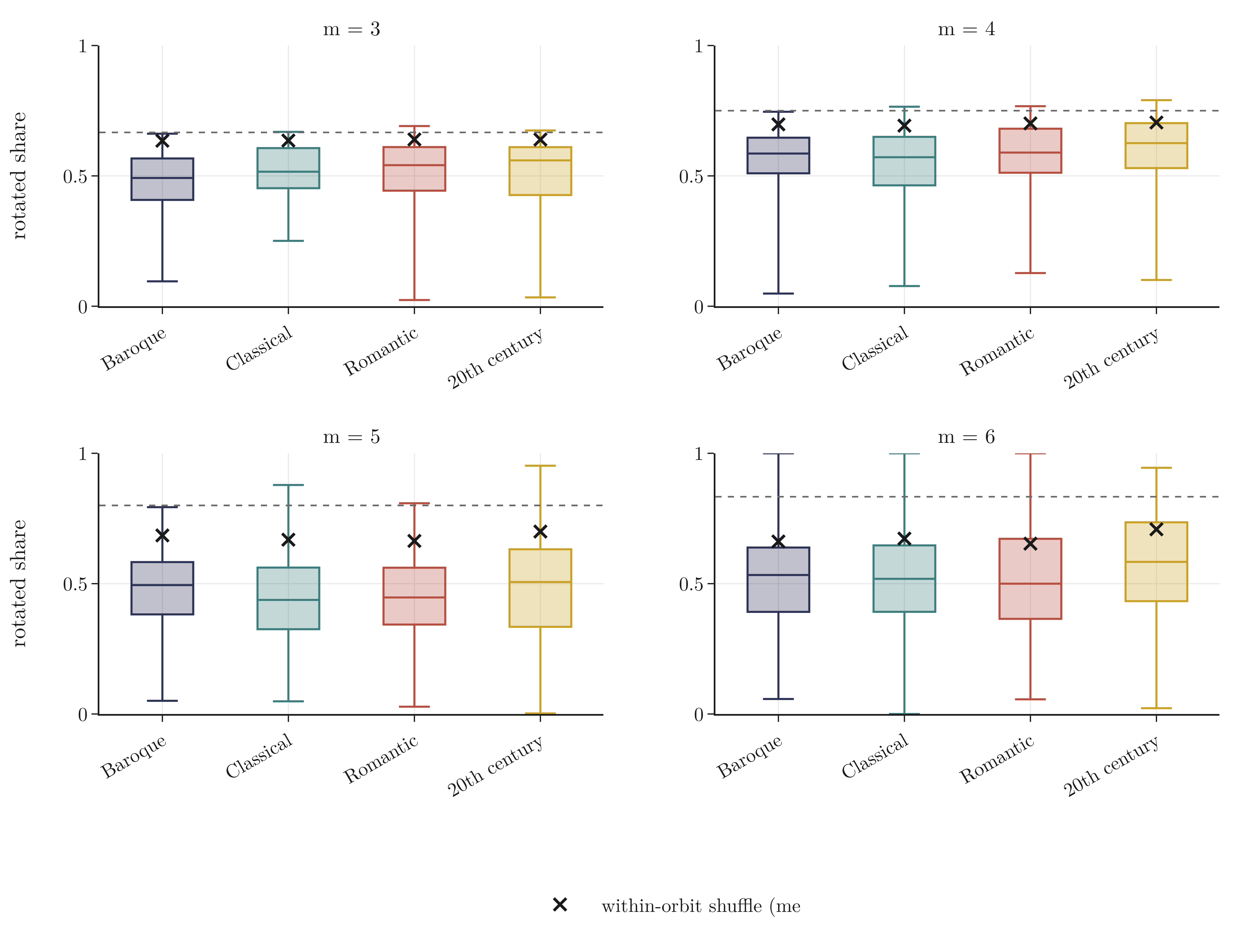}
\caption{Share of relations that include a cyclic shift, per piece by period (boxes), with
the median under the within-orbit shuffle (cross) and the uniform expectation $(m - 1)/m$ (dashed).}
\label{fig:si-rel-rotated}
\end{figure}

\begin{figure}[htbp]
\centering
\includegraphics[width=\linewidth]{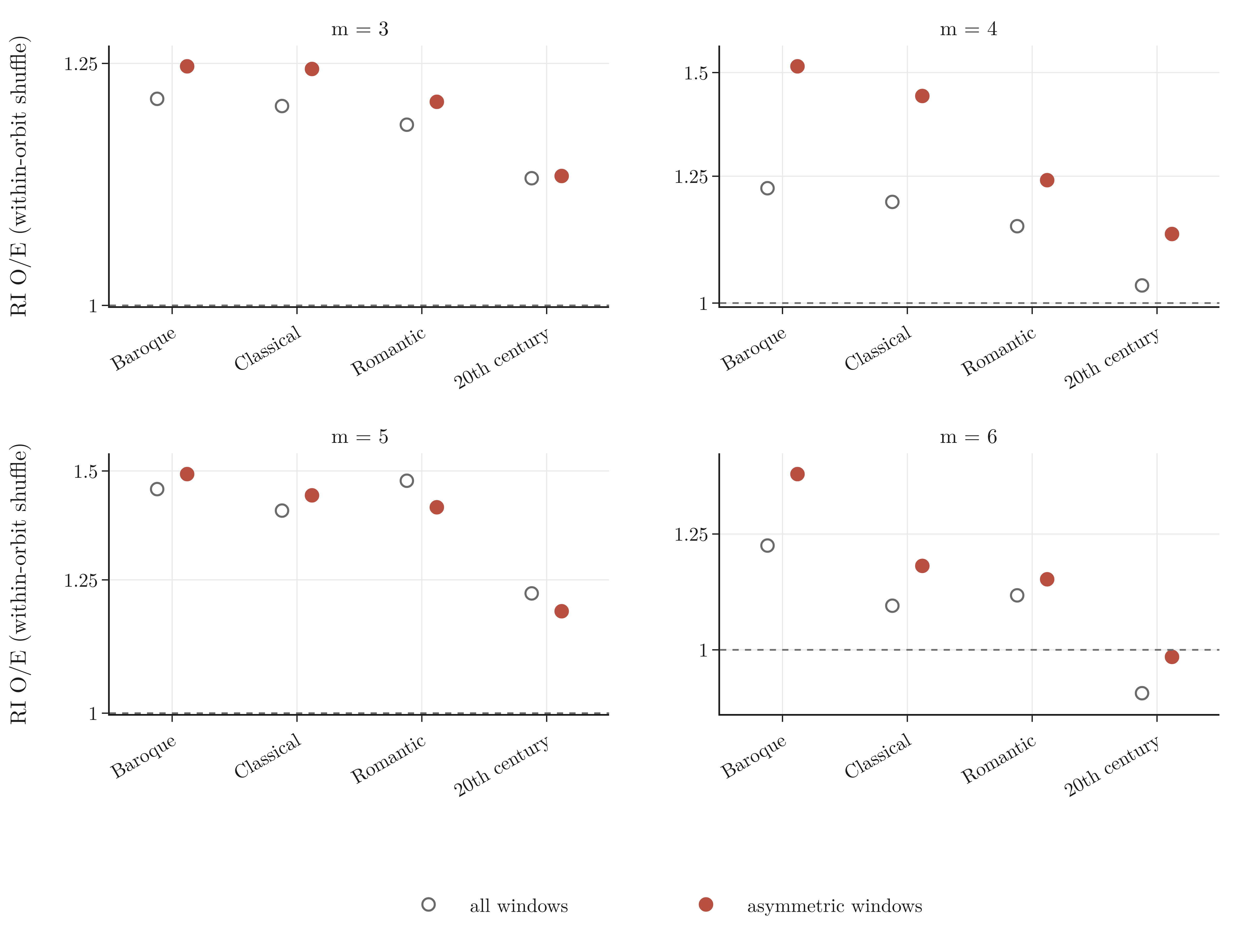}
\caption{Retrograde-inversion relative to the within-orbit shuffle over all windows (open) and over
asymmetric windows only (filled), by period. Over all windows, transpositions of
constant-step runs are also counted as retrograde-inversions, which dilutes the
comparison.}
\label{fig:si-rel-leakage}
\end{figure}

\begin{figure}[htbp]
\centering
\includegraphics[width=0.8\linewidth]{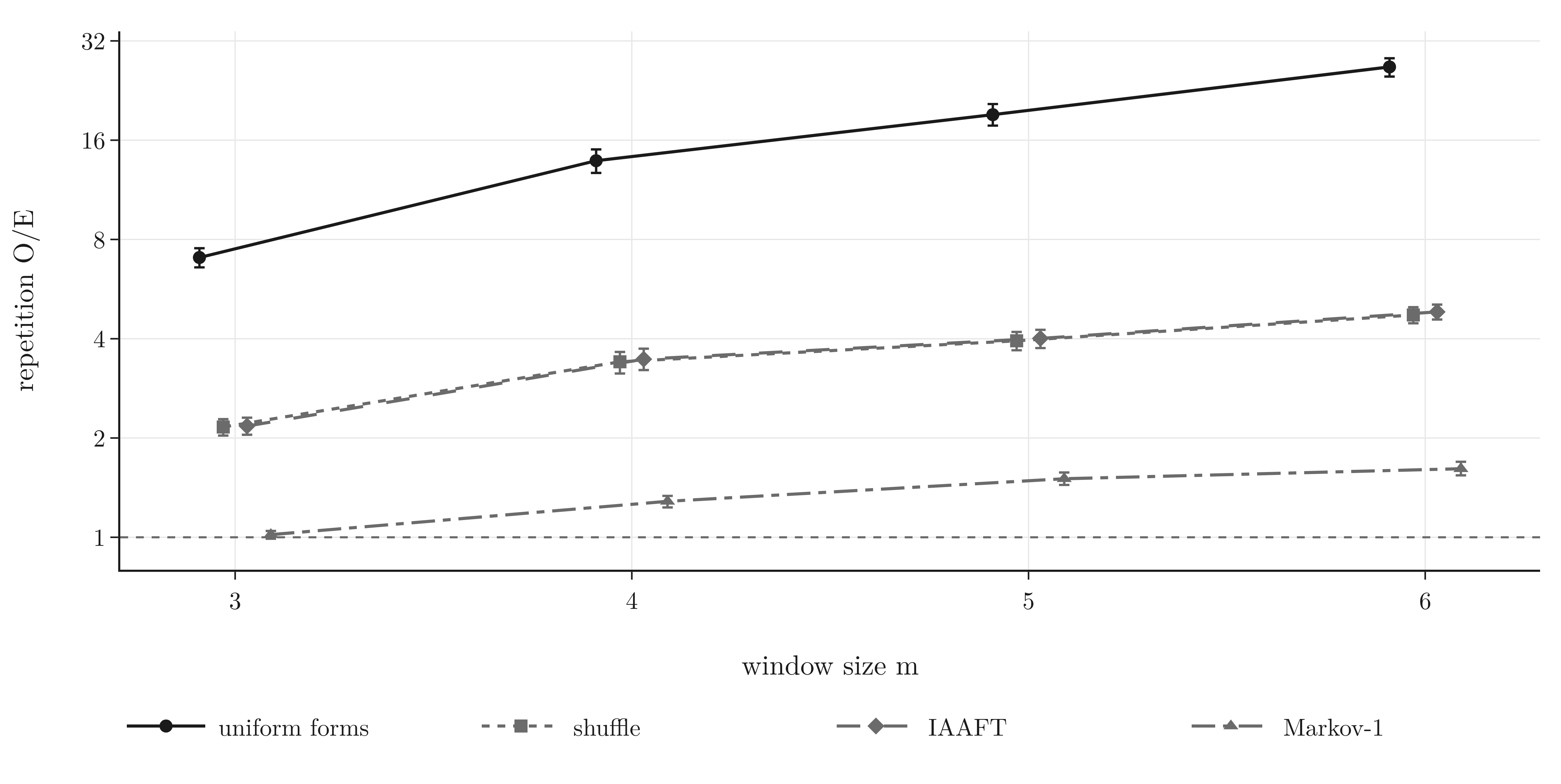}
\caption{Repetition relative to uniform forms and to the three surrogates, $m = 3$--$6$,
with $90\%$ composer-bootstrap intervals. The within-orbit shuffle keeps repetition by construction.}
\label{fig:si-rel-repetition}
\end{figure}

\textbf{Entropy of the operations.} Over the $48$ unrotated operations the chain rule gives
\begin{equation}
\label{eq:hop}
H_{\text{op}} = H_{\text{type}} + \sum_t P(t)\, H(k \mid t), \qquad H(k \mid \text{repetition}) = 0,
\end{equation}
with $H_{\text{type}} = H_{\text{rel}}$ over pairs of asymmetric windows and
$H_{\text{op}} \le \log_2 48$. Each $H(k \mid t)$ is invariant under transposing the whole line,
which shifts $k$ by $0$ or by twice the transposition. Every term is computed in each of the $199$
within-orbit shuffles and averaged, so observed and expected values rest on the same pairs
(Figure~\ref{fig:si-rel-entropy}). $H_{\text{type}}$ stays at the shuffle at every $m$, and
$H(k \mid t)$ lies below it in $70$--$88\%$ of pieces at $m = 3$--$5$. $H_{\text{op}}$ lies below
only at $m = 3$, since at larger $m$ transposition (eleven intervals) replaces repetition (one), so
it is not reported alone. The transposition-interval deficit, shuffle minus observed
$H(k \mid \text{T})$, falls across periods ($\rho = -0.38$, $-0.20$, and $-0.28$ at $m = 3$, $4$,
and $5$, composers permuted, $p \le 0.001$) and across the fugues from Bach to Shostakovich
(Jonckheere--Terpstra $p < 0.001$ at $m = 3$--$4$, $p = 0.008$ at $m = 5$, not at $m = 6$).

\begin{figure}[htbp]
\centering
\includegraphics[width=\linewidth]{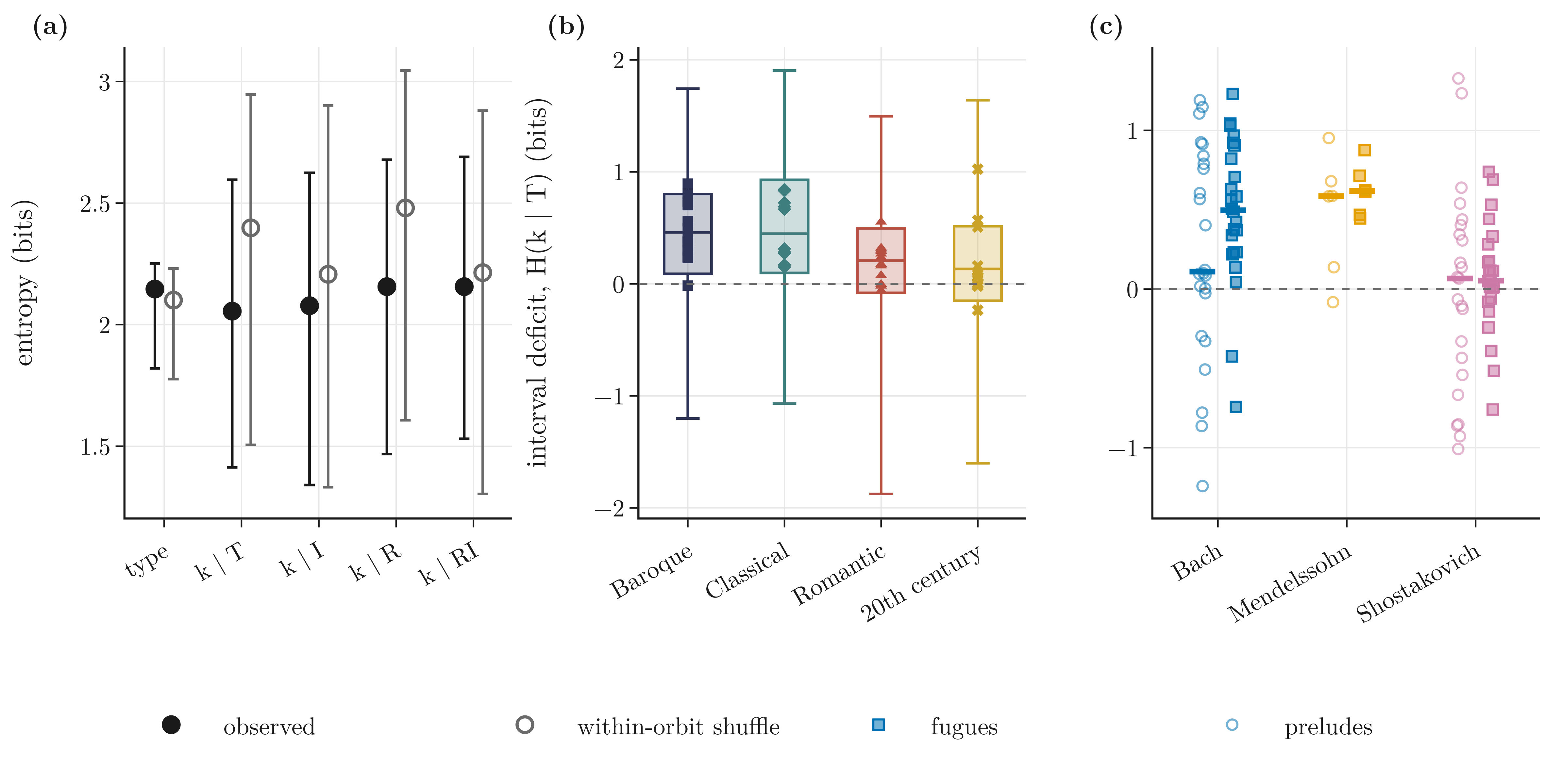}
\caption{Entropy of the operations relating windows, $m = 4$, pairs of asymmetric windows.
\textbf{(a)} Type entropy and interval entropy within each type, observed and under the within-orbit
shuffle (median and interquartile range over pieces). \textbf{(b)} Deficit of the transposition
intervals, shuffle minus observed $H(k \mid \text{T})$, by period (boxes) with composer medians (dots).
\textbf{(c)} The same deficit in the Prelude \& Fugue set, fugues (squares) and preludes (circles) by
composer, with medians as bars.}
\label{fig:si-rel-entropy}
\end{figure}

\subsection{Symmetric phrases and themes}

\begin{figure}[htbp]
\centering
\includegraphics[width=0.8\linewidth]{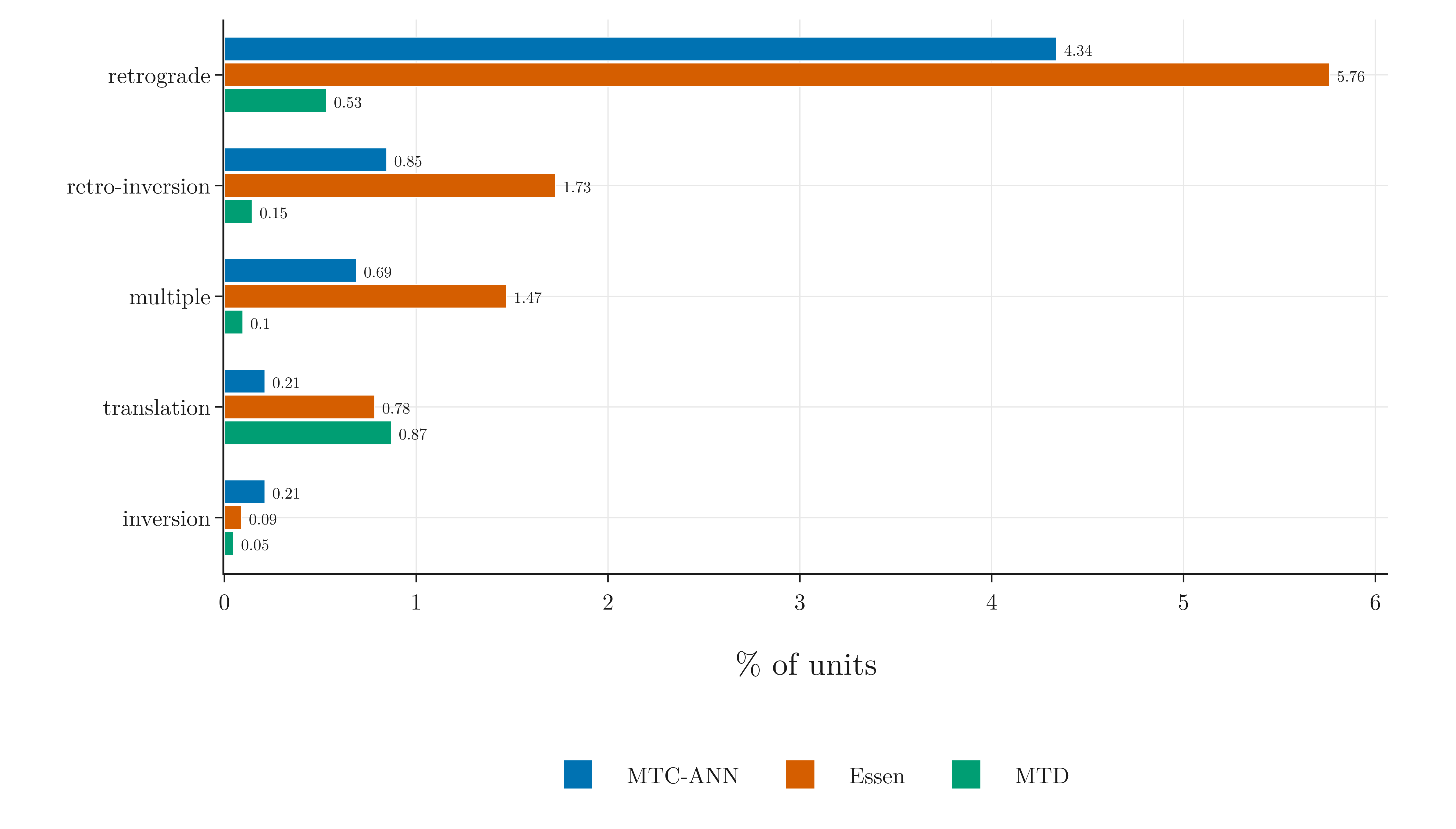}
\caption{Share of segments with each exact symmetry class, per census corpus.}
\label{fig:si-census-class}
\end{figure}

\begin{figure}[htbp]
\centering
\includegraphics[width=\linewidth]{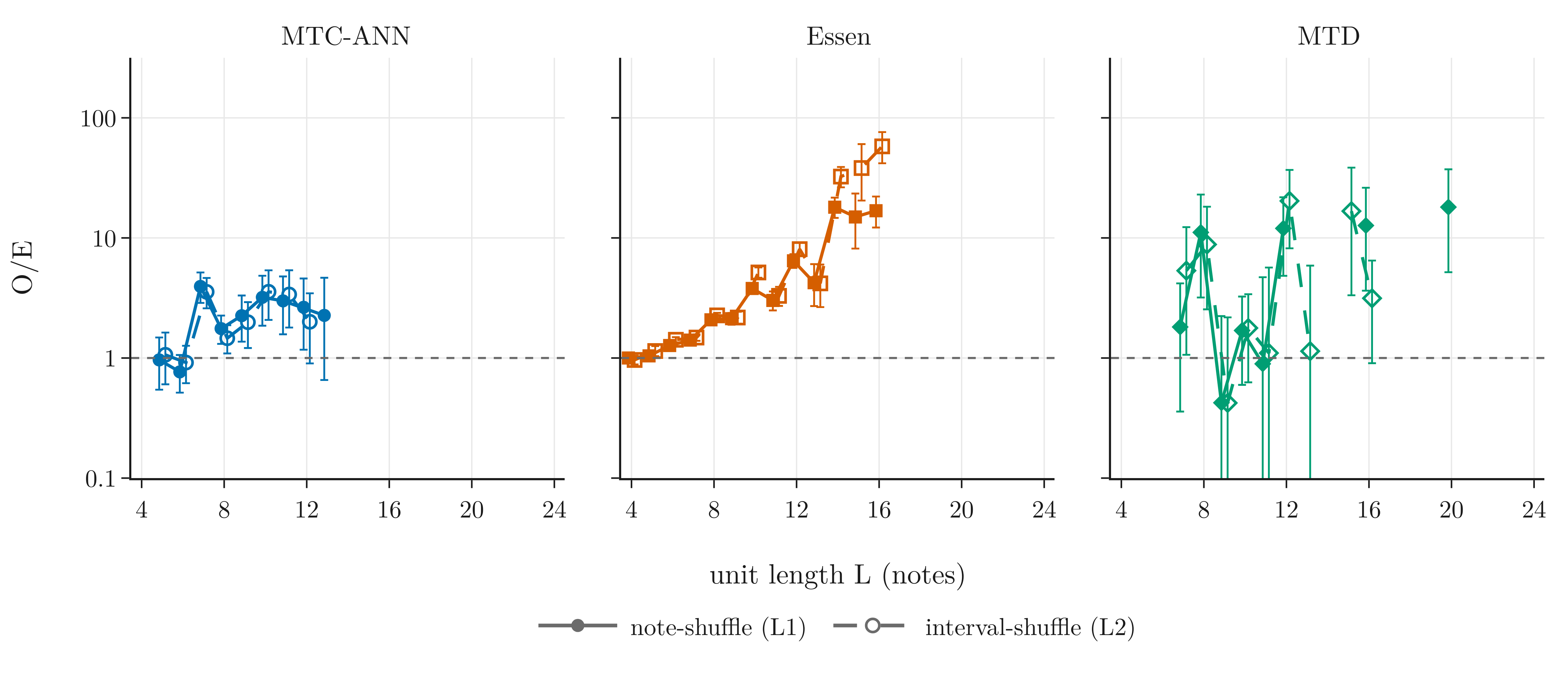}
\caption{Enrichment $O_L/E_L$ under the note-shuffle (filled) and interval-shuffle
(open) nulls, per exact length $L$ (log scale; dashed line: no enrichment).}
\label{fig:si-census-null}
\end{figure}

Class shares and the interval-shuffle check are in Figures~\ref{fig:si-census-class}
and~\ref{fig:si-census-null}. \textbf{Clustering.} The census null treats segments as independent trials with their own $p_i$,
but segments of one song or work are dependent and may repeat literally. Resampling whole songs
(Essen, MTC-ANN) or works (MTD) gives $O/E = 1.61$ $[1.34, 2.02]$ (MTC-ANN), $1.59$
$[1.54, 1.64]$ (Essen), and $5.08$ $[3.01, 9.74]$ (MTD). Keeping each literally repeated phrase
once per song changes $O/E$ by at most $12.9 \to 11.2$ (Essen, $L = 13$--$16$). Per-bin intervals
match the Garwood intervals for Essen and MTC-ANN but are much wider for long MTD themes
($L = 17$--$24$: $[16.6, 1136]$).

\subsection{Symmetry-subgroup richness}
\label{app:richness}

Exact stabilizer subgroups are counted as species and pieces as samples, as harmonic vocabulary
was counted on the same corpus \citeSI{SI:Serra-Peralta2021}. For $N$ pieces, $Y_i$ is the number
of pieces containing subgroup $i$, $S_{\mathrm{obs}}$ the number of subgroups observed, and $Q_k$
the number found in exactly $k$ pieces. Since observed richness grows with sampling effort,
periods are compared at a common effort $k$ by incidence rarefaction
\citeSI{SI:colwell2004,SI:gotelli2001},
\begin{equation}
\label{eq:rarefaction}
\tilde S(k) \;=\; S_{\mathrm{obs}} \;-\; \binom{N}{k}^{-1}
              \sum_{i} \binom{N - Y_i}{k},
\qquad k = 1, \dots, N,
\end{equation}
and total richness is estimated by Chao2 \citeSI{SI:chao1987,SI:chaochiu2016},
\begin{equation}
\label{eq:chao2}
\hat S_{\mathrm{Chao2}} \;=\;
\begin{cases}
S_{\mathrm{obs}} + \dfrac{Q_1^{2}}{2 Q_2}, & Q_2 > 0, \\[2.2ex]
S_{\mathrm{obs}} + \dfrac{Q_1 (Q_1 - 1)}{2}, & Q_2 = 0.
\end{cases}
\end{equation}
Rarefaction bands are the $2.5$--$97.5$th percentiles over $200$ random piece orders (without
replacement). They vanish at $k = N$ and are not intervals for the total. Chao2 intervals are
log-normal, with $\hat f_0 = \hat S_{\mathrm{Chao2}} - S_{\mathrm{obs}}$,
\begin{equation}
\label{eq:chao2ci}
\Big[\, S_{\mathrm{obs}} + \hat f_0 / C, \;\;
        S_{\mathrm{obs}} + \hat f_0\, C \,\Big],
\qquad
C = \exp\!\Big(z_{1-\alpha/2}
      \sqrt{\ln\big(1 + \widehat{\mathrm{var}}(\hat S)/\hat f_0^{2}\big)}\Big),
\end{equation}
and percentile bootstrap over $B = 1999$ resamples of pieces with replacement
\citeSI{SI:davison1997,SI:efron1993,SI:manly2007}, which agree except at small $N$, where the
analytic interval is preferred. Both estimators are the $q = 0$ case of Chao et al.\
\citeSI{SI:chao2014}. Six-class curves saturate at $m = 3$, so curves use exact subgroups,
$m = 3$--$6$ (Figure~\ref{fig:genome-periods}b). Pieces are treated as independent, so pieces by
one composer are not kept together and the comparison between periods is descriptive.

\begingroup
\makeatletter
\renewcommand\@biblabel[1]{[S#1]}
\patchcmd{\@bibitem}{\the\value{\@listctr}}{S\the\value{\@listctr}}{}{\typeout{SI bibitem patch FAILED}}
\makeatother
\bibliographystyleSI{unsrt}
\bibliographySI{references_SI}
\endgroup